\documentclass[12pt]{amsart}
\usepackage{txfonts}
\usepackage{amssymb}
\usepackage[mathscr]{eucal}
\usepackage{amsmath}
\usepackage{amscd}
\usepackage[dvips]{color}
\usepackage{multicol}
\usepackage[all]{xy}
\usepackage{graphicx}
\usepackage{color}
\usepackage{colordvi}
\usepackage{xspace}
\usepackage[colorlinks,final,backref=page,hyperindex,hypertex]{hyperref}
\usepackage[normalem]{ulem} 
\usepackage{tikz}
\usepackage[active]{srcltx}


\topmargin -.8cm \textheight 22.8cm \oddsidemargin 0cm \evensidemargin -0cm \textwidth 16.3cm

\begin{document}

\newcommand{\nc}{\newcommand}
\newcommand{\delete}[1]{}

\nc{\mlabel}[1]{\label{#1}}  
\nc{\mcite}[1]{\cite{#1}}  
\nc{\mref}[1]{\ref{#1}}  
\nc{\mbibitem}[1]{\bibitem{#1}} 

\newtheorem{theorem}{Theorem}[section]
\newtheorem{thm}[theorem]{Theorem}
\newtheorem{prop}[theorem]{Proposition}
\newtheorem{lemma}[theorem]{Lemma}
\newtheorem{coro}[theorem]{Corollary}
\newtheorem{cor}[theorem]{Corollary}
\newtheorem{prop-def}{Proposition-Definition}[section]
\newtheorem{claim}{Claim}[section]
\newtheorem{propprop}{Proposed Proposition}[section]
\newtheorem{conjecture}[theorem]{Conjecture}
\newtheorem{assumption}{Assumption}
\newtheorem{condition}[theorem]{Assumption}
\newtheorem{question}[theorem]{Question}
\theoremstyle{definition}
\newtheorem{defn}[theorem]{Definition}
\newtheorem{exam}[theorem]{Example}
\newtheorem{remark}[theorem]{Remark}
\newtheorem{ex}[theorem]{Example}
\newtheorem{conv}[theorem]{Convention}
\renewcommand{\labelenumi}{{\rm(\alph{enumi})}}
\renewcommand{\theenumi}{\alph{enumi}}
\nc{\g}{\mathfrak{g}}
\nc{\U}{\mathfrak{U}}
\def\shu{\joinrel{\!\scriptstyle\amalg\hskip -3.1pt\amalg}\,}
\nc{\Ima}{\operatorname{Im}}            
\nc{\Dom}{\operatorname{Dom}}      
\nc{\Diff}{\operatorname{Diff}}           
\nc{\End}{\operatorname{End}}        
\nc{\Id}{\operatorname{Id}}                 
\nc{\Isom}{\operatorname{Isom}}      
\nc{\Ker}{\operatorname{Ker}}           
\nc{\Lin}{\operatorname{Lin}}             
\nc{\Res}{\operatorname{Res}}         
\nc{\spec}{\operatorname{sp}}           
\nc{\supp}{\operatorname{supp}}      
\nc{\Tr}{\operatorname{Tr}}                 
\nc{\Vol}{\operatorname{Vol}}            
\nc{\sign}{\operatorname{sign}}         
\nc{\id}{\operatorname{id}}
\nc{\lin}{\operatorname{lin}}
\nc{\I}{J}
\nc{\cala}{\mathcal{A}}

\nc{\ot}{\otimes}
\nc{\bfk}{\mathbf{k}}
\nc{\wvec}[2]{{\scriptsize{\big [ \!\!
    \begin{array}{c} #1 \\ #2 \end{array} \!\! \big ]}}}

\nc{\zb}[1]{\textcolor{blue}{Bin: #1}}
\nc{\li}[1]{\textcolor{red}{#1}}
\nc{\sy}[1]{\textcolor{purple}{  #1}}

\newcommand{\Z}{\mathbb{Z}}
\newcommand{\ZZ}{\mathbb{Z}}
\newcommand{\Q}{\mathbb{Q}}               
\newcommand{\QQ}{\mathbb{Q}}               
\newcommand{\R}{\mathbb{R}}
\newcommand{\K}{\mathbb{K}}               
\newcommand{\RR}{\mathbb{R}}               
\newcommand{\coof}{L}           
\newcommand{\coef}{\QQ}
\newcommand{\p}{\partial}         
\nc{\cone}[1]{\langle #1\rangle}
\nc{\cl}{c}                 
\nc{\op}{o}                 
\nc{\ccone}[1]{\langle #1\rangle^\cl}
\nc{\ocone}[1]{\langle #1\rangle^\op}
\nc{\cc}{\mathfrak{C}}      
\nc{\dcc}{\mathfrak{DC}}    
\nc{\oc}{\mathcal{C}^o}      
\nc{\dsmc}{\dcc }  
\nc{\csup}{{^\ast}}
\nc{\bs}{\check{S}\,}

 \nc {\linf}{{\rm lin} (F)^\perp}
\nc{\dirlim}{\displaystyle{\lim_{\longrightarrow}}\,}
\nc{\coalg}{\mathbf{C}}
\nc{\barot}{{\otimes}}

\newcommand{\one}{\mbox{$1 \hspace{-1.0mm} {\bf l}$}}
\newcommand{\A}{\mathcal{A}}              
\newcommand{\Abb}{\mathbb{A}}          
\renewcommand{\a}{\alpha}                    
\renewcommand{\b}{\beta}                       

\newcommand{\B}{\mathcal{B}}              
\newcommand{\C}{\mathbb{C}}
 \newcommand{\calm}{{\mathcal M}}

\newcommand{\CC}{\mathcal{C}}           
\newcommand{\CR}{\mathcal{R}}           
\newcommand{\D}{\mathbb{D}}               
\newcommand{\del}{\partial}                    
\newcommand{\DD}{\mathcal{D}}           
\newcommand{\Dslash}{{D\mkern-11.5mu/\,}} 
\newcommand{\e}{\varepsilon}            
\newcommand{\F}{\mathcal{F}}                
\newcommand{\Ga}{\Gamma}                  
\newcommand{\ga}{\gamma}                   
\renewcommand{\H}{\mathcal{H}}           
\newcommand{\half}{{\mathchoice{\thalf}{\thalf}{\shalf}{\shalf}}}
\newcommand{\hideqed}{\renewcommand{\qed}{}} 
 \renewcommand{\L}{\mathcal{L}}          
\newcommand{\la}{\lambda}                   
\newcommand{\<}{\langle}
\renewcommand{\>}{\rangle}
\newcommand{\M}{\mathcal{M}}            
\newcommand{\Mop}{\star}                     
\newcommand{\N}{\mathbb{N}}             
\newcommand{\norm}[1]{\left\lVert#1\right\rVert}    
\newcommand{\norminf}[1]{\left\lVert#1\right\rVert_\infty} 
\newcommand{\om}{\omega}                 
\newcommand{\Om}{\Omega}                
\newcommand{\ol}{\\widetilde}                  
\newcommand{\OO}{\mathcal{O}}          
\newcommand{\ovc}[1]{\overset{\circ}{#1}}
\newcommand{\ox}{\otimes}                    
\newcommand{\pa}{\partial}
\newcommand{\piso}[1]{\lfloor#1\rfloor} 

\newcommand{\rad}{{\mathbf r}}
\newcommand{\sepword}[1]{\quad\mbox{#1}\quad} 
\newcommand{\set}[1]{\{\,#1\,\}}               
\newcommand{\shalf}{{\scriptstyle\frac{1}{2}}} 
\newcommand{\slim}{\mathop{\mathrm{s\mbox{-}lim}}} 
\renewcommand{\SS}{\mathcal{S}}        
\newcommand{\Sp}{{\rm Sp}}
\newcommand{\sg}{\sigma}                              
\newcommand{\T}{\mathbb{T}}                
\newcommand{\tG}{\widetilde{G}}           
\newcommand{\thalf}{\tfrac{1}{2}}            
\newcommand{\Th}{\Theta}
\renewcommand{\th}{\theta}
\newcommand{\tri}{\Delta}                        
\newcommand{\Trw}{\Tr_\omega}           
\newcommand{\UU}{\mathcal{U}}              
\newcommand{\Afr}{\mathfrak{A}}           
\newcommand{\vf}{\varphi}                       
\newcommand{\x}{\times}                          
\newcommand{\wh}{\widehat}                  
\newcommand{\wt}{\widetilde}                 
\newcommand{\ul}[1]{\underline{#1}}             
\renewcommand{\.}{\cdot}                          
\renewcommand{\:}{\colon}                       
\newcommand{\comment}[1]{\textsf{#1}}
\renewcommand{\g}{\mathfrak{g}}
\renewcommand{\U}{\mathfrak{U}}
\def\shu{\joinrel{\!\scriptstyle\amalg\hskip -3.1pt\amalg}\,}

\nc{\calb}{\mathcal{B}}
\nc{\calc}{\mathcal{C}}
\nc{\calh}{\mathcal{H}}
\nc{\deff}{K}
\nc{\cali}{\mathcal{I}}
\nc{\calp}{\mathcal{P}}
\nc{\cals}{\mathcal{S}}
\nc{\vep}{\varepsilon}
\nc {\ltcone}{lattice cone}
\nc {\lC}{(C, \Lambda _C)}

\title[Counting an infinite number of points and renormalization]{Counting  an infinite number of points: a testing ground for renormalization methods}

\author{Li Guo}
\address{Department of Mathematics and Computer Science,
         Rutgers University,
         Newark, NJ 07102, USA}
\email{liguo@rutgers.edu}

\author{Sylvie Paycha}
\address{Institute of Mathematics,
University of Potsdam,
Am Neuen Palais 10,
D-14469 Potsdam, Germany}
\email{paycha@math.uni-potsdam.de}

\author{Bin Zhang}
\address{Yangtze Center of Mathematics, Department of Mathematics,
Sichuan University, Chengdu, 610064, P. R. China}
\email{zhangbin@scu.edu.cn}

\date{\today}

\begin{abstract}  This is a leisurely introductory account addressed to  non-experts  and  based on previous work by the authors,  on how methods borrowed from physics can be used to "count" an infinite number of points.
We begin with the classical case of counting integer points on the non-negative real axis and the classical Euler-Maclaurin formula. As an intermediate stage, we count integer points on product cones where the roles played by the coalgebra and the algebraic Birkhoff factorization can be appreciated in a relatively simple setting. We then consider the general case of (lattice)  cones   for which we introduce a conilpotent coalgebra of cones, with applications to renormalization of conical zeta values. When evaluated at zero arguments conical zeta functions indeed "count" integer points on cones.
\end{abstract}

\subjclass[2010]{11M32, 11H06, 16H15, 52C07, 52B20, 65B15, 81T15}

\keywords{cones, coalgebra, renormalization, Birkhoff decomposition, Euler-Maclaurin formula, meromorphic functions}

\maketitle

\vspace{-1cm}

\tableofcontents

\allowdisplaybreaks
\setcounter{section}{0}
\section*{Introduction}
 "Counting" an infinite number of points   might seem pointless and a lost cause; it has nevertheless been the concern of many a mathematician  as far back as Leonhardt Euler 
and Bernhardt Riemann
and relates to   renormalization issues in quantum field theory.

 We want to "count" lattice points on rational polyhedral convex cones, starting from the one dimensional cone $\R_+$ with lattice points  given by the positive integers studied in the first section. Evaluating the Riemann zeta function  at zero provides one way of "counting" the positive integers. It indeed assigns a finite value $\zeta(0)=-\frac{1}{2}$ to the ill-defined sum "$\sum_{n=1}^\infty n^0$"  by means of an analytic continuation $\zeta(z)$ of the regularized sum $\sum_{n=1}^\infty n^{-z}$. Alternatively the "number" $ \frac{1}{2}= 1-\frac{1}{2}$ of non-negative integers  can be derived using an alternative approximation $S(\e)  = \sum_{n=0}^\infty e^{-\e n}$ by an exponential sum. Its analytic extension (denoted by the same symbol $S$)       presents a simple pole at $\e=0$ with residue $1$ so that    $S(\e)=\frac{1}{\e}+S_+(\e)$   where $ S_+ $  is holomorphic at zero. Coincidentally, the  "polar part" $\frac{1}{\e}$  equals the integral $I(\e)= \int_{0}^\infty e^{-\e x} dx
 $   leading to the Euler-Maclaurin formula $S = I + \mu $ which relates the sum and the integral of the map $x\mapsto e^{-\e x}$ by means of the interpolator $\mu=S_+$. Using the terminology borrowed from physicists, we refer to the decomposition
    $S(\e)=\frac{1}{\e}+S_+(\e)$ into a "polar part" $\frac{1}{\e}$  and a holomorphic part $S_+(\e)$  as  the minimal subtraction scheme applied to $S$. For this particular function, it coincides with the Euler-Maclaurin formula and we have $S_+(0)=\mu(0)= \zeta(0)+1=\frac{1}{2}$.

 The coincidence in the case of the discrete exponential sum,  between the minimal subtraction scheme and the Euler-Maclaurin formula,   carries  out to higher dimensions. The second section is dedicated to   "counting" the lattice points $\Z_{\geq 0}^k$ of a (closed) product cone $\R_{\geq 0}^k$ of dimension $k\in \N$. One expects the "number" of points of $\Z_{\geq 0}^k$ to be the $k$-th power of the "number" of points of $\Z_{\geq 0}$ and this is indeed the case provided one "counts  carefully". By this we mean that one should not naively evaluate the "holomorphic part" of the $k$-th power $S^k(\e)$  at zero of the exponential sum but instead take the $k$-th power  $S_+^k(0)$ of the holomorphic part  $S_+$ evaluated at zero, which is a straightforward procedure in  the rather trivial case of product cones. However there  is a general algebraic construction which derives $S_+^k$ from $S^k$, known as the algebraic Birkhoff factorization   that can be viewed as a generalization to higher dimensions of the minimal subtraction scheme mentioned above.  It relies on a coproduct  on (product)  cones built from a complement map   described in Section 3, which separates a face of the cone from the remaining faces. When applied to the multivariable  exponential sum $\widetilde S_k:(\e_1,\cdots,\e_k)\mapsto \prod_{i=1}^kS(\e_i)$    on the product cone $\R_{\geq 0}^k$,  the general algebraic Birkhoff factorization   on  coalgebras  described in Section 4    gives   $(\e_1,\cdots,\e_k)\mapsto \prod_{i=1}^kS_+(\e_i)$  as the "renormalized holomorphic" part of the map  $\widetilde S_k$. This algebraic Birkhoff factorization   can also be interpreted as an Euler-Maclaurin formula for it factorizes the sum as a (convolution) product of  integrals   and  interpolators on product cones.

We close this presentation by briefly mentioning the corresponding result on  rational polyhedral (lattice) cones, namely that the Euler-Maclaurin formula (first derived in \cite{BV}, see also \cite{B})  for the exponential sum  is given by  its algebraic Birkhoff factorization, leaving out   the precise statement   for which we refer to reader to \cite{GPZ3}.  Renormalized  conical zeta values associated to a cone $C$ correspond to the Taylor coefficients of the "holomorphic part" $S_+(C)$ of the multivariable exponential sum $S(C)$ on the cone. For Chen cones $x_k\leq\cdots\leq x_1$ they   yield renormalized multiple zeta values; in Section  5 we illustrate our approach with the computation of renormalized multiple zeta values  with 2 and 3 arguments.
The algebraic Birkhoff factorization  on cones, seen as a general device  which renormalizes any conical zeta value  at non-positive integers,  therefore yields   a geometric approach to renormalize multiple zeta values at non-positive integers. This geometric approach  contrasts with other approaches such as  \cite{MP} and \cite{GZ}  to the renormalization of multiple zeta values at   non-positive  arguments, where the algebraic Birkhoff factorization   is carried out on the summands (functions $(x_1,\cdots, x_k)\mapsto x_1^{-s_1}\cdots x_1^{-s_k}$) rather than on the domain (the cones) of summation as in our present construction or \cite{Sa} where a purely analytic renormalization method is implemented, which does not use algebraic Birkhoff factorization.

To conclude,  "counting" lattice points on cones which might a priori seem like a very specific issue, actually brings together i) renormalization methods \`a la Connes and Kreimer \cite{CK} borrowed from quantum field theory in the form of algebraic Birkhoff factorization, ii)  the Euler-Maclaurin formula  on cones and hence on polytopes used to study the geometry of toric varieties, iii) number theory with the conical zeta values (introduced in \cite{GPZ2}) that generalize multiple zeta values~\cite{Ho,Za}, and  which arise in our context as the Taylor coefficients of the interpolator in the Euler-Maclaurin formula.  We hope that this presentation which does not claim to be neither exhaustive nor new  since it relies on previous work by the authors, will act as an incentive for the lay reader to get further acquainted with renormalization methods.

\section{Counting integers}

We want to count the non-negative integer points i.e. to evaluate the ill-defined sum "$1+1+\cdots +1+\cdots=\sum_{n=0}^\infty n^0$" and more generally  the no better defined  sum $\sum_{n=0}^\infty n^k$ for any non-negative integer $k$.

\subsection{Approximated sums over integers} We first approximate these ill-defined sums; there are at  least three ways to do so \footnote{We refer the reader to \cite{P} for a more detailed description of these various regularization methods.}:
\begin{enumerate}
\item The {\bf cut-off regularization } only considers a finite number of terms of   the sum. For $N\in \N$ we set $S_k(N):= \sum_{n=0}^N n^k$;
\item The {\bf heat-kernel type regularization }  approximates the summand by an exponential expression. For positive $\e$ we set
\begin{equation}
\label{eq:1dimsum}S(\e):=   \sum_{n=0}^\infty e^{-\e n}
\end{equation}  and $S_k(\e):=   \sum_{n=0}^\infty n^k e^{-\e n}= (-1)^k\partial_k S(\e);$
\item The {\bf zeta-function type regularization} approximates the summand by a complex power. For a complex number  $z$ whose real part is larger than $1,$ the expression
$$\tilde S(z):=   \sum_{n=1}^\infty n^{-z}=: \zeta(z) $$ called the {$\zeta$-function} converges and
 $\tilde S_k(z):=   \sum_{n=1}^\infty n^{k-z} =  \zeta(z-k)  $  converges for any complex number  $z$ whose real part is larger than $k+1$.
\end{enumerate}
 The  sums $S$ and $\tilde S$ relate via the Mellin transform; for any positive number $\lambda$ the map $f_\lambda:\e\mapsto e^{-\lambda \e} $ defines a Schwartz function whose  {\bf Mellin transform} reads $${\mathcal M}\left(f_\lambda\right)(z):=\frac{1}{\Gamma(z)}\, \int_0^\infty \e^{z-1}f_\lambda(\e)\, d\e=\lambda^{-z}.$$
\\
$\tilde S(z)= \sum_{n=1}^\infty {\mathcal M}\left(f_n\right)(z)={\mathcal M}\left(S-1\right)(z)$. This extends to an identity of meromorphic functions
  $$\tilde S = {\mathcal M}\left(S-1\right) $$
with simple poles at integers smaller or equal $1$. It turns out that the residue at $z=1$ is one and zero elsewhere.

The sum $S(\e)=\frac{1}{1-e^{-\e}}$ can be expressed in terms of the {\bf Todd function}\footnote{There are two variants of the Todd function; in topology it is  defined as the map   $\tau:\e\mapsto\frac{\e}{ 1-e^{-\e}}$, an alternative definition one finds in the literature is $\e\mapsto \tau(-\e)$, which we opt for in these notes. }.
  \begin{equation}\label{eq:Todd1}
      {\rm Td}(\e):=\frac{\e}{ e^{\e}-1}
     \end{equation}  as $$S(\e)= \frac{{\rm Td}(-\e)}{\e}.$$
The {\bf Todd function} is the  exponential generating function  for the  {\bf Bernoulli numbers}\footnote{They were discovered by Jakob Bernoulli  and independently by a Japanese mathematician Seki K\"owa, both of whose  discoveries were posthumously published (in 1712 for Seki K\"owa, in his  work Katsuyo Sampo, in 1713 for Bernoulli,  in his Ars Conjectandi).} that  correspond to the  Taylor coefficients\footnote{They also arise  in the Taylor series expansions of the tangent and hyperbolic tangent functions.}
     \begin{equation} \label{eq:Bernoullinumb}
     {\rm Td}(\e)=  \sum_{n=0}^\infty B_n \frac{\e^n}{n!}.
     \end{equation}
We have $$ {\rm Td}(\e)=\frac{\e}{ e^{\e}-1}=\frac{\e}{ \e+\frac{\e^2}{2}+ o(\e^2)}=\frac{1}{ 1+\frac{\e }{2}+ o(\e)}= 1- \frac{\e}{2}+ o(\e)$$
so $B_0=1; B_1= -\frac{1}{2}$. Since $\frac{\e}{e^\e-1}+\frac{\e}{2}= \frac{\e}{2}
\frac{e^{\frac{\e}{2}}+
     e^{-\frac{ \e}{2}}}{e^{\frac{\e}{2}}- e^{-\frac{\e}{2}}}$ is an even function,
   $B_{2k+1}=0$ for any positive integer $k$.

   Consequently, for any positive integer $K$ we have
   \begin{equation} \label{eq:ToddBernoulli}   {\rm Td}(\e)=  1- \frac{\e}{2}+\sum_{k=1}^K \frac{B_{2k }}{(2k )!}\e^{2k} + o(\e^{2K})\end{equation}
   and
      \begin{equation} \label{eq:SBernoulli}S(\e)= \frac{{\rm Td}(-\e)}{\e}= \frac{1}{\e}+ \frac{1}{2}+\sum_{k=1}^K \frac{B_{2k }}{(2k )!}\e^{2k-1} + o(\e^{2K}).\end{equation}
  \subsection{The one-dimensional Euler-Maclaurin formula }
     As a consequence of formula (\ref{eq:SBernoulli}), the discrete sum $S(\e):=\sum_{k=0}^\infty e^{-\e k}= \frac{1}{1-e^{-\e}}$ for positive $\e$  relates to the integral
 \begin{equation}
 \label{eq:1dimintegral}I(\e):=\int_{0}^\infty  e^{-\e x}\, dx= \frac{1}{\e}
  \end{equation} by means of the interpolator
    $$\mu(\e):= S(\e)- I(\e)= S(\e)-\frac{1}{\e}=\frac{1}{2} +\sum_{k=1}^K \frac{B_{2k }}{(2k )!}\e^{2k-1} + o(\e^{2K})\quad\text{ for all } K\in \N,$$
 which is holomorphic at $\e=0$.
 This interpolation formula between the sum and the integral
    \begin{equation}
    \label{eq:EulerMaclaurinexp}   S(\e)=I(\e)+\mu(\e)
    \end{equation}
 generalizes to other $L^1$ functions by means of the Euler-Maclaurin formula.

As a motivation for the Euler-Maclaurin formula with remainder, let us first derive a formal Euler-Maclaurent formula using the Todd function ${\rm Td}(D)$   obtained by inserting the derivation map $D:f\mapsto f^\prime$ on $C^\infty  (\R)$ in formula (\ref{eq:Todd1}).

Let $\nabla f(x)= f(x)- f(x-1)$ denote the discrete derivation. Using a formal Taylor expansion,  we  have $$\nabla f(x)= f(x)-f(x-1) =\sum_{k=1}^\infty \frac{(-1)^{k+1}D^k f(x)}{k!}=\left(1-e^{-D}\right)(f)(x) $$ and hence at any non-negative integer $n$
\begin{eqnarray*}
\left(\nabla^{-1}f\right)(n) &=& \left(1-e^{-D}\right)^{-1}(f)(n)\\
&=& \left(D^{-1}f\right)(n)+ \frac{1}{2} +\sum_{k=1}^K \frac{B_{2k }}{(2k )!}\left(D^{2k-1}f\right)(n) + o(\e^{2K}).
\end{eqnarray*}
Here
     $\left(\nabla^{-1}f\right)(n)=\sum_{k=0}^n f(k)+ C$    stands for the discrete  primitive of $f$ defined modulo a constant; it satisfies
    $\left(\nabla \circ \nabla^{-1}\right)f (n)= f(n)$ for any $n\in \Z_{\geq 0}$. Similarly,   $\left(D^{-1}f\right)(x)=\int_0^x f(y)\, dy+ C $ stands for the continuous integration map defined modulo a constant;  it satisfies  $\left(D\circ D^{-1} \right)f(x)= f(x)$ for any $x\in \R$. This gives a first formal expansion
      \begin{eqnarray*}  \sum_{n=a}^{b}
           f(n)&=&\left(\nabla^{-1}f\right)(b)-\left(\nabla^{-1}f\right)(a)\\
           &=& \left(D^{-1}f\right)(b)-\left(D^{-1}f\right)(a)+ \frac{f(a)+f(b)}{2}  \nonumber\\
         &+&
         \sum_{j=1}^J  \frac{B_{2j}}{(2j)!}\left( D^{2j-1}f(b)-D^{(2j-1)}f(a)\right) \\
         &=&\int_a^b f(x)\, dx + \frac{f(a)+f(b)}{2} \\
              &+&
              \sum_{j=1}^J  \frac{B_{2j}}{(2j)!}\left( f^{(2j-1)}(b)-f^{(2j-1)}(a)\right),
         \end{eqnarray*}
for any two non-negative integers $a$ and $b$.

We are now ready to state the  {\bf Euler-Maclaurin formula} with remainder \cite{Ha}

\begin{prop}\label{prop:EML}
      For any  function $ f$ in $ C^\infty(\R)$  and any two integers $a<b$,
     \begin{eqnarray}\label{eq:EulerMaclaurin} \sum_{n=a}^{b}
       f(n)&=&\frac{f(a)+f(b)}{2} +\int_a^b f(x)\, dx\nonumber\\
     &+&
     \sum_{j=1}^J  \frac{B_{2j}}{(2j)!}\left( f^{(2j-1)}(b)-f^{(2j-1)}(a)\right)\nonumber\\
     &-& \frac{1}{(2J)!} \int_a^b \overline{B_{2J}} (x)\, f^{(2J)}(x)
     \, dx
     \end{eqnarray}where $J$ is
     any positive integer and $\overline B_n(x):= B_n\left(x-\lfloor x\rfloor\right)$ built from the Bernoulli polynomials $($see e.g. \cite{A}$)$ $B_n(x):= \sum_{k=0}^n {n\choose k} \, B_{n-k} \, x^k$ and $\lfloor x\rfloor$ the integral part of $x$.
\end{prop}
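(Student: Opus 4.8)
The plan is to derive $(\ref{eq:EulerMaclaurin})$ by iterated integration by parts against the periodic Bernoulli functions $\overline{B_n}(x)=B_n(x-\lfloor x\rfloor)$, relying on three elementary facts: (i) $\overline{B_0}\equiv 1$ and $\overline{B_1}(x)=(x-\lfloor x\rfloor)-\tfrac12$, which is smooth on each open interval $(k,k+1)$ with $\overline{B_1}'=1$ there but has a jump at each integer, its left limit being $\tfrac12$ and its right limit $-\tfrac12$; (ii) $\overline{B_n}'(x)=n\,\overline{B_{n-1}}(x)$ away from the integers, an immediate consequence of $B_n'=nB_{n-1}$; and (iii) for $n\ge 2$ the function $\overline{B_n}$ is continuous, indeed absolutely continuous, and equals $B_n$ at every integer — equivalently $B_n(0)=B_n(1)$, which follows from $B_n(x+1)-B_n(x)=n\,x^{n-1}$. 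Together with the vanishing $B_{2j+1}=0$ for $j\ge 1$ already recorded above, these are all the ingredients needed; the only regularity of $f$ that enters is that it be $C^{2J}$ near $[a,b]$, so $f\in C^\infty(\R)$ is more than enough.

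First I would establish the order-one identity
\[
\sum_{n=a}^{b} f(n)=\frac{f(a)+f(b)}{2}+\int_a^b f(x)\,dx+\int_a^b \overline{B_1}(x)\,f'(x)\,dx .
\]
Because $\overline{B_1}$ is discontinuous at the integers, rather than integrating by parts on $[a,b]$ at once I would work interval by interval: on $[k,k+1]$ one has $\overline{B_1}(x)=x-k-\tfrac12$, so $\int_k^{k+1} f\,dx=\bigl[f\,\overline{B_1}\bigr]_k^{k+1}-\int_k^{k+1} f'\,\overline{B_1}\,dx$, and the boundary term evaluates to $\tfrac12\bigl(f(k)+f(k+1)\bigr)$ thanks to the jump values $\pm\tfrac12$. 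Summing over $k=a,\dots,b-1$ merges the half-weights at the interior nodes into full weights and leaves $\tfrac12(f(a)+f(b))$ at the endpoints, which is the displayed identity. This bookkeeping of the endpoint jump of $\overline{B_1}$ is the one genuinely delicate point; from here on every $\overline{B_n}$ that appears has $n\ge 2$, hence is absolutely continuous, and integration by parts on $[a,b]$ is unproblematic.

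Next I would iterate by induction on $J\ge 1$. For the base case $J=1$, substitute $\overline{B_1}=\tfrac12\overline{B_2}'$ in the last term of the order-one identity and integrate by parts once; using $\overline{B_2}(a)=\overline{B_2}(b)=B_2$ this produces the term $\tfrac{B_2}{2!}\bigl(f'(b)-f'(a)\bigr)$ together with the remainder $-\tfrac{1}{2!}\int_a^b \overline{B_2}(x)f''(x)\,dx$, i.e.\ $(\ref{eq:EulerMaclaurin})$ for $J=1$. For the inductive step, apply two integrations by parts to the remainder: $\overline{B_{2J}}=\tfrac{1}{2J+1}\overline{B_{2J+1}}'$ gives $-\tfrac{1}{(2J)!}\int_a^b\overline{B_{2J}}f^{(2J)}=\tfrac{1}{(2J+1)!}\int_a^b\overline{B_{2J+1}}f^{(2J+1)}$, the boundary contribution vanishing since $\overline{B_{2J+1}}$ equals $B_{2J+1}=0$ at $a$ and $b$; then $\overline{B_{2J+1}}=\tfrac{1}{2J+2}\overline{B_{2J+2}}'$ gives
\[
\frac{1}{(2J+1)!}\int_a^b\overline{B_{2J+1}}f^{(2J+1)}=\frac{B_{2J+2}}{(2J+2)!}\bigl(f^{(2J+1)}(b)-f^{(2J+1)}(a)\bigr)-\frac{1}{(2J+2)!}\int_a^b\overline{B_{2J+2}}(x)f^{(2J+2)}(x)\,dx ,
\]
which is precisely the increment that carries $(\ref{eq:EulerMaclaurin})$ from level $J$ to level $J+1$. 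Thus the only obstacle, such as it is, is the treatment of the boundary jump of $\overline{B_1}$ in the opening step; everything afterwards is a clean, parity-driven recursion.
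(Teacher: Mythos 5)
Your proof is correct, and it is genuinely different from what the paper does: the paper does not prove Proposition \ref{prop:EML} at all, but only motivates it by the formal operator computation $\nabla=1-e^{-D}$, $\nabla^{-1}=(1-e^{-D})^{-1}$ expanded via the Todd generating function, and then quotes the statement with remainder from Hardy \cite{Ha}. Your argument is the classical rigorous one: the order-one identity obtained by integrating by parts on each $[k,k+1]$ against $x-k-\tfrac12$ (with the jump of $\overline{B_1}$ correctly producing the half-weights at the endpoints and full weights at interior nodes), followed by the parity-driven induction using $\overline{B_n}'=n\,\overline{B_{n-1}}$ off the integers, the continuity of $\overline{B_n}$ for $n\ge 2$ coming from $B_n(0)=B_n(1)$, and the vanishing of $B_{2J+1}$ to kill the odd boundary terms; all steps check out, and you are right that only $f\in C^{2J}$ near $[a,b]$ is needed. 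What each route buys: the paper's formal Todd-operator derivation explains conceptually why the Bernoulli numbers appear (they are the Taylor coefficients of ${\rm Td}$, which is the thread running through the whole paper), but it yields only an asymptotic series with no remainder and no justification; your integration-by-parts induction produces the exact remainder $-\frac{1}{(2J)!}\int_a^b\overline{B_{2J}}(x)f^{(2J)}(x)\,dx$ and a complete proof. One cosmetic remark: you use $B_1=-\tfrac12$ (i.e.\ $B_1(x)=x-\tfrac12$), consistent with the generating-function convention of Eq.~(\ref{eq:Bernoullinumb}); since formula (\ref{eq:EulerMaclaurin}) involves only the even Bernoulli numbers together with the explicit term $\frac{f(a)+f(b)}{2}$, this causes no conflict with the paper's later remark that it takes $B_1=\tfrac12$.
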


In particular, for $f(x)= x^k$ and $a=0, b=N$ we have $f^{(2j-1)}(x)= \frac{k!} {(k-2j+1)!} x^{k-2j+1}$ and hence
   \begin{eqnarray}\label{eq:EMLpol}
S_k(N):&=&   \sum_{n=0}^{N} n^k\\
       &=&\frac{\delta_{k}+N^k}{2} +\int_0^N x^k\, dx+
     \sum_{j=1}^{\left[\frac{k+1}{2}\right]}  \frac{B_{2j}}{{2j}!}\left(\frac{k!}{(k-2j+1)!} \left(N^{ k-2j+1}-\delta_{k-2j+1}\right)\right)\nonumber\\
      &=&   \frac{N^{k+1}}{k+1}+ \frac{ N^k}{2}+ \sum_{j=1}^{\left[\frac{k+1}{2}\right]}{k\choose 2j-1}\frac{B_{2j}}{2j}\,
      \left(N^{ k-2j+1}-\delta_{k-2j+1}\right) + \frac{\delta_k} {2},
        \end{eqnarray}
         and we recover this way the well-known formulae
     $$\sum_{n=0}^{N}
            n^0= N+1;\quad \sum_{n=0}^{N}
                   n=\frac{N(N+1)}{2};\quad \sum_{n=0}^{N}
                                      n^2=\frac{N(N+1)(2N+1)}{6}$$
using the fact that $B_2= \frac{1}{6}$ for the last one.  More generally, it follows from Eq.~(\ref{eq:EMLpol}) that

\begin{cor} The cut-off discrete sum $S_k(N)$ is a polynomial of order $k+1$ in $N$ which vanishes at zero for any positive integer $k$.
\end{cor}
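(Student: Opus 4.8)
The plan is to read the statement off directly from the explicit formula~(\ref{eq:EMLpol}), which was established above with $a=0$ and $b=N$. First I would observe that the right-hand side of~(\ref{eq:EMLpol}) is, summand by summand, a polynomial expression in $N$ with rational coefficients: it is the sum of the monomial $\frac{N^{k+1}}{k+1}$, the monomial $\frac{N^k}{2}$, the finite collection of monomials ${k\choose 2j-1}\frac{B_{2j}}{2j}\,N^{k-2j+1}$ for $1\le j\le\left[\frac{k+1}{2}\right]$ (each of degree $k-2j+1\le k-1$), and the constants $-{k\choose 2j-1}\frac{B_{2j}}{2j}\,\delta_{k-2j+1}$ and $\frac{\delta_k}{2}$. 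Call this polynomial $P$. Since~(\ref{eq:EMLpol}) asserts $S_k(N)=P(N)$ for every integer $N\ge 1$ (Proposition~\ref{prop:EML} being stated for $a<b$), and since a function on the positive integers that agrees with a polynomial is the restriction of that polynomial, $S_k$ is the polynomial function $P$.

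Next I would determine the degree. Among the monomials listed, only $\frac{N^{k+1}}{k+1}$ has degree $k+1$; its coefficient $\frac{1}{k+1}$ is nonzero, while every other summand has degree at most $k$. Hence $\deg P=k+1$, which is the asserted order $k+1$.

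Finally, for the vanishing at zero I would simply evaluate: for $k\ge 1$ one has directly $S_k(0)=\sum_{n=0}^{0}n^k=0^k=0$, hence $P(0)=0$. Equivalently, substituting $N=0$ into $P$ gives $0$ term by term: $\frac{0^{k+1}}{k+1}=0$, $\frac{0^k}{2}=0$, $\frac{\delta_k}{2}=0$ since $\delta_k=0^k=0$ for $k\ge 1$, and each bracket $0^{k-2j+1}-\delta_{k-2j+1}$ vanishes, being $0-0$ when $k-2j+1\ge 1$ and $1-1$ when $k-2j+1=0$ (with the convention $0^0=1$). There is essentially no obstacle, since all the content is packaged into~(\ref{eq:EMLpol}); the closest thing to a subtlety is the routine remark that a polynomial is determined by its values at infinitely many integers (used to promote "$S_k(N)=P(N)$ for $N\ge 1$" to "$S_k$ is a polynomial"), together with keeping track of the convention $0^0=1$ when reading the Kronecker symbols $\delta_m$ in~(\ref{eq:EMLpol}) so that the constant terms behave correctly at $N=0$.
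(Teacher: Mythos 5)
Your proof is correct and follows the same route as the paper: the corollary is read off directly from the explicit polynomial expression in Eq.~(\ref{eq:EMLpol}) (degree $k+1$ from the leading term $\frac{N^{k+1}}{k+1}$, vanishing at $N=0$ from $S_k(0)=0^k=0$ for $k\ge 1$). Your extra care with the Kronecker deltas and the convention $0^0=1$ is a harmless elaboration of what the paper leaves implicit.
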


\subsection{Evaluating meromorphic functions at poles}

Let  $ {\rm Mer}^k_0(\C)$ be the set of germs of meromorphic functions  at zero
\footnote{i.e. equivalence classes of meromorphic functions
 defined on a neighborhood of zero for the equivalence relation $f\sim g $ if $f$
and $g$ coincide on some open neighborhood of zero.} with
poles at zero of order no larger than $k$, and let $$ {\rm Mer}_0(\C)=\cup_{k=0}^\infty {\rm Mer}^k_0(\C).$$ Let ${\rm
  Hol}_0(\C)$ (also denoted by ${\rm Mer}^0_0(\C)$) be the set of germs of holomorphic functions at zero.

If $f$ in ${\rm Mer}_0^k(\C)$ reads $f(z)= \sum_{i=-k}^\infty a_i z^i$, then for any $j\in\{1, \cdots, k\}$ we set
${\rm Res}_0^j(f):= a_{-j}$, called the {\bf $j$-th residue} of $f$ at zero.

The projection  map
\begin{eqnarray*}
\pi_+:{\rm Mer}_0(\C) &\to & {\rm Hol}_0(\C)\\
f&\mapsto &  \left(z\mapsto f(z)-  \sum_{j=1}^k\frac{{\rm
Res}^j_0(f)}{z^j}\right)\quad {\rm for}\quad f\in {\rm Mer}_0^k(\C)
\end{eqnarray*}
corresponds to what physicists call a {\bf minimal
subtraction scheme}. Whereas $\pi_+(f)$ corresponds to the holomorphic part of $f$,
$ \pi_-(f):= (1-\pi_+)(f) $ corresponds to the  ``polar part'' of $f$.
\begin{ex}
With the notation of the previous paragraphs, we have
\begin{equation}
\label{eq:piplusS}
S_+:=\pi_+\circ S(\e)= \mu(\e); \quad S_-:=\pi_-\circ  S(\e)= I(\e).\end{equation}
Thus the Euler-Maclaurin formula (\ref{eq:EulerMaclaurinexp}) amounts to the minimal subtraction scheme applied to $S$:
\begin{equation}
\label{eq:EML=MS} S=S_++S_-= \mu+I.
\end{equation}
An easy computation further shows that
\begin{equation}\label{eq:piplusSk}
\pi_+\circ S_k(\e)= (-1)^k\,\mu^{(k)}(\e); \quad \pi_- \circ S_k(\e)=(-1)^k\, I^{(k)}(\e).\end{equation}
\end{ex}
The holomorphic part $\pi_+(f\,g)$ of the product  of
two meromorphic functions $f$ and $g$ differs from
the product $\pi_+(f)\,\pi_+(g)$ of the holomorphic parts of $f$ and $g$ by
contributions of the poles  through  $\pi_-(f)$ and   $\pi_-(g)$ and we have
\begin{equation} \label{eq:piplus} \pi_+(f\, g)= \pi_+(f) \, \pi_+(g)+ \pi_+(f\,
  \pi_-(g))+ \pi_+(g\, \pi_-(f)).\end{equation}
    The maps $\pi_+$ and  $\pi_-$
are both  {\bf Rota-Baxter operators}  of weight $-1$ on ${\rm Mer}_0(\C)$, i.e.
$$\pi_\pm(f)\, \pi_\pm(g)= \pi_\pm(\pi_\pm(f)\,g)+ \pi_\pm(f\, \pi_\pm(g)) -\, \pi_\pm(fg).$$
We refer the reader to \cite{G} for a survey on Rota-Baxter operators.

Combining the evaluation map at zero ${\rm ev}_0:f\mapsto f(0)$ on holomorphic germs at zero with the map $\pi_+$  provides a first
regularized evaluator at zero on  $ {\rm Mer}^k_0(\C)$. The map \begin{eqnarray}\label{eq:regevzero} {\rm ev}_0^{\rm reg}: {\rm
Mer}^k_0(\C)&\to& \C\nonumber\\
 f&\mapsto & {\rm ev}_0 \circ \pi_+(f),
 \end{eqnarray}
 is  a linear form  that extends  the ordinary
evaluation map ${\rm ev}_0 $ defined on the space $
{\rm Hol}_0(\C)$.
\begin{defn}
We call a {\bf regularized evaluator} any linear extension of the evaluation map ${\rm ev}_0 $ to the space ${\rm Mer}_0(\C)$.
\end{defn}
The following result
provides a classification of  regularized evaluators.
\begin{prop}\label{thm:classregev} Regularized evaluators at zero  on ${\rm
Mer}^k_0(\C)$ are of the form:
 \begin{equation}\label{eq:regevmulti}\lambda_0=  {\rm ev}_0^{\rm reg} +
 \sum_{j=1}^k \mu_j \, {\rm Res}^j_0
\end{equation}
 for some constants $\mu_1, \cdots, \mu_k$.
In particular,  regularized evaluators at zero  on ${\rm Mer}^1_0(\C)$ are of the
form\footnote{The parameter $\mu$ that arises
here is related to  the  renormalization group parameter in quantum field theory.}
  \begin{equation}\label{eq:regevsimple}\lambda_0=  {\rm ev}_0^{\rm reg} + \mu \,
    {\rm Res}_0.
\end{equation}
 for some constant $\mu$.
\end{prop}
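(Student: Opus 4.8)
The plan is to reduce the statement to elementary linear algebra by exploiting the direct sum decomposition
$$\mathrm{Mer}_0^k(\C) = \mathrm{Hol}_0(\C) \oplus V_k, \qquad V_k := \mathrm{span}_\C\{z^{-1}, z^{-2}, \ldots, z^{-k}\},$$
which is nothing but the observation that every germ $f(z) = \sum_{i=-k}^\infty a_i z^i$ splits uniquely as $f = \pi_+(f) + \pi_-(f)$ with $\pi_+(f) \in \mathrm{Hol}_0(\C)$ and $\pi_-(f) = \sum_{j=1}^k \mathrm{Res}_0^j(f)\, z^{-j} \in V_k$. Thus $\pi_+$ is the linear projection onto $\mathrm{Hol}_0(\C)$ along the finite-dimensional subspace $V_k$, and the coordinate functionals on $V_k$ relative to the basis $(z^{-1}, \ldots, z^{-k})$ are precisely the residue maps $\mathrm{Res}_0^1, \ldots, \mathrm{Res}_0^k$.

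First I would take an arbitrary regularized evaluator $\lambda_0$ on $\mathrm{Mer}_0^k(\C)$, i.e.\ a linear form whose restriction to $\mathrm{Hol}_0(\C)$ coincides with $\mathrm{ev}_0$, and set $\mu_j := \lambda_0(z^{-j})$ for $j = 1, \ldots, k$. Applying $\lambda_0$ to the decomposition $f = \pi_+(f) + \sum_{j=1}^k \mathrm{Res}_0^j(f)\, z^{-j}$ and using linearity together with $\lambda_0|_{\mathrm{Hol}_0(\C)} = \mathrm{ev}_0$ yields
$$\lambda_0(f) = \mathrm{ev}_0(\pi_+(f)) + \sum_{j=1}^k \mu_j\, \mathrm{Res}_0^j(f) = \mathrm{ev}_0^{\mathrm{reg}}(f) + \sum_{j=1}^k \mu_j\, \mathrm{Res}_0^j(f),$$
which is the asserted form. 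Conversely, for any constants $\mu_1, \ldots, \mu_k$ the right-hand side defines a linear form on $\mathrm{Mer}_0^k(\C)$; evaluated on a holomorphic germ every $\mathrm{Res}_0^j$ vanishes while $\mathrm{ev}_0^{\mathrm{reg}} = \mathrm{ev}_0 \circ \pi_+$ reduces to $\mathrm{ev}_0$, so it is indeed a regularized evaluator. Specializing to $k = 1$ gives the last displayed formula with $\mu = \mu_1$.

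The argument is essentially routine once the splitting is in place, so there is no serious obstacle; the one point that must not be glossed over is that $\mathrm{ev}_0^{\mathrm{reg}}$ and $\lambda_0$ automatically agree on the holomorphic summand — immediate since both extend $\mathrm{ev}_0$ — so their difference is supported on $V_k$ and hence is a linear combination of the residue functionals. Equivalently, one may phrase the whole proof contravariantly: the set of regularized evaluators on $\mathrm{Mer}_0^k(\C)$ is an affine space modelled on the dual $V_k^* \cong \C^k$, with $\mathrm{ev}_0^{\mathrm{reg}}$ a distinguished base point and $\{\mathrm{Res}_0^1, \ldots, \mathrm{Res}_0^k\}$ the dual basis realizing this identification.
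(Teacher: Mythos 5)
Your proposal is correct and follows essentially the same argument as the paper: split $f=\pi_+(f)+\pi_-(f)$, use linearity of $\lambda_0$ together with the fact that it agrees with ${\rm ev}_0$ on holomorphic germs, and set $\mu_j:=\lambda_0(z^{-j})$. Your additional check of the converse direction and the affine-space remark are fine but do not change the substance of the argument.
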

\begin{proof} A linear form $\lambda_0$ which extends ${\rm ev}_0$ coincides with
${\rm ev}_0$ on the range of $\pi_+$
and therefore fulfills the following identity: $$\lambda_0\circ \pi_+= {\rm ev}_0\circ
\pi_+= {\rm ev}_0^{\rm reg}.$$
Thus, for any $f\in {\rm Mer}^k_0(\C)$, using the linearity of $\lambda_0$ we get
$$\lambda_0(f)=\lambda_0\left(\pi_+(f)\right)+\lambda_0(\pi_-(f))
= {\rm ev}_0^{\rm reg}+ \sum_{j=1}^k \mu_j\, {\rm Res}^j_0(f)
$$
where we have set $\mu_j:=\lambda_0(z^{-j}) $.
\end{proof}
\begin{ex}
We have  $${\rm ev}_0^{\rm reg} (S)= \mu(0)= \frac{1}{2}= 1+\, B_1 .$$

Similarly, the higher Taylor coefficients of the holomorphic function $\mu$ at zero relate to the value of the zeta function at negative integers
$$ {\rm ev}_0^{\rm reg} (S_k)=  (-1)^k\,\mu^{(k) }(0)= -\frac{B_{k+1}}{k+1}$$
and yield the renormalized polynomial sums  $``\sum_{k=0}^\infty k^n "$  on integer points of the one dimensional closed cone $[0,+\infty)$.
\end{ex}
\subsection{The zeta function at  non-positive integers}
Let us start with some notation.
Given $\alpha \in \C$ we consider smooth functions $f$ on $\R_+$ with the following asymptotic behavior at infinity
\begin{equation}\label{eq:asympinfinity}
f(R)\sim_{R\to \infty} \sum_{j=0}^\infty  a_{j } R^{\alpha-j} +b\, \log R \end{equation}
by which we mean
$$f(R)-\sum_{j=0}^{N-1} a_{j} R^{\alpha-j}-b \log R = o\left(R^{\Re(\alpha)-N+\e}\right)$$
for any positive $\e$ and any positive integer $N$.  We call such a function {\bf asymptotically log-polyhomogeneous} at infinity of logarithmic type $1$.  If $b=0$ we call it asymptotically polyhomogeneous at infinity; let us consider the class ${\mathcal S}_\infty^{\alpha }(\R_+)$ of
{\bf asymptotically  polyhomogeneous functions} at infinity of logarithmic type $1$.

\begin{ex} The logarithmic function
$f:R\mapsto \log R$  is asymptotically log-polyhomogeneous at infinity, of logarithmic type $1$. Physicists say that the integral $\int_1^R \frac {1}{x}\,dx=\log R$ has a logarithmic divergence as $R\to \infty$.
\end{ex}
The {\bf Hadamard finite part  of $f$ at infinity}
\begin{eqnarray*} {\rm fp}_{R\to \infty} f(R)&:=& \left\{\begin{array}{ll} a_{\alpha},& \text{if }\alpha\in \Z_{\geq 0},\\ 0, &\text{otherwise}.\end{array}\right.
\end{eqnarray*}
defines a linear map
\begin{eqnarray*}{\rm ev}_\infty^{\rm reg}: {\mathcal S}_\infty^{\alpha}(\R_+)&\longrightarrow &\C\\
f&\longmapsto & {\rm fp}_{R\to \infty} f(R)
\end{eqnarray*}
which  extends the ordinary limit at infinity whenever it exists. We call such a linear  extension of the ordinary limit  a  {\bf regularized evaluator at infinity.}

 Setting $R=\frac{1}{r}$ in Eq.~(\ref{eq:asympinfinity})    with $r>0$, and choosing $\beta=-\alpha$, $b_{ j}= a_{ j}$,  $c=-b$    leads to
  smooth functions $f$ on $(0,+\infty)$
 with the  following log-polyhomogeneous asymptotic behavior at zero:

  \begin{equation}\label{eq:asympzero}
f(r)\sim_{r\to 0} \sum_{j=0}^\infty  b_{j } r^{\beta+j} +c\, \log r \end{equation}
by which we mean
$$f(r)-\sum_{j=0}^{N-1} b_{j} r^{\beta+j}-c \log r = o\left(r^{\Re(\beta)+N+\e}\right)$$
for any positive $\e$ and any positive integer $N$. We call such a function {\bf asymptotically log-polyhomogeneous} at zero of logarithmic type $1$.  If $c=0$ we call it asymptotically polyhomogeneous at zero; let us consider the class ${\mathcal S}_0^{\beta}(\R_+)$ of
{\bf asymptotically  polyhomogeneous}  functions at zero of logarithmic type $1$.

The {\bf Hadamard finite part  of $f$ at zero}
\begin{eqnarray*}{\rm fp}_{r\to 0} f(r)&:=&
\left\{\begin{array}{ll} b_{-\beta }, &\text{if } \beta\in \Z_{\geq 0},\\
0, &\text{otherwise}.\end{array}\right.
\end{eqnarray*}
defines a linear map
\begin{eqnarray*}{\rm ev}_0^{\rm reg}: {\mathcal S}_0^{\beta}(\R_+)&\longrightarrow &\C\\
f&\longmapsto & {\rm fp}_{r\to 0} f(r)
\end{eqnarray*}
which  extends the ordinary limit at zero  whenever it exists. We call such a linear  extension of the ordinary limit  a  {\bf regularized evaluator at zero.}

 We recall here well known results on the Mellin transform  \footnote{Note that definitions of the Mellin transform   differ according to the  reference  by a  multiplicative factor $\Gamma(z)$.},  see e.g. \cite{Je}.
\begin{prop}\label{prop:MellinS}Let $f$ be a  Schwartz function in ${\mathcal S}_0^\beta(\R_+)$ for some $\beta\in \C$. Its Mellin transform defines a holomorphic function on the half plane $\Re(z)+ \beta >0$ which extends to a meromorphic function on the  whole complex plane
which is holomorphic at zero. We have  ${\mathcal M}(f^{(k)})(z)= (-1)^k\,{\mathcal M}(f) (z-k)$ for any $k\in \Z_{\geq 0}$ and the value at zero is given by
\begin{equation}\label{eq:MellinHK}{\rm ev}_0^{\rm reg}\circ {\mathcal M}(f)= {\mathcal M}(f)(0)   ={\rm ev}_0^{\rm reg}(f)´.
\end{equation}
\end{prop}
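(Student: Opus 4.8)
The plan is to prove the three assertions in turn --- the meromorphic continuation together with holomorphy at zero, the shift identity ${\mathcal M}(f^{(k)})(z)=(-1)^k{\mathcal M}(f)(z-k)$, and the evaluation (\ref{eq:MellinHK}) --- each resting on the same device: splitting $\int_0^\infty \varepsilon^{z-1}f(\varepsilon)\,d\varepsilon$ into a piece near zero, where the polyhomogeneous expansion (\ref{eq:asympzero}) is available, and a tail $\int_1^\infty$ that is entire in $z$ because $f$ decays faster than every power of $\varepsilon$. For the continuation I would fix $N$, write $f(\varepsilon)=\sum_{j=0}^{N-1}b_j\varepsilon^{\beta+j}+R_N(\varepsilon)$ with $R_N(\varepsilon)=o(\varepsilon^{\Re(\beta)+N})$ from (\ref{eq:asympzero}), and integrate:
\[
\int_0^1 \varepsilon^{z-1}f(\varepsilon)\,d\varepsilon=\sum_{j=0}^{N-1}\frac{b_j}{z+\beta+j}+\int_0^1\varepsilon^{z-1}R_N(\varepsilon)\,d\varepsilon ,
\]
the last integral being holomorphic for $\Re(z)>-\Re(\beta)-N$ by a routine absolute-convergence estimate. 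Letting $N\to\infty$ exhibits $\int_0^\infty\varepsilon^{z-1}f(\varepsilon)\,d\varepsilon$ as a meromorphic function on $\C$ with at most simple poles, located at $z=-\beta-j$ for $j\in\Z_{\geq0}$ with residue $b_j$; since $1/\Gamma(z)$ is entire with a simple zero at $z=0$, the quotient ${\mathcal M}(f)$ is meromorphic on $\C$, and the at-worst-simple pole of the integral at $z=0$ is killed by the zero of $1/\Gamma$, so ${\mathcal M}(f)$ is holomorphic at $z=0$.

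To obtain the shift identity I would first treat $k=1$ and then induct, using that $f'$ is again a Schwartz function lying in ${\mathcal S}_0^{\beta-1}(\R_+)$. For $\Re(z)$ large enough that all the integrals converge, integration by parts gives
\[
\int_0^\infty\varepsilon^{z-1}f'(\varepsilon)\,d\varepsilon=\bigl[\varepsilon^{z-1}f(\varepsilon)\bigr]_0^\infty-(z-1)\int_0^\infty\varepsilon^{z-2}f(\varepsilon)\,d\varepsilon=-(z-1)\,\Gamma(z-1)\,{\mathcal M}(f)(z-1),
\]
the boundary term vanishing at $\infty$ by rapid decay and at $0$ because $\varepsilon^{z-1}f(\varepsilon)\sim b_0\varepsilon^{z-1+\beta}\to0$ once $\Re(z)>1-\Re(\beta)$. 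Dividing by $\Gamma(z)=(z-1)\Gamma(z-1)$ gives ${\mathcal M}(f')(z)=-{\mathcal M}(f)(z-1)$ on that half-plane, hence on all of $\C$ by analytic continuation, both sides being meromorphic by the first step; an induction on $k$ then yields ${\mathcal M}(f^{(k)})(z)=(-1)^k{\mathcal M}(f)(z-k)$.

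For the value at zero I would note that, near $z=0$, the computation of the first step gives $\int_0^\infty\varepsilon^{z-1}f(\varepsilon)\,d\varepsilon=\rho/z+(\text{holomorphic})$, where the residue $\rho$ is the coefficient of the power $\varepsilon^0$ in the expansion (\ref{eq:asympzero}) of $f$ --- that is, $b_{-\beta}$ when such a term occurs and $0$ otherwise --- which is exactly ${\rm fp}_{r\to0}f(r)={\rm ev}_0^{\rm reg}(f)$. Multiplying by $1/\Gamma(z)=z+O(z^2)$ then gives ${\mathcal M}(f)(0)=\rho$, which is (\ref{eq:MellinHK}); and since ${\mathcal M}(f)$ is holomorphic at $0$, the regularized evaluator reduces there to ordinary evaluation, so ${\rm ev}_0^{\rm reg}\circ{\mathcal M}(f)={\mathcal M}(f)(0)$, completing the proof.

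The one point I expect to require care is the claim that the polyhomogeneous expansion (\ref{eq:asympzero}) of $f$ may be differentiated term by term, so that each $f^{(k)}$ again belongs to a class ${\mathcal S}_0^{\beta-k}(\R_+)$ and the induction in the second step gets off the ground. This is immediate when $f$ is, say, the restriction to $\R_+$ of a Schwartz function on $\R$ (then $\beta=0$ and the whole argument is Taylor's theorem), and it is part of the standard bookkeeping for these function spaces in general, but it deserves to be stated explicitly. Everything else is the two elementary estimates above, a single integration by parts, and analytic continuation.
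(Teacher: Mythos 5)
Your proposal is correct and follows essentially the same route as the paper's proof: split the Mellin integral into a piece near zero handled by the polyhomogeneous expansion and a rapidly decaying tail, use the zero of $1/\Gamma$ at $z=0$ to cancel the at-worst-simple pole and identify ${\mathcal M}(f)(0)$ with the finite part, and get the shift identity by integration by parts together with $\Gamma(z)=(z-1)\cdots(z-k)\Gamma(z-k)$. If anything, your version is slightly more careful than the paper's (explicit remainder $R_N$, explicit boundary terms, and flagging the bookkeeping needed so that $f^{(k)}$ lies in ${\mathcal S}_0^{\beta-k}(\R_+)$, which the paper passes over silently), so no changes are needed.
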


\begin{proof}    We split the Mellin transform
$${\mathcal M}(f)(z)= \frac{1}{\Gamma(z)}\left( \int_0^A \e^{z-1} f(\e)\, d\e+ \int_A^\infty \e^{z-1} f(\e)\, d\e\right)$$
for some positive real number $A$. The function $ \frac{1}{\Gamma}$ is holomorphic at zero and  we have $ \frac{1}{\Gamma (z)}\sim_0 z$.

Since  $f$ is a Schwartz function, the second term   in the bracket   yields a holomorphic function $ I_2: z\mapsto \frac{1}{\Gamma(z)}\int_A^\infty \e^{z-1} f(\e)\, d\e$  which vanishes at zero. For  $f(\e)= \sum_{j=0}^J   b_{j } \e^{\beta+j} $ and  $\Re(z)+\beta+j>0 $, the first term in the bracket gives rise to $$I_1^J(z):= \frac{1}{\Gamma(z)}\, \sum_{j=0}^J   b_{j } \int_0^A \e^{z+\beta+j-1} \, d\e= \frac{1}{\Gamma(z)}\,\sum_{j=0}^J   b_{j } \frac{A^{z+\beta+j}}{z+\beta +j}$$
which extends to a meromorphic function  denoted by the same symbol.  Hence $$ {\mathcal M}(f)(z):=I_1(z) + I_2^J(z)+ o\left(\e^{z+\beta +J}\right) \text{ for } J\in \N $$
defines a meromorphic function on the whole plane. Integrating by parts $k$ times  and implementing the property $\Gamma(z)= (z-1)\Gamma(z-1)=(z-1)\cdots (z-k)\Gamma(z-k)$ gives ${\mathcal M}(f^{(k)})(z)=  (-1)^k\,{\mathcal M}(f) (z-k)$.

Since $\Gamma(z)\sim \frac{1}{z}$, the  value of $ I_2^J(z)$ at  $z=0$ is   $b_{-\beta}$  if $\beta \in \Z_{\leq 0}\cap [0, J ]$ and zero elsewhere so the same holds for    $ {\mathcal M}(f) (z) $.  Since ${\rm fp}_{\e=0}  f(\e)=b_{-\beta}$  if $\beta \in \Z_{\leq 0}$ and zero elsewhere, this yields Eq.~(\ref{eq:MellinHK}). \end{proof}

The Mellin transform of the Schwartz function $f_n:\e\mapsto e^{-\e n}$ on $\R_+$ reads $n^{-z}= {\mathcal M}(f_n)(z)$ for  any  $n \in \N$.

 \begin{cor} The function $z\mapsto\sum\limits_{n=1}^\infty n^{-z}$ defined on the half-plane $\Re(z)>1$  extends  meromorphically on  the whole plane to the {\bf zeta function} $\zeta $, which has only one simple pole at $-1$ and its value at non-positive integers is expressed in terms of Bernoulli numbers
\begin{equation}
\label{eq:zetavalues}\zeta(0)= {\rm ev}_0^{\rm reg} (  S)-1=-\frac{1}{2}  ;\quad \zeta(-k)=  (-1)^k \,{\rm ev}_0^{\rm reg}( \partial^k  S  )=-\frac{B_{k+1}}{k+1} \text{ for all } k\in \N.
\end{equation}
 \end{cor}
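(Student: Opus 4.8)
The plan is to obtain the corollary from the identity $\zeta={\mathcal M}(S-1)$ recorded just before the statement, together with Proposition~\ref{prop:MellinS} applied to $f=S-1=\sum_{n\geq 1}e^{-\e n}=\tfrac{1}{e^{\e}-1}$ on $\R_+$. First I would observe that $f$ is Schwartz at $+\infty$ and that Eq.~(\ref{eq:SBernoulli}) furnishes the polyhomogeneous expansion
$$(S-1)(\e)\ \sim_{\e\to 0}\ \frac{1}{\e}-\frac{1}{2}+\sum_{k\geq 1}\frac{B_{2k}}{(2k)!}\,\e^{2k-1},$$
so $f\in{\mathcal S}_0^{-1}(\R_+)$ with $\beta=-1$ and leading coefficient $b_0=1$. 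Proposition~\ref{prop:MellinS} then immediately yields that $\zeta={\mathcal M}(f)$ is meromorphic on $\C$ and holomorphic at $0$.

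For the pole structure I would revisit the decomposition used in the proof of Proposition~\ref{prop:MellinS}: only the first piece $I_1^J(z)=\frac{1}{\Gamma(z)}\sum_{j=0}^{J}b_j\frac{A^{z-1+j}}{z-1+j}$ can be singular, and with $\beta=-1$ its candidate poles lie at $z=1-j$ for $j\geq 0$, that is, at $z=1$ and at the non-positive integers. The factor $1/\Gamma$ has a simple zero at every non-positive integer and therefore cancels all of the latter, while at $z=1$ one has $1/\Gamma(1)=1$; hence $\zeta$ has exactly one pole, a simple one at $z=1$, with residue $b_0=1$ (in accordance with the remark made just before the statement), the second piece $I_2^J$ being entire.

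It remains to identify the special values. Since $S$ has polar part exactly $1/\e$, its Hadamard finite part at $0$ equals $\pi_+(S)(0)=\mu(0)=\tfrac{1}{2}$, so the finite part of $S-1$ at $0$ is $-\tfrac{1}{2}={\rm ev}_0^{\rm reg}(S)-1$, which by Eq.~(\ref{eq:MellinHK}) equals ${\mathcal M}(S-1)(0)=\zeta(0)$. For $k\geq 1$ I would invoke the relation ${\mathcal M}(f^{(k)})(z)=(-1)^k{\mathcal M}(f)(z-k)$ from Proposition~\ref{prop:MellinS}: since $(S-1)^{(k)}=\partial^k S$ for $k\geq 1$ (the constant drops out under differentiation), setting $z=0$ gives ${\mathcal M}(\partial^k S)(0)=(-1)^k\zeta(-k)$, and Eq.~(\ref{eq:MellinHK}) rewrites the left-hand side as ${\rm ev}_0^{\rm reg}(\partial^k S)$, whence $\zeta(-k)=(-1)^k{\rm ev}_0^{\rm reg}(\partial^k S)$. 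Finally, ${\rm ev}_0^{\rm reg}(\partial^k S)$ is $k!$ times the coefficient of $\e^k$ in the expansion~(\ref{eq:SBernoulli}) of $S$, which equals $\frac{B_{k+1}}{(k+1)!}$ for every $k\geq 1$ (both sides vanish when $k$ is even, since then $k+1$ is odd), so ${\rm ev}_0^{\rm reg}(\partial^k S)=\frac{B_{k+1}}{k+1}$ and $\zeta(-k)=(-1)^k\frac{B_{k+1}}{k+1}=-\frac{B_{k+1}}{k+1}$.

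I do not expect a real obstacle here: granted Proposition~\ref{prop:MellinS} and the explicit Bernoulli expansion~(\ref{eq:SBernoulli}), the corollary is bookkeeping. The two points that need care are (i) one must work with $S-1$ rather than $S$, so that the $n=0$ summand does not spuriously shift the constant term --- this is why the value at $0$ behaves differently and has to be recorded separately --- and (ii) the uniform formula $\zeta(-k)=-B_{k+1}/(k+1)$ only reaches $B_1=-\tfrac{1}{2}$ through the index $k+1$ with $k\geq 1$, consistently with $\zeta(0)=-\tfrac{1}{2}\neq -B_1$.
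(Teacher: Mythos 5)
Your argument is correct and is essentially the paper's proof: the paper likewise obtains the corollary by applying Proposition~\ref{prop:MellinS} to $\e\mapsto S(\e)-1=\sum_{n\ge1}e^{-\e n}$ and to its derivatives $(-1)^kS_k$, and your extra bookkeeping (the expansion from Eq.~(\ref{eq:SBernoulli}) with $\beta=-1$, the cancellation of the candidate poles at non-positive integers by the zeros of $1/\Gamma$, and the identification ${\rm ev}_0^{\rm reg}(\partial^kS)=B_{k+1}/(k+1)$) just makes explicit what the paper leaves implicit. Note only that the pole is indeed at $z=1$, as you say; the corollary's phrase ``simple pole at $-1$'' is a typo in the paper, consistent with its earlier remark that the residue is $1$ at $z=1$ and zero elsewhere.
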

\begin{proof}This follows from applying Proposition \ref{prop:MellinS} to the function $\e\mapsto S(\e)-1=\sum_{n=1}^\infty e^{-\e n}$ and its derivatives $(-1)^k S_k, k\in \N$.
 \end{proof}

\subsection{Conclusion} By means of  the heat-kernel regularization method we evaluated  $$"\left(\sum_{n=1}^\infty n^0\right)"="\left(\sum_{n=0}^\infty n^0\right)"-1={\rm ev}_{0}^{\rm reg}\circ   S-1 =\mu(0)-1=B_1-1=-\frac{1}{2}. $$  In this paper, $B_1= \frac 12$.
By means of  the zeta-function regularization method we evaluated  $$"\left(\sum_{n=1}^\infty n^0\right)"={\rm ev}_0  \circ   S=\zeta(0)=-\frac{1}{2} ,$$
so these two methods agrees in the case $k=0$. The two methods actually coincide for any $k\in \Z_{\geq 0}$.
$$"\left(\sum_{n=0}^\infty n^k\right)"="\left(\sum_{n=1}^\infty n^k\right)"=  {\rm ev}_{0}^{\rm reg}\circ    S_k =\zeta(-k)  = -\frac{B_{k+1}}{k+1}.$$ Moreover, combining  Eqs.~(\ref{eq:zetavalues}) and (\ref{eq:SBernoulli}) yields the Laurent expansion of the exponential sum in terms of $\zeta$-values at non-positive arguments
\begin{equation}
\label{eq:expsumzeta} S(\e)=  \frac{1}{\e}-\zeta(0)-\sum_{k=1}^K \zeta(-(2k-1))\,\e^{2k-1} + o(\e^{2K})\quad\forall K\in \N.
\end{equation}
 In contrast, the cut-off method gives $${\rm fp}_{N\to \infty}  S_k(N)=P_k(0)= \delta_{k},$$
where $\delta_k=1$ if $k=0$ and zero otherwise.

\section{Counting lattice points on product cones}
Given a positive integer $k$, we now  want to "count" the number  $"\left(\sum_{\vec n\in \Z_{\geq 0}^k} \vec n^{\vec 0}\right)"$ of lattice  points $\vec n\in\Z_{\geq 0}^k$ in the product cone    $   \R_{\geq 0}^k $, where for $\vec n=(n_1,\cdots, n_k)\in \Z_{\geq 0}^k$ and   $\vec r=(r_1,\cdots, r_k)\in \Z_{\geq 0}^k$ we have set $\vec n^{\vec r}= n_1^{r_1}\cdots n_k^{r_k}$. We first describe the algebra of product cones.

\subsection{The  exponential summation and integration map on   product cones}

 Given a basis ${\mathcal B}_n=(e_1,\cdots, e_n )$  of  $\R^n$, let ${\mathcal P}_{{\mathcal B}_n}(\R^n)$ be the set of  {\bf product cones} $$\langle e_I\rangle  :=  \sum_{i\in I}  \R_{\ge 0}   e_i,\quad  I\subseteq [n]:=\{1,\cdots, n\},   $$
viewed  as subsets of $\R^n$. Extending this basis to a basis  ${\mathcal B}_{n+1}=(e_1,\cdots, e_{n+1} )$  of  $\R^{n+1}$, a product cone in $\R^n$ can be viewed as a  product cone in $\R^{n+1}$. Setting ${\mathcal P}_{{\mathcal B}_0}(\R^0)=\{0\}$, we define the  set
$${\mathcal P}_{\mathcal B}(\R^\infty):= \cup_{n=0}^\infty  {\mathcal P}_{{\mathcal B}_n}(\R^n)$$
of  product cones in $\R^\infty$ equipped with a basis ${\mathcal B}=\{e_n\,|\, n\in \N\}$.
Equivalently,
$$ {\mathcal P}_{\mathcal B} (\R^\infty)=\left \{ \cone{e_I}\,|\, I\subset\N \text{ finite }\right\} \text{ with } \cone{e_\emptyset}:=\{0\}.$$
It is $\Z_{\geq 0}$-filtered by the dimension ${\rm card}(I)$ (here card stands for cardinal) of the cone $\langle e_I\rangle$ and it is equipped with a partial product
  $$\langle e_I\rangle \bullet\langle e_J\rangle  :=\langle e_{I\cup J}\rangle  $$
  for two disjoint subsets $I, J $ of $\N$.
   This product is compatible with the  filtration   since the dimension of the product of two cones is the sum of their dimensions.

Unless otherwise specified, we take ${\mathcal B}$ to be the canonical basis of $\R^\infty$, in which  case we drop the subscript ${\mathcal B}$ in the notation.

The linear map  ${\rm ev}_0^{\rm reg}: {\rm Mer}_0^1(\C)\to \C$ extends multiplicatively to the subspace
${\rm Mer}_{\rm sep}(\C^\infty)$ of ${\rm Mer}_0(\C^\infty)$ spanned by separable functions
 \footnote{ ${\rm Mer}_{\rm sep}(\C^\infty)$   is isomorphic to the filtered vector space $\mathcal{F}:=\dirlim({\rm Mer}_0^1(\C))^n$ by assigning $f_1\ot \cdots \otimes f_n$ to $f_1(\e_1)\cdots f_n(e_n)$.
But the map does not respect the tensor product. For example, $f\ot f(\e_1,\e_e)=f(\e_1)f(\e_2)\neq f(\e_1)^2$. }:
$$ 
\left\{\left .f=\prod_{i\in I} f_i\,\right|\, I\subseteq \N \, {\rm finite},\, f_i\in {\rm Mer}_0(\C e_i) \right\}$$
by
\begin{equation}\mlabel{eq:evren}
{\rm ev}_{\vec 0}^{\rm ren}\left(\prod_{i\in I} f_i\right):=\prod_{i\in I} {\rm ev}_0^{\rm reg}(f_i).
\end{equation}

We refer the reader to \cite{GPZ1} for a more detailed study of renormalized (or generalized)   evaluators.
Note that even though the subspace ${\rm Mer}_{\rm sep}(\C^\infty)$ is closed under the multiplication of ${\rm Mer}_0(\C^\infty)$, the map ${\rm ev}_{\vec 0}^{\rm reg}$ on ${\rm Mer}_{\rm sep}(\C^\infty)$ resulting from Eq.~(\mref{eq:evren}) is multiplicative only for a product with disjoint variables. More precisely, for a  separable function  $f$ (resp. $g$) with variables in a finite subset $I$ (resp. $J$) of $\N$, with $I$ and $J$ disjoint, we have
$${\rm ev}_{\vec 0}^{\rm reg}(fg)={\rm ev}_{\vec 0}^{\rm reg}(f) {\rm ev}_{\vec 0}^{\rm reg}(g).$$
However, ${\rm ev}_0^{\rm reg}(\frac{1}{\e_1})\neq {\rm ev}_0^{\rm reg}(\frac{1+\e_1}{\e_1}){\rm ev}_0^{\rm reg}(\frac{1}{1+\e_1})$ even though $\frac 1{\e_1}=\frac {1+\e_1}{\e_1} \frac 1{1+\e_1}$.

The summation  map (\ref{eq:1dimsum}) and the integration map  (\ref{eq:1dimintegral}), which lie in  the linear space $ {\rm Mer}_0^1(\C)$   of meromorphic germs   in one complex variable with  a simple pole at zero, induce linear maps
on the linear space $\R\calp_\calb(\R^\infty)$ spanned by $\calp_\calb(\R^\infty)$  as follows
 $$\cals: \R\calp_\calb(\R^\infty) \longrightarrow \text{Mer}_{\rm sep} (\C^\infty), \quad \langle e_I\rangle\mapsto \prod_{i\in I}S(\e_i)$$
               and
 $${\mathcal I}: \R\calp_\calb(\R^\infty) \longrightarrow  \text{Mer}_{\rm sep} (\C^\infty), \quad \langle e_I\rangle\mapsto \prod_{i\in I}I(\e_i).$$
For simplicity and emphasizing   the dependence on the variables, we also use the notations
\begin{equation}
\cals_i: \R \cone{e_i} \to {\rm Mer}_0(\C \e_i), \quad \cone{e_i}\mapsto S(\e_i)
\mlabel{eq:si}
\end{equation}
and
\begin{equation}
\cali_i: \R\cone{e_i} \to {\rm Mer}_0(\C \e_i),\quad \cone{e_i} \mapsto I(\e_i).
\mlabel{eq:ii}
\end{equation}

The maps $\cals$ and $\cali$ are compatible with
the partial product on cones. Indeed, for two disjoint finite sets $I$ and $J$ of $\N$ we have
$${\mathcal S}\left(\langle e_{I\cup J}\rangle \right)= \prod_{i\in I \cup J}S(\e_i)= \left(\prod_{i\in I }S(\e_i)\right)
 \left(\prod_{j\in J }S(\e_j)\right)$$
  and similarly for the integration map.
 We further set ${\mathcal S}(\{0\})={\mathcal I}(\{0\})=1$.
Set $I=\{i_1,\cdots, i_J\}$ and $\e_I=(\e_{i_1},\cdots, \e_{i_J})$. As a consequence of Eq.~(\ref{eq:expsumzeta}) we have the following iterated Laurent expansion
\begin{equation}
\label{eq:expsumproductzeta} {\mathcal S}\left(\langle e_I\rangle \right)(\e_I)= \prod_{j=1 }^J \left(\frac{1}{\e_{i_j}}-\zeta(0)-\sum_{k_j=1}^{K_{j}} \zeta(-(2k_j-1))\,\e_{i_j}^{2k_j-1} + o(\e_{i_j}^{2K_j})\right)
\end{equation}

In order to "count" the number of lattice  points $"\left(\sum\limits_{\vec n\in \sum_{i\in I}\Z_{\geq 0}e_i} \vec n^{\vec 0}\right)"$ in the product cone    $  \langle e_I\rangle  $ we want to evaluate $  {\mathcal S} \left(\langle e_I\rangle \right) $ at $(\e_{i_1},\cdots, \e_{i_k})=\vec 0$.  Since  $  {\mathcal S} \left(\langle e_I\rangle \right)\in {\rm Mer}_{\rm sep}(\C^\infty)$
 a first guess is to assign the value
 \begin{equation}
 \label{eq:firstguess}
 {\mathcal  S}_{\vec 0}^{\rm ren}:={\rm ev}_{\vec 0}^{\rm ren} \circ  {\mathcal S},\end{equation}
 where ${\rm ev}^{\rm ren}_{\vec 0}$ is defined in (\ref{eq:evren}).
 This "renormalized value" at zero is multiplicative  as a result of the multiplicativity of ${\rm ev}_{\vec 0}^{\rm reg}$.
 Indeed, given two disjoint index sets $I$ and $J$, we have
  \begin{eqnarray*} {\mathcal  S}_{\vec 0}^{\rm ren}\left(\langle e_I\rangle \bullet \langle e_J \rangle \right)
  &=& {\rm ev}_{\vec 0}^{\rm reg}\circ \left( {\mathcal S} \left(\langle e_{I }\rangle \right) {\mathcal S}\left(\langle e_{ J}\rangle \right)\right)\\
 &=& \left({\rm ev}_{\vec 0}^{\rm reg} \circ \left( {\mathcal S}\left(\langle e_{I }\rangle \right)\right)\right) \cdot \left( {\rm ev}_{\vec 0}^{\rm reg}\circ  \left( {\mathcal S} \left(\langle e_{ J}\rangle\right)\right)\right) \\
    &=& {\mathcal  S}_{\vec 0}^{\rm ren}\left(\langle e_I\rangle\right) \cdot {\mathcal  S}_{\vec 0}^{\rm ren}\left(\langle e_J\rangle\right).
  \end{eqnarray*}

We shall now describe the underlying algebraic framework, which might seem   somewhat artificial  in the rather trivial product cone situation. However, on the one hand even in this simple situation it  is useful to control  the "polar part" which one needs to extract in order to define the finite part, on the other hand it offers a good  toy model to motivate otherwise relatively sophisticated techniques  which   can  be generalized  beyond   product cones, namely to general convex cones \cite{GPZ3}.

\subsection{A complement map on product cones}  Let us first recall the properties of the set complement map.

Let ${\mathcal P}_f(E)$ be the set of finite subsets of a given set $E$   equipped with the inclusion $\subseteq$ which defines a partial order compatible with the filtration of ${\mathcal P}_f(E)$ by the cardinal in the sense that $J\subseteq I$ implies $|J|\leq |I|$. For $I\in {\mathcal P}_f(E)$ let
$$\mathfrak{s}(I):=\{J\in \mathcal{P}_f(E)\,|\, J\subseteq I\} $$
be the set of subsets of $I$.

 The {\bf set complement map} assigns to any $I\subseteq E$ a map
$$ \begin{array}{rcl}
\complement_I: \mathfrak{s}(I)& \longrightarrow & \mathfrak{s}(I)\\
& &\\
J&\longmapsto& I\setminus J:= I\cap \overline J.
\end{array}$$
The complement $I\setminus J$ satisfies the following properties:
\begin{enumerate}
\item  {\bf Compatibility   with the partial order:} Let $I,J \in \mathcal{P}_f(E)$ be such that $ J\subseteq I$. Then
$$\text{for any } H\in {\mathcal P }_f(E) \text{ with } H\subseteq I\setminus J \text{ there exists unique } K\in  \mathcal{P}_f(E)  ;\, J\subseteq K\subseteq  I \text{ such that } H=I\setminus K.$$
\item  {\bf Transitivity:} Let $  I,J,K \in \mathcal{P}_f(E)$ be such that $K\subseteq J\subseteq I$. Then
    $$\left(I\setminus K\right)\setminus\left(J\setminus K\right)=I\setminus J$$
\item {\bf Compatibility   with the filtration:}  Let $I,J \in \mathcal{P}_f(E)$ be such that $ J\subseteq I$. Then
$${\rm card} (J)+ {\rm card} (I\setminus J)= {\rm card} (I ),$$
where card stands for the cardinality.
\end{enumerate}

The set complement map on $\ZZ _{\ge 0}$ induces a complement map on  the product cones. Let us first introduce some notations.   Faces of the product cone $ C:=\langle e_I\rangle $ are of the form $$F_J :=\langle e_J\rangle $$ with $J\subset I$, each of them defining  a cone with  faces  $F_{J^\prime}$ where $  J^\prime\subset J$.   The cone  $C$ therefore has  $2^{\vert I\vert}$ faces, as many as subsets of $I$.  The  set ${\mathcal F}\left(C\right) $   of faces of the cone $ C$  is equipped with a partial order
$$F^\prime\subset F \text{ if and only  if } F^\prime \text{ is a face of } F^\prime$$
or equivalently, $F_{J^\prime}\subset F_J$ if and only if $J^\prime\subset J.$  For    $F^\prime=F_{J^\prime}\subset  F=F_J$ we consider   the complement set $\overline{F^\prime}^F:=  F_{J\setminus J^\prime}$, which again defines an element of ${\mathcal F}\left(C\right)$ and hence a cone.  We   define the complement map
\begin{eqnarray}\label{eq:complementproductcone} {\mathcal F}\left(C\right)&\longrightarrow & {\mathcal F}\left(C\right)\nonumber\\
F_J&\longmapsto& {\overline F_J }^{C} =  F_{I\setminus  J },
\end{eqnarray}
which is an involution. As a consequence of the properties of the set complement map, it enjoys the following properties. Let $F\in {\mathcal F}\left(C\right)$.
\begin{enumerate}
\item {\bf  Compatibility with the partial order: } There is a one-to-one correspondence between the set of faces of  $C$ containing a given face $F$ and the set of faces of the cone ${\overline  F}^{C}$; for any face $H$ of ${\overline  F}^{C}$, there is a unique face $G$ of $C$ containing $F$ such that $H= \overline F^G$.
\mlabel{it:com10}
\item {\bf Transitivity: } ${\overline  F}^{C}= \left(\overline{   F^\prime}^{F} \right)^{\overline { F^\prime}^{C}}$ if $F^\prime\subset F $.
\mlabel{it:tra0}
\item   {\bf Compatibility with the filtration by the dimension:}  For any face $F$ of $C$ we have ${\rm dim}(F) + {\rm dim}\left({\overline  F}^{C}\right)={\rm dim }(C) $.
\end{enumerate}

There is an alternative description of this complement map which is  generalizable to general convex cones, those not necessarily obtained as product cones.
For this we observe that for a face
$   F=F_J=\langle e_J\rangle $ of a product cone $C= \langle e_I\rangle$, we have \begin{equation}
\label{eq:perpcomplement}\overline{F }^C=  F_{I\setminus J }=\pi_{F^\perp}(C),
\end{equation}
where $F^\perp$ denotes the orthogonal space of the linear space spanned by the cone $F$ in the linear space  $\langle C\rangle $  spanned by $C$, and
$\pi_{F^\perp}$ is the orthogonal projection from $\langle C\rangle$ onto $F^\perp$. Here the orthogonal projection is taken with respect to the canonical Euclidean product on $\R^\infty$. Eq.~(\ref{eq:perpcomplement}) follows from the fact that $\pi_{F^\perp}(e_i)$ is $0$ for $i\in J$ and $e_i$ for $i\notin J$.

\subsection{Algebraic Birkhoff factorization on product cones}

For each $i\geq 1$, the algebra $\cala_i:={\rm Mer}_0(\C\e_i)$ is naturally isomorphic to $\cala:={\rm Mer}_0(\C)$ as the algebra of Laurent series. Following the minimal subtraction scheme we have a direct sum ${\mathcal A}_i={\mathcal A}_{i,+}\oplus  {\mathcal A}_{i,-}$ of two subalgebras ${\mathcal A}_{i,\pm}:= \pi_\pm\left({\mathcal A}_i\right)$. The maps ${\mathcal S}_i:\R\langle e_i\rangle\longrightarrow {\rm Mer}_0(\C\e_i)$ defined in Eq.~(\mref{eq:si}) split   accordingly ${\mathcal S}_i={\mathcal S}_{i,+}+ {\mathcal S}_{i,-}$
into a sum of maps ${\mathcal S}_{i,\pm}: \R\langle e_i\rangle \longrightarrow {\mathcal A}_{i,\pm}$.

We next consider separable functions in several variables. For disjoint subsets $I, J\subseteq \N$, define
$$ \cala_{I,+,J,-}:=\left(\prod_{i\in I}\cala_{i,+}\right)\,\left(\prod_{j\in J}\cala_{j,-}\right).$$
Also denote $\cala_{I,+,J,-}=\cala_{I,+}$ if $J=\emptyset$.
Then we have
$$ \cala_I:=\prod_{i\in I} \cala_i= \oplus_{I_1\sqcup I_2=I}\cala_{I_1,+,I_2,-}\,.$$
Further denote
$$ \cala_{I,+}:=\prod_{i\in I}\cala_{i,+}, \quad
\cala_{I,-}:=\prod_{J\subsetneq I} \cala_{J,+,I\backslash J,-}$$
and
$$\cala_\infty:=\dirlim \cala_I, \quad \cala_{\infty, \pm}:=\dirlim \cala_{I,\pm}.$$
Then we have
$$ \cala_\infty=\cala_{\infty,+}\oplus \cala_{\infty, -}.$$
$\cala_{\infty,+}$ is a subalgebra but not $\cala_{\infty,-}$. For example, $\cala_{1,+}\cala_{2,-}$ and $\cala_{1,-}$ are in $\cala_{\infty,-}$, but their product is not.
This should give the decomposition for us to use. It is the restriction of the decomposition on meromorphic functions with linear poles given in~\mcite{GPZ4}.

As we saw in the previous section, since $  S_-=I$, such a splitting   $S=S_++ S_-$ of the exponential sum   corresponds to  the Euler-Maclaurin formula $S= \mu+ I$ with $\mu= S_+$.

We are  now ready to generalize the minimal subtraction scheme and the Euler-Maclaurin formula to product cones. In the product cone framework, the minimal subtraction scheme generalizes  to an elementary form of  the more general algebraic Birkhoff factorization    on coalgebras which we shall describe in the next section.

\begin{prop} Given a product cone $C=\langle e_I\rangle$ in  ${\mathcal P}(\R^\infty)$ the map $ {\mathcal  S}\left(C\right)$  extends to a  meromorphic map   in  $ {\rm Mer}_{\rm sep}(\C^\infty)$   with simple poles on the intersections of hyperplanes $\cap_{j\in J}\{\e_j=0\}$ corresponding to faces $F_J=\langle e_I\rangle, J\subseteq I$ of the cone $C$.  It   decomposes as
\begin{eqnarray} {\mathcal  S}\left(C\right)
&=&  \sum_{F\in {\mathcal F}\left(C\right)}    {\mathcal  S}_+({\overline F^C}) {\mathcal  S}_-(F)    \quad \text{\rm(algebraic Birkhoff factorization)}\label{eq:prodBHF} \\
&=&  \sum_{F\in {\mathcal F}\left(C\right)  }  \mu(\overline F^C)  {\mathcal  I}(F)  \quad\text{\rm(Euler-Maclaurin formula)} \label{eq:prodEML},
\end{eqnarray}
where for a face $F = \langle e_K\rangle $ of the cone $ C$, $\overline F^C=F_{I\setminus K}$ is the ``complement  face" defined in the previous paragraph and where  we have set
$${\mathcal S}(F)=  {\mathcal S} \left(\langle e_K\rangle \right):=\prod_{i\in K}S(\cone{e_i}),\quad  {\mathcal  S}_\pm (F) =   {\mathcal S}_{\pm}\left(\langle e_K\rangle \right):=\prod_{i\in K}\cals_{i,\pm}.$$
\mlabel{pp:sabf}
\end{prop}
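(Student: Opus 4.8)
The plan is to reduce everything to the one-variable identity $S(\e_i)=\cals_{i,+}+\cals_{i,-}$ on each axis, together with the fact that $\cals_{i,-}=\cali_i$, and then expand the product over $i\in I$. Concretely, since $\cals(C)=\prod_{i\in I}S(\e_i)$ by definition, I would first substitute the splitting $S(\e_i)=\cals_{i,+}+\cals_{i,-}$ for each $i\in I$ and distribute:
\begin{equation}
\cals(C)=\prod_{i\in I}\bigl(\cals_{i,+}+\cals_{i,-}\bigr)=\sum_{I_2\subseteq I}\Bigl(\prod_{i\in I\setminus I_2}\cals_{i,+}\Bigr)\Bigl(\prod_{j\in I_2}\cals_{j,-}\Bigr).
\end{equation}
The inner sum is indexed by subsets $I_2\subseteq I$; reparametrizing by the face $F=F_{I_2}=\langle e_{I_2}\rangle\in\mathcal{F}(C)$ and using that $\overline{F}^C=F_{I\setminus I_2}$ (the complement-face defined in the previous paragraph), the first product is exactly $\cals_+(\overline F^C)=\prod_{i\in I\setminus I_2}\cals_{i,+}$ and the second is $\cals_-(F)=\prod_{j\in I_2}\cals_{j,-}$. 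This yields \eqref{eq:prodBHF} verbatim. Then \eqref{eq:prodEML} follows by invoking $\cals_{j,-}=\cali_j$ (the one-dimensional fact, Eq.~\eqref{eq:piplusS}, that $\pi_-\circ S=I$) and $\cals_{i,+}=\mu$ (i.e.\ $\pi_+\circ S=\mu$), so that $\cals_-(F)=\cali(F)$ and $\cals_+(\overline F^C)=\mu(\overline F^C)$ term by term.

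The remaining assertion is the meromorphy and pole structure: $\cals(C)$ extends to an element of ${\rm Mer}_{\rm sep}(\C^\infty)$ with only simple poles along the coordinate subspaces $\bigcap_{j\in J}\{\e_j=0\}$ for $J\subseteq I$. This I would get directly from Eq.~\eqref{eq:expsumproductzeta}: each factor $S(\e_i)=\frac1{\e_i}-\zeta(0)-\cdots$ lies in ${\rm Mer}_0^1(\C\e_i)$ with a single simple pole at $\e_i=0$, and a finite product of such separable germs is by definition in ${\rm Mer}_{\rm sep}(\C^\infty)$; its polar locus is the union of the hyperplanes $\{\e_i=0\}$, $i\in I$, and the higher-codimension strata $\bigcap_{j\in J}\{\e_j=0\}$ arise exactly from the product of the polar parts $\prod_{j\in J}\e_j^{-1}$, which is what the face $F_J$ contributes. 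Since no factor has a pole of order $>1$ and the variables are distinct, all these poles are simple.

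I do not expect a genuine obstacle here: the statement is essentially a bookkeeping identity, the content being the translation of the combinatorics of $\prod_{i\in I}(a_i+b_i)$ into the language of faces and complement faces. The one point that deserves care is making the dictionary precise, namely that the map $I_2\mapsto F_{I_2}$ is a bijection from subsets of $I$ onto $\mathcal{F}(C)$ and that under this bijection $I\setminus I_2$ corresponds to $\overline{F_{I_2}}^C$, which is precisely the content of Eq.~\eqref{eq:complementproductcone} and the involution property recorded there; once this is in place the two displayed formulas drop out of the distributed product with no further work. A secondary (purely notational) subtlety is that $\cals_\pm(F)$ has been defined as a product of the maps $\cals_{i,\pm}$ rather than of their values, so one should read all products above as products in ${\rm Mer}_{\rm sep}(\C^\infty)$ after applying the maps to the relevant one-dimensional cones; with that convention the proof is a one-line expansion followed by the substitutions $\cals_{i,-}=\cali_i$, $\cals_{i,+}=\mu$.
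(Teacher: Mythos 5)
Your proposal is correct and follows essentially the same route as the paper's own proof: expand $\prod_{i\in I}\left(\cals_{i,+}+\cals_{i,-}\right)$, reindex the resulting sum over subsets $J\subseteq I$ by faces $F_J$ with complement face $\overline{F_J}^C=F_{I\setminus J}$, and then substitute $\cals_{i,-}=\cali_i$, $\cals_{i,+}=\mu$ to pass from the Birkhoff form to the Euler-Maclaurin form. Your additional remarks on the simple-pole structure and on reading $\cals_\pm(F)$ as products of values in ${\rm Mer}_{\rm sep}(\C^\infty)$ are consistent with, and slightly more explicit than, the paper's argument.
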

\begin{remark}   Eq.~(\ref{eq:prodBHF}) which arises from the one-dimensional minimal substraction scheme  can be viewed as a higher dimensional minimal subtraction scheme and Eq.~(\ref{eq:prodEML})   as a higher dimensional  Euler-Maclaurin formula.  When $k=1$ they yield back the one dimensional  minimal subtraction scheme and the Euler-Maclaurin formula applied to $S(\e)$.
\end{remark}
\begin{proof} Let $C  =\langle e_I\rangle$ for some finite subset $I$ in $\N$. We have
\begin{eqnarray*} {\mathcal  S}\left(C\right) &=&\prod_{i\in I}  {\mathcal  S}_i\left(\langle e_i\rangle \right) \quad\text{(a product of sums)}\nonumber\\
& = &\prod_{i\in I} \left(  {\mathcal  S}_{i,+}+ {\mathcal  S}_{i,-}\right)\left(\langle e_i\rangle \right)\quad \text{(a sum of products)}\label{eq:productsums}\\
&=&  \sum_{J\subset I}\left( \prod_{j\in I \setminus J} {\mathcal  S}_{j,+}\left(\langle e_j\rangle \right)\right)\,\,\left(\prod_{j\in J}{\mathcal  S}_{j,-}\left(\langle e_j\rangle \right)\right)\label{eq:sumproducts}\\
&=&  \sum_{F\in {\mathcal F}\left(C\right)} {\mathcal  S}_+({\overline F^C}) \, {\mathcal  S}_-(F)   \quad\label{eq:prodBHFcone} \\
&=&  \sum_{F\in {\mathcal F}\left(C\right)  } \mu(\overline F^C) \,  {\mathcal  I}(F). \quad \label{eq:prodEMLcone}
\end{eqnarray*}
\end{proof}

The fact that the algebraic Birkhoff factorization   (\ref{eq:prodBHF})  and the Euler-Maclaurin formula  (\ref{eq:prodEML})  coincide for product cones is a consequence of Eq.~(\ref{eq:EML=MS}) which shows how,  in the one dimensional case,  the minimal subtraction scheme and the Euler-Maclaurin formula   coincide for the exponential sum.
From formula (\ref{eq:expsumproductzeta}) we  derive a Taylor expansion at zero of $ {\mathcal S}_+\left(\langle e_I\rangle \right)$
\begin{equation}
\label{eq:expsumplusproductzeta} {\mathcal S}_+\left(\langle e_I\rangle \right)(\e_I)= \sum_{J\subset I }  \left( \zeta(-k_J)\,\e_{J}^{k_J} + o(\e_{J}^{k_J})\right)
\end{equation}
where for $J=\{i_1,\cdots, i_j\}\subset I$ and any multiindex $k_J=(k_{i_1},\cdots,k_{i_j} )\in \Z_{\geq 0}^j$  we have set $\e_J^{k_J}= \prod_{l=1}^j\e_{i_l}$, whose coefficients
\begin{equation}
\label{eq:productmutltizeta}
\prod_{j\in J}\zeta(-k_{i_j})\end{equation}
are the so called {\bf (renormalized) product zeta values} at non-positive integers.

The renormalized discrete sum in Eq.~(\ref{eq:firstguess}),  which can   be rewritten as
$$   {\mathcal  S}_{\vec 0}^{\rm reg}={\rm ev}_{\vec 0}  \circ  {\mathcal S}_+= {\rm ev}_{\vec 0}  \circ \mu, $$
is obtained from evaluating at zero the  renormalized ``holomorphic part" ${\mathcal S}_+$ of the exponential sum
derived from the algebraic Birkhoff factorization  (see (\ref{eq:prodBHF}))   or equivalently  from evaluating at zero    the   renormalized interpolator $\mu$  derived from the Euler-Maclaurin formula \ (see (\ref{eq:prodEML})).

We have gone a long way around to recover our first guess (\ref{eq:firstguess}). This approach using Birkhoff-Hopf factorization,
even if somewhat artificial in the case of product cones, is nevertheless  useful  for it can be generalized to all rational polyhedral convex (lattice) cones
\cite{GPZ3} a case which will be briefly discussed at the end of the paper.

 \section{From complement maps to  coproducts} We now set up an algebraic framework  to derive an  algebraic Birkhoff factorization   from a complement map in a more general set up than the   specific example of product cones   which served as a toy model in the previous section.

 \subsection{Posets} Let $(\mathcal{P}, \leq)$ be a poset, i.e. a   set $\mathcal{P}$ together with a partial order $\leq $. We do not assume that the poset is finite.

 The poset is  {\bf filtered} if $ \mathcal{P}=\bigcup_{n=0}^\infty  \mathcal{P}_n$ with $ \mathcal{P}_n\subset  \mathcal{P}_{n+1}$.  The degree of $A\in \mathcal{P}$ denoted by $|A|$ is  the smallest integer $n$ such that $A\in \mathcal{P}_{n}$. The partial order $\leq$ is compatible with the filtration if $A\leq B$ implies $\vert A\vert \leq \vert B\vert.$

We call  a  filtered poset $\mathcal{P}$ {\bf connected} if $\calp$ has a least element $1$, called the {\bf bottom} of $\calp$, and we have ${\mathcal P}_0=\{1\}$.

\begin{ex}
For a given set $X$ (finite or infinite), the set $\mathcal{P}_f(X) $    of finite subsets of $X$ equipped with the inclusion relation is a    poset $(\mathcal{P}_f(X), \subseteq)$  filtered by the cardinal.
It is connected since $\emptyset$ is the only subset of cardinal $0$ and $\emptyset\subseteq A$ for any $A\in \mathcal{P}_f(X) $.
\end{ex}

\begin{ex}
This example can be regarded as a special case of the previous example but  its  pertinence for convex cones  justifies that we treat it separately. The set $\mathcal{P}(\R^\infty)=\cup_{n=0}^\infty \mathcal{P}(\R^n)$  of  closed product cones  described in the previous section  is filtered by the dimension and  partially ordered  by the partial order on the index sets. Equivalently,   $F\leq C$ if the product cone $F$ is a face of  the  product cone $C$. $\mathcal{P}(\R^\infty)$ is connected since  the zero cone $\{0\}$  is the only cone of dimension $0$ and $\{0\}\leq C$ for any $C\in \mathcal{P}(\R^\infty)$ as $0$ is a vertex of any product  cone.
\end{ex}

\begin{ex}
A closed (polyhedral) convex cone  in $\R^n$ is the convex set
\begin{equation}
\langle v_1,\cdots,v_k\rangle :=\R _{\geq 0}v_1+\cdots+\R_{\geq 0}v_k,
\mlabel{eq:cone}
\end{equation}
where $v_i\in \R^n$, $i=1,\cdots, k$.

Let $\mathcal{C}(\R^\infty)=\cup_{n=0}^\infty \mathcal{C}(\R^n) $  be the set of closed polyhedral convex cones    in $ \R^\infty $   see \cite{GPZ2}. We have $ \mathcal{P}(\R^\infty)\subset \mathcal{C}(\R^\infty)$. It is  filtered by the dimension $\vert C\vert$ of the cone  $C $  defined as the dimension of the linear subspace spanned by $C$. A  face  of a cone $C=\langle v_1,\cdots,v_k\rangle$   is a subset of the form $\cone{v_1,\cdots,v_k}\cap \{u=0\}$, where $u:\R^n\to \R$ is a linear form which is non-negative on $\langle v_1,\cdots,v_k\rangle$. A face $F$ of a cone is itself  a cone and we equip $\mathcal{C}(\R^\infty)$   with the following partial order which extends the partial order on product cones:
$$F\leq C \text{ if and only if } F\text{ is a face of C},$$
which   is compatible with the filtration since
$ F\leq  C$ implies ${\rm lin}(F)\subset {\rm lin}(C)$ which implies $ \vert F\vert \leq \vert C\vert.$
The filtered poset $\left(\mathcal{C}(\R^\infty),\leq \right)$ is connected since the zero cone $\{0\}$  is the only cone of dimension $0$ and $\{0\}\leq C$ for any $C\in \mathcal{C}(\R^\infty)$ since $0$ is a vertex of any cone pointed at zero.
\end{ex}
\begin{ex} \label{ex:tree}
A {\bf planar rooted tree} (see e.g. \cite{CK,F,M})  is a finite connected directed graph, without cycles, together with an embedding of it into the plane, such that only one vertex (the root) has outgoing edges only. We consider the   set  $\mathcal{T} $ of   planar rooted trees filtered by the number of vertices. Concatenations of trees give rise to forests.

An elementary cut on a tree is a cut on some edge of the tree and an {\bf admissible cut} on a tree consists of elementary cuts on some  edges of the tree such that any path starting from the root contains at most one of them. For such a cut $c$, the tree    $R^c(\mathfrak t )$
which contains the root of $\mathfrak t$ is called the trunk of the tree,
  and the product $P^c(\mathfrak t )$  of the remaining
trees, which is a forest,  is called the crown.
We define a partial order on trees by   $\mathfrak t^\prime\leq  \mathfrak t$ (we say $\mathfrak t^\prime$ is a subtree of $\mathfrak t$) if there is an admissible cut $c$ such
that $\mathfrak t^\prime=P^c(t )$.  It is compatible with the filtration since the subtree has fewer vertices than the original tree.
The filtered poset $\left(\mathcal{T},\leq \right) $ is connected since the empty tree is the only tree without vertices and it is clearly a subtree of any tree.
\end{ex}

\begin{ex}\label{ex:graph}   A {\bf Feynman graph} (see e.g. \cite{CK,M}) is a (non-oriented, non-planar) graph with a finite
number of vertices and edges. We shall assume that the edges (internal or external) are of  some given type which depends on the quantum field model we are considering, see e.g. \cite{M}.
A {\bf one-particle irreducible graph} (1PI graph) is a connected graph which remains connected when we cut an internal edge. The residue of a connected graph is the graph left over after shrinking all internal edges to a point.

The set $ \mathcal{F}  $ of  Feynman graphs   is filtered by the loop number
$L:= I-V+1$ where $I$ is the number of internal edges and $V$ is the number of vertices of a given graph.

For a connected graph $F$ in $\mathcal{F}$, we write $G\leq F$ if $G$ is  a subgraph of $F$, which should be $1$ PI if $F$ is $1$ PI.
This partial order is compatible with the filtration. However, the poset $\left( \mathcal{F},\leq\right)  $   is not connected since there are many graphs with zero loop number.
\end{ex}

\subsection{Complement maps on posets}
\begin{defn}\label{defn:complementposets} Let $(\mathcal{P}, \leq)$ be a poset such that for any $E\in {\mathcal P}$
\begin{equation}
 \label{eq:SE}
 \mathfrak{s}(E):=\{A\in \mathcal{P}\,|\, A\leq E\}  \end{equation} is   a finite set. A {\bf complement  } map on  $\calp$ assigns to any element  $ E\in {\mathcal P}$    a map
 \begin{eqnarray*}
\complement_E: \mathfrak{s}(E)& \longrightarrow & \mathcal{P}\\
 A&\longmapsto & E\backslash A
 \end{eqnarray*}
 satisfying the following properties
\begin{enumerate}
 \item {\bf Compatibility  with the partial order:} Let $A,C $ in $\mathcal{P}$ be such that $A\leq C$. Then
$$ \mathfrak{s}( C\backslash A)=\{B\backslash A\,|\, A\leq B\leq C\}.$$
\mlabel{it:ccom1}
 \item {\bf Transitivity:} Let $A, B, C$ in $ \mathcal{P}$ be such that $A\leq B\leq C$ \delete{and $|\mathfrak{s}(C)|<\infty$}. Then
 $$\left(C\backslash A\right)\backslash \left(B\backslash A\right)=C\backslash B.$$
\mlabel{it:ccom2}
\item {\bf  Compatibility with the filtration:}  Assume that the poset is filtered: ${\mathcal P}=\cup_{n\in \N } {\mathcal P}_n$. Then the complement map is   compatible with the  filtration in the sense that
    $$A\leq C\Longrightarrow \left\vert  C\backslash A\right\vert= \left\vert  C \right\vert-\left\vert   A\right\vert.$$
\mlabel{it:ccom3}
\item {\bf  Compatibility with the bottom:}  Assume that the poset is connected and let ${\mathcal P}_0=\{1\}$. Then
$$C\backslash 1= C\quad \text{for all } C\in {\mathcal P}.$$
\mlabel{it:ccom4}
\end{enumerate}
Condition (d) is obviously satisfied by previous examples of complement maps.
\end{defn}
\begin{remark}\label{rk:complementposets} If  the poset is connected, it follows from Condition (\ref{it:ccom3}) that  $C\backslash C=\{1\}$  for any $C\in {\mathcal P}$ since
 $$\left\vert C\backslash C\right\vert=\vert C\vert-\vert C\vert=0\Longrightarrow C\backslash C=\{1\}.$$
\end{remark}
Note that by (\mref{it:ccom1}), from $|\mathfrak{s}(C)|<\infty$ we have $|\mathfrak{s}(C\backslash A)|<\infty$ and $B\backslash A \in \mathfrak{s}\left(C\backslash A\right)$. Thus the expressions in (\mref{it:ccom2}) are well-defined.

\begin{ex}  Let $E$ be a   set.  For  $X\in \mathcal{P}_f(E),$  the complement set map:
\begin{eqnarray*}
\mathcal{P}_f(X)&\longrightarrow& \mathcal{P}_f(X)\\
Y        &\longmapsto    & X\setminus Y:=X\cap \overline{Y}
\end{eqnarray*} defines a complement map compatible with the filtration by the dimension.
\end{ex}

\begin{ex}\label{ex:complcone}As we saw in the previous section, the set complement map on $\ZZ _{\ge 0}$ induces a complement map on product cones which we recall here for convenience. Given a product cone $\langle e_I\rangle$ and a subset $J\subseteq I$, the map $\langle e_J\rangle\longmapsto  \langle e_{I\setminus J}\rangle$ defines a complement map on $  \mathcal{P}(\R^\infty)$ compatible with the filtration by the dimension of the cone.
\end{ex}

\begin{ex}  Let $E$ be a   separable Hilbert vector space equipped with a countable orthonormal basis $(e_1,\cdots, e_n,\cdots)$.  For any $I\subseteq \N$ we define   the set
$${\mathcal V}(I):=\{\lin( e_J)\,|\, J\in {\mathcal P}_f(I)\}$$
of finite dimensional vector subspaces  of $E$ spanned by  basis vectors indexed by finite subsets $J$ of $I$. The set ${\mathcal V}(\N)$ is equipped with a partial order given by the inclusion on the  index sets which is compatible with the filtration given by the cardinal of the index set. The map                                                                                                                                                                                                \begin{eqnarray*}                                                                                                                                    \mathcal{V}(I)&\longrightarrow& \mathcal{V}(I)\\                                                                                                                    \lin (e_J)         &\longmapsto    & \lin(e_I)\setminus \lin(e_J):=\lin(\pi_{\lin(e_J)^\perp}(e_I)\rangle= \lin( e_{I\setminus J})
                                                                                                                                                                  \end{eqnarray*}
defines a complement map compatible with the filtration by the dimension. Here $\pi_{\lin(e_J)^\perp}$ stands for the orthogonal projection onto the orthogonal complement to the linear space $\lin(e_J)$. The notation $\pi_{\lin(e_J)^\perp}(e_I)$ means that the projection  is applied to each basis vector indexed by an element of $I$.
                                                                                                                                                                                                \end{ex}
\begin{ex} This orthogonal complement map on linear spaces also induces the  complement map on product cones as can be seen from Eq.~(\ref{eq:perpcomplement}).
\end{ex}

\subsection{A complement map on  convex cones}
We now generalize the complement map built on product cones to general convex cones by means of   an orthogonal projection.

Let  $\mathcal{F}(C) $ be  the set of all faces of a  convex  cone $C\subseteq \R^k$.
We  borrow the following concept from \cite{BV} (see also \cite{GPZ3}) which we refer the reader to
for further details. The {\bf transverse cone}  to
$F\in \mathcal{F}(C)$ is   \begin{equation}\label{eq:transcone}t(C,F):=(C+\lin(F))/\lin(F),\end{equation} (where $\lin $ stands for the linear span) which we identify to the cone  in
 ${\mathcal C}\left(\R^\infty\right)$ defined by the projection $\pi_{F^\perp}(C)$ of $C$ onto the orthogonal complement\footnote{Our approach, like the one of Berline and Vergne in \cite{BV}, actually requires a choice a rational lattice  which consists of a pair     built from a cone   and a rational  lattice    in the linear space spanned by the cone. We refer the reader to \cite{GPZ3} for a detailed description. }
 $  {\rm lin} (F)^\perp $ in $\lin (C)$ for the canonical scalar product on $\R^\infty$.
\begin{ex} The transverse cone to a face $F=\langle e_J\rangle$ of a product cone $\langle e_I\rangle$ is the cone $\langle e_{I\setminus J}\rangle$, which corresponds to the  transverse cone $t\left( \langle e_I\rangle, \langle e_J\rangle\right). $
\end{ex}

\begin{ex}The  transverse cone to the face $F=\langle e_1+e_2\rangle$ in the cone $C=\langle e_1, e_1+e_2\rangle$ is the cone $t(C,F)=\langle  e_1-e_2 \rangle$.
Note that $t(C,F)$ is not a face of $C$.
 \end{ex}
 \begin{lemma}\label{lem:transversecone}    The map
 \begin{eqnarray*}
 \mathcal{F}(C)&\longrightarrow &\mathcal{C}(\R^\infty)\\
  F &\longmapsto& t(C,F)
  \end{eqnarray*}
  which to a face $F$ of a cone $C$ assigns the {\rm transverse cone} $t(C,F)$, is a complement map. More precisely, it enjoys the following properties.
  \begin{enumerate}
  \item {\bf  Compatibility with the partial order: }
  The set of faces of the cone $t(C,F)$ equals $\{t(G,F)\,|\, G \text{ a face of } C \text{ containing } F\}$.
   \label{it:com1}
  \item {\bf Transitivity:} $t(C,F)=t\left( t(C,F^\prime), t(F, F^\prime)\right)$ if $F^\prime$ is a face of $F$.
  \label{it:tra}
   \item {\bf  Compatibility with the dimension  filtration: }  ${\rm dim}(C)={\rm dim} (F)+{\rm dim}\left(t(C,F)\right)$ for any face $F$ of $C$.
       \label{it:com2}
       \item {\bf  Compatibility with the bottom:}  ${\mathcal C}$ is connected since  ${\mathcal C}_0$ is reduced to $1:=\{0\}$ and for any cone $C$ we have $t(C,\{0\})= C$.
\mlabel{it:com4}
\end{enumerate}
\end{lemma}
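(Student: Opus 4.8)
The plan is to verify the four listed properties of the transverse-cone map $F \mapsto t(C,F)$ directly from the definition $t(C,F) = (C + \lin(F))/\lin(F)$, using the identification with $\pi_{F^\perp}(C)$, and reducing each statement to a routine fact about the face lattice of a polyhedral cone together with the behaviour of faces under quotienting by a linear subspace. Throughout I would exploit the standard correspondence: if $F$ is a face of $C$, then the faces of $C$ containing $F$ are in inclusion-preserving bijection with the faces of the quotient cone $t(C,F)$, via $G \mapsto (G+\lin(F))/\lin(F)$. I would quote or quickly re-prove this from the supporting function description of faces — a face of $C$ is $C \cap \{u = 0\}$ for $u$ a linear form non-negative on $C$, and the forms vanishing on all of $F$ are precisely those that descend to the quotient space $\lin(C)/\lin(F)$, hence to the cone $t(C,F)$.

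First I would establish (a), compatibility with the partial order: given a face $F$ of $C$, a subset $D \subseteq t(C,F)$ is a face if and only if $D = t(G,F)$ for some face $G$ of $C$ with $F \leq G$. The forward direction uses that a face of $(C+\lin F)/\lin F$ is cut out by a linear form on $\lin(C)/\lin F$, which pulls back to a form on $\lin C$ vanishing on $\lin F$ and non-negative on $C$; its zero set in $C$ is a face $G$ containing $F$, and one checks $t(G,F) = D$. The reverse direction is immediate since $t(G,F)$ is the intersection of $t(C,F)$ with the descended form. For (c), compatibility with the dimension filtration, I would simply note $\dim t(C,F) = \dim\big(\lin(C+\lin F)/\lin F\big) = \dim \lin C - \dim \lin F$ (using that $F$, being a face of $C$, has $\lin F \subseteq \lin C$), which is $\dim C - \dim F$. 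For (d), compatibility with the bottom: $\{0\}$ is the unique zero-dimensional cone and is a face of every pointed cone, and $t(C,\{0\}) = (C + \{0\})/\{0\} = C$, so $\mathcal{C}(\R^\infty)$ is connected with bottom $1 = \{0\}$ and the complement of the bottom is the identity.

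The main work is property (b), transitivity: if $F' \leq F \leq C$, then $t(C,F) = t\big(t(C,F'),\, t(F,F')\big)$. I would prove this by unwinding both sides as quotients of linear spaces. The right-hand side is $\big((C+\lin F')/\lin F' + \lin\big((F+\lin F')/\lin F'\big)\big) \big/ \lin\big((F+\lin F')/\lin F'\big)$. Using that $\lin\big((F+\lin F')/\lin F'\big) = (\lin F + \lin F')/\lin F' = \lin F/\lin F'$ (since $\lin F' \subseteq \lin F$ as $F'$ is a face of $F$), and that $(C+\lin F')/\lin F' + \lin F/\lin F' = (C+\lin F)/\lin F'$, the right-hand side collapses to $\big((C+\lin F)/\lin F'\big)\big/\big(\lin F/\lin F'\big)$, and by the third isomorphism theorem for vector spaces this is canonically $(C+\lin F)/\lin F = t(C,F)$. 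One must be a little careful that these isomorphisms of ambient linear spaces carry the cone $C+\lin F$ to the cone $C+\lin F$ compatibly — but this is automatic since at each step we are taking the image of a fixed convex cone under a linear quotient map, and the composite quotient map $\lin C \to \lin C/\lin F$ agrees on the nose with the two-step quotient.

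The step I expect to be the genuine obstacle is making (a) fully rigorous, specifically the claim that \emph{every} face of the quotient cone $t(C,F)$ arises from a face of $C$ containing $F$: this requires knowing that faces of the polyhedral cone $\pi_{F^\perp}(C)$ are exactly the images of faces of $C$ that contain $\ker(\pi_{F^\perp}|_C)$, which in turn rests on polyhedrality (the image of a polyhedral cone under a linear map is polyhedral, and face pullback behaves well). I would either cite this from \cite{BV} or \cite{GPZ3}, as the excerpt explicitly permits, or give a short argument via the supporting-hyperplane characterization of faces. Once (a) is in hand, (b), (c), (d) are essentially bookkeeping, and the verification that the four items match Definition~\ref{defn:complementposets} of a complement map (with $\mathfrak{s}(C) = \mathcal{F}(C)$ finite because a polyhedral cone has finitely many faces) is then immediate.
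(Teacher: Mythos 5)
Your proposal is correct and takes essentially the same route as the paper: item (a) via supporting linear forms that vanish on $F$ and hence descend to (and lift from) the quotient, (c) from orthogonality/dimension count of $\lin(F)$ inside $\lin(C)$, (d) trivially, and your quotient-space computation with the third isomorphism theorem for (b) is the paper's ``transitivity of the orthogonal complement'' argument rewritten in quotient language under the identification $t(C,F)\cong\pi_{F^\perp}(C)$. The only difference is that the paper additionally proves injectivity of $G\mapsto t(G,F)$ (a genuine one-to-one correspondence, used implicitly later for coassociativity of the coproduct), whereas your plan only establishes the set equality, which is all the literal statement of (a) requires.
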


\begin{proof}
(\ref{it:com1}) Assume that $F$ is defined by the linear form  $u_F$ on $\R^\infty$, i.e.,
$$F= \{v\in C\ | \ \langle u_F,v \rangle=0\}.$$
Let $G$ be any face of $C$ containing $F$ that is defined by a linear form $u_G $ on $\R^\infty$, then $u_G|_F=0$.
Since a  linear form $u $ on $\R^\infty$ with $u|_F=0$ induces a linear form $u$ on $\linf$, we can view $u_G$ as a linear form on $ \linf$. It therefore defines a face $t(G,F)$ of $t(C,F)$. We can therefore define a map
$$t(\bullet, F):\{\text{faces  of } C \text{ containing }  F\}\to \{\text{faces of } t(C, F)\}
$$
$$G\mapsto t(G,F)=t(C,F)\cap \{v\in \R^\infty\ | \ \langle u_G,v \rangle=0\}.
$$

To check the bijectivity of $t(\bullet, F)$, we first note that any face of $t(C,F)$ is defined by some linear form  $u$ on $ \linf$ which can be viewed as a linear form on $\R^\infty$ that vanishes on ${\rm lin(F)}$. Hence $u$ defines a face $G$ of $C$ containing $F$. Thus $t(\bullet, F)$ is surjective.

For two different faces $G_1$, $G_2$ containing $F$  defined by linear forms $u_1$, $u_2 $ on $\R^\infty$, there are  vectors $v_1$ in $G_1$ and $v_2$ in $G_2$ such that $\langle u_1,v_2\rangle >0$ and $\langle u_2,v_1\rangle >0$. Thus $t(G_1,F)$ and $t(G_2,F)$ are different since the image of $v_1$ is not in $t(G_2, F)$ and the image of $v_2$ is not in $t(G_1,F)$. Hence the map $t(\bullet, F)$ is one-to-one.
\smallskip

\noindent
(\ref{it:tra}) The linear space $\lin(t(C,F))$ spanned by the transverse cone is the orthogonal space $\lin(F)^{\perp_{\lin(C)}}$ in $\lin(C)$ to   $ \lin(F)$. The transitivity then follows from the "transitivity" of the orthogonal complement map on linear spaces:
 $$\lin \left(t(t(C,F'),t(F,F'))\right)=\lin(t(F,F'))^{\perp_{\lin(t(C,F'))}}=\left(\left( \lin F^\prime\right)^{\perp_{\lin F}}\right)^{\perp_{\left(\lin(F^\prime)^\perp_{\lin C}\right)}}=\lin(F )^{\perp_{\lin C}}=\lin \left(t(F,C)\right).$$
 \smallskip

\noindent
(\ref{it:com2}) follows the fact that  $\lin (t(C,F))$ and $\lin (F)$ are orthogonal complements in $\lin (C)$.
 \end{proof}

\begin{ex} We use the notations of Example  \ref{ex:tree} above. See~\cite{F}  for further details.
 In $\mathcal{T} $, the   map $$\mathfrak t^\prime=P^c(\mathfrak t)\leq \mathfrak t \longmapsto  R^c(\mathfrak t)=\mathfrak t\setminus\mathfrak t^\prime $$ defines a complement map.  Let us first check the transitivity; let $\mathfrak t_3\preceq\mathfrak t_2\preceq\mathfrak t_1$, then cutting  the smaller trunk $\mathfrak t_3$ off both $\mathfrak t_2$ and $\mathfrak t_1$, before cutting off the remaining crown $\mathfrak t_2\setminus\mathfrak t_3$  off $\mathfrak t_1$ amounts to cutting off the whole  trunk $\mathfrak t_2$ from $\mathfrak t_1$. We now check the compatibility with the partial order; if $\mathfrak f$ is a forest made of trunks cut off from the crown $R^c(\mathfrak t)$ of a tree $\mathfrak t$ --i.e., if    $\mathfrak f\leq  \mathfrak f^\prime= R^c(\mathfrak t)=\mathfrak t\setminus P^c(\mathfrak t)$--  then there is a unique tree $\mathfrak t^\prime$ larger than $\mathfrak t$ i.e., $\mathfrak t\preceq\mathfrak t^\prime $, such that $\mathfrak f=\mathfrak t\setminus\mathfrak t^\prime$; $\mathfrak t^\prime$ is built from gluing $\mathfrak f^\prime$ as a crown onto  $P^c( \mathfrak t)$.
 \end{ex}

 \begin{ex} We use the notation of Example~\ref{ex:graph}. See~\cite{M} for further details.
 In $\mathcal{F} $, the complement $\Gamma\setminus \gamma$ of $\gamma\leq \Gamma$ in $\Gamma$ is the diagram obtained after "shrinking" the subdiagram $\gamma$ to a point. There is a bijection  $\gamma\mapsto \tilde \gamma= \gamma\setminus \delta$ from subgraphs of a graph $\Gamma$ containing $\delta$ onto subgraphs of $\Gamma\setminus \delta$ which shows  the compatibility of the complement map with the partial order.   The shrinking procedure is also clearly transitive $\Gamma\setminus \gamma= (\Gamma\setminus \delta)\setminus (\gamma\setminus \delta)$.
 \end{ex}

 \subsection{Coproducts derived from complement maps}
 Loosely speaking, coalgebras are   objects dual to algebras.  More precisely,  algebras  are dual to coalgebras but  the converse only holds in finite dimensions (see e.g. \cite{Ca}).

\begin{defn}
 A {\bf (counital) coalgebra} is a linear space ${\mathcal C}$ (here over $\R$) equipped with two linear maps:
\begin{enumerate}              \item a {\bf comultiplication} $\Delta:{\mathcal C}\to {\mathcal C}\otimes C$ written in  Sweedler's notation \cite{Sw}

$$  \Delta c=\sum_{(c)} c_{ (1)}\otimes c_{(2)}, $$ which  is
{\bf coassociative} $$(I\otimes \Delta)\otimes \Delta= ( \Delta\otimes I)\otimes \Delta.$$ The coassociativity of $\Delta$ translates to the following commutative diagram
$$
\xymatrix{ \calc \ar^\Delta[rr] \ar_\Delta[d] & & \calc\otimes \calc \ar^{I\otimes \Delta}[d]\\
\calc\otimes \calc \ar^{\Delta \otimes I}[rr] && \calc\otimes \calc\otimes \calc}
$$
and can be expressed in the following compact notation:
$$  \sum_{(c)}c_{(1)}\otimes\left(\sum_{(c_{(2)})}(c_{(21)}) \otimes (c_{(22)}) \right) = \sum_{(c)}\left( \sum_{(c_{(1)})}(c_{(11)}) \otimes (c_{(12)}) \right) \otimes c_{(2)}.$$
With Sweedler's notation \cite{Sw}, both these expressions read                                                                                                                                                                                                                                       $$ \sum_{(c)} c_{(1)}\otimes c_{(2)}\otimes c_{(3)}. $$                                                                                                                                                \item a {\bf counit } $\e: {\mathcal C}\to \R$  satisfying the {\bf counitarity} property

\begin{equation}
\label{eq:conunitarity}  (I_{\mathcal C}\otimes \e)\circ\Delta= ( \e\otimes I_{\mathcal C})\circ \Delta= I_{\mathcal C},\end{equation} with the identification ${\mathcal C}\otimes \R\simeq {\mathcal C}\simeq \R \otimes {\mathcal C}$. This translates to the following commutative diagram:
$$
\xymatrix{\calc\ot \calc \ar_{\e\ot I_\calc}[d] & \calc \ar_\Delta[l] \ar^\Delta[r] \ar_\e[d] & \calc\ot \calc \ar^{I_C\ot \e}[d]\\
\R\ot \calc\ar^{\cong}[r] & \calc & \calc\ot \R \ar_{\cong}[l] }
$$

The fact that $\e$ is a counit can   be expressed by means of the following formula

$$ c=\sum_{(c)} \varepsilon(c_{(1)})c_{(2)} = \sum_{(c)} c_{(1)}\varepsilon(c_{(2)}).$$
\end{enumerate}
The coalgebra is {\bf cocommutative} if $\tau\circ \Delta=\Delta$ where $\tau: {\mathcal C}\otimes {\mathcal C}\to {\mathcal C}\otimes {\mathcal C}$ is the flip  $c_1\otimes c_2\longmapsto c_2\otimes c_1$. This translates to the following commutative diagram:
$$
\xymatrix{
& \calc \ar_\Delta[dl] \ar^\Delta[dr] & \\
\calc\ot \calc \ar^\tau[rr] && \calc\ot \calc }
$$
and the equation
$$ \sum_{(c)} c_{(1)}\otimes c_{ (2)}= \sum_{(c)} c_{(2)}\otimes c_{(1)}.$$
The coalgebra ${\mathcal C}$ is {\bf  coaugmented} if there is  a morphism of coalgebras $ u: \R\to {\mathcal C}$ in which case we have $\e\circ u= I_\R$ and we set $1_C:=u(1_\R)$ where $1_\R$ is the unit  in $\R$. If
${\mathcal C}$ is coaugmented, then $ {\mathcal C}$ is canonically isomorphic to Ker $\e\oplus \R 1_{\mathcal C}$. The kernel Ker $\e$ is often denoted by
  $\overline  {\mathcal C}$ so
  $ {\mathcal C}= \overline {\mathcal C}\oplus \R 1_{\mathcal C}$.
   Let
   ${\mathcal C}
   =\R
   1_{\mathcal C }\oplus \overline
   {\mathcal C}$
    be a coaugmented coalgebra. The
    coradical filtration
     on
$ {\mathcal C}$
      is defined as follows:
Define $ {\mathcal F}_0 {\mathcal C} :=\R 1_{\mathcal C}$, and for $r\in \N$, we set
$$ {\mathcal F}_r{\mathcal C}
:= \R 1 \oplus \{x\in \overline{\mathcal C}\,|\, \overline \Delta^n x = 0\quad\forall n>r\}.$$
Here we have set $\overline \Delta x= \Delta x-  \left(1_C\otimes x+x\otimes 1_C\right)$ and $\overline \Delta^n $ is the   $n$-th iteration. A coalgebra ${\mathcal C}$ is said to be {\bf conilpotent} (or sometimes {\bf connected} in the literature) if it is coaugmented and if the filtration is exhaustive, that is
$  {\mathcal C}= \cup_{r\in \N}{\mathcal F}_r {\mathcal C}$.
\end{defn}
We are ready to build a coproduct from a complement map.

\begin{prop}\label{prop:complcoproduct}
Let a poset $\left( \mathcal{P},\leq\right)$  be  such that    for any $ E$ in $\mathcal {P}$ the set $\mathfrak{s}(E)$ defined as in Eq.~$($\ref{eq:SE}$)$ is finite and let it be equipped with a complement map, which assigns to any element  $ E\in {\mathcal P}$    a map
$$
\complement_E: \mathfrak{s}(E) \longrightarrow  \mathcal{P}
$$
$$
 A\longmapsto  E\backslash A.
$$
Then the map
\begin{eqnarray*}
 \Delta:\mathcal{P}&\longrightarrow& \mathcal{P}\otimes \mathcal{P}\\
 E   &\longmapsto    & \sum_{A\in \mathfrak{s}(E)} E\backslash A \otimes A,
 \end{eqnarray*}
extends linearly to a coassociative coproduct on the space $\K\,\mathcal{P}$  freely generated over a field $\K$ by $\mathcal{P}$.

If the poset is filtered $\mathcal{P}= \cup_{n\in \N}
{\mathcal P}_n $ and the complement  map is compatible with the filtration then so is the coproduct, that is, if
$C$ is in ${\mathcal P}_n$, then $\Delta C$ is in $\sum\limits_{p+q=n}   {\mathcal P}_p \otimes  {\mathcal P}_q.$

Let   $\e:{\mathcal P}\to \K$  be  zero outside ${\mathcal P}_0$ where it takes the value one and let us denote its linear extension to $\K\, {\mathcal P}$ by the same symbol. If moreover the poset ${\mathcal P} $ is connected, then
   the   linear space  $\left( \K\,\mathcal{P},\Delta,\e\right)$    is a counital  connected coalgebra.
\end{prop}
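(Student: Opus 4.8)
The plan is to check the three assertions in turn: coassociativity of $\Delta$, its compatibility with the filtration, and finally the counit axiom plus conilpotency; the latter two will follow fairly formally once coassociativity and filtration-compatibility are in place. Throughout I read condition (a) of Definition~\ref{defn:complementposets} in the strong form in which it is stated for cones in Lemma~\ref{lem:transversecone}, namely that for $A\leq C$ the assignment $B\mapsto B\backslash A$ is a \emph{bijection} from the interval $\{B\mid A\leq B\leq C\}$ onto $\mathfrak{s}(C\backslash A)$.

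\emph{Coassociativity.} Directly from $\Delta E=\sum_{A\in\mathfrak{s}(E)}(E\backslash A)\otimes A$ one gets
$$(I\otimes\Delta)\circ\Delta\,(C)=\sum_{B\leq C}(C\backslash B)\otimes\Delta B=\sum_{A\leq B\leq C}(C\backslash B)\otimes(B\backslash A)\otimes A,$$
while
$$(\Delta\otimes I)\circ\Delta\,(C)=\sum_{A\leq C}\Delta(C\backslash A)\otimes A=\sum_{A\leq C}\ \sum_{E\in\mathfrak{s}(C\backslash A)}\bigl((C\backslash A)\backslash E\bigr)\otimes E\otimes A.$$
Now I would re-index the inner sum along the bijection of~(a), writing $E=B\backslash A$ with $A\leq B\leq C$, and use transitivity (b) in the form $(C\backslash A)\backslash(B\backslash A)=C\backslash B$; the inner sum then becomes $\sum_{A\leq B\leq C}(C\backslash B)\otimes(B\backslash A)\otimes A$, which is exactly the expression found for $(I\otimes\Delta)\circ\Delta$. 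Since $\K\mathcal{P}$ is free on $\mathcal{P}$, equality of these two formal sums of basis vectors is coassociativity. The single step that really uses the hypotheses — and the main (if modest) obstacle — is this re-indexing: it would fail if $B\mapsto B\backslash A$ were only surjective rather than bijective, because by (b) a $k$-to-one collapse multiplies a summand by $k$ on the $(\Delta\otimes I)$ side but not on the other side; so it is essential that (a) be understood in its bijective form, exactly as in the cone case.

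\emph{Compatibility with the filtration.} If $\mathcal{P}=\bigcup_n\mathcal{P}_n$ and (c) holds, then for $C\in\mathcal{P}_n$ and $A\leq C$ one has $|A|\leq n$ (the order is compatible with the filtration) and $|C\backslash A|=n-|A|$, so each summand $(C\backslash A)\otimes A$ of $\Delta C$ lies in $\mathcal{P}_{n-|A|}\otimes\mathcal{P}_{|A|}$ and hence $\Delta C\in\sum_{p+q=n}\mathcal{P}_p\otimes\mathcal{P}_q$. \emph{Counit.} With $\varepsilon$ equal to $1$ on $\mathcal{P}_0=\{1\}$ and $0$ on all other basis elements, I would compute $(I\otimes\varepsilon)\circ\Delta\,(C)=\sum_{A\leq C}\varepsilon(A)\,(C\backslash A)$: only $A=1$ contributes, and (d) gives $C\backslash 1=C$, so the sum equals $C$. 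For the other side, $(\varepsilon\otimes I)\circ\Delta\,(C)=\sum_{A\leq C}\varepsilon(C\backslash A)\,A$; by (c) a nonzero contribution forces $|A|=|C|$, hence $C\backslash A\in\mathcal{P}_0$, i.e.\ $C\backslash A=1$, and then (a) applied with $\mathfrak{s}(1)=\{1\}$ forces $\{B\mid A\leq B\leq C\}$ to be a singleton, so $A=C$ and the sum reduces to $\varepsilon(1)\,C=C$. Since moreover $\Delta 1=1\otimes 1$ and $\varepsilon(1)=1$, the map $u\colon\K\to\K\mathcal{P}$, $1\mapsto 1$, is a coalgebra morphism, so $\K\mathcal{P}$ is a counital coaugmented coalgebra with $\overline{\mathcal{C}}=\ker\varepsilon=\mathrm{span}\{C\in\mathcal{P}\mid C\neq 1\}$.

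\emph{Conilpotency.} The $A=1$ and $A=C$ terms of $\Delta C$ are $C\otimes 1$ and $1\otimes C$, so for $C\neq 1$ we have $\overline{\Delta}C=\sum_{A\leq C,\ A\neq 1,\ A\neq C}(C\backslash A)\otimes A$, and by the injectivity in (a) the factor $C\backslash A$ is never $1$ when $A\neq C$; thus $\overline{\Delta}$ maps $\overline{\mathcal{C}}$ into $\overline{\mathcal{C}}\otimes\overline{\mathcal{C}}$. Every basis element of $\overline{\mathcal{C}}$ has degree $\geq 1$ (degree $0$ means membership in $\mathcal{P}_0=\{1\}$), so iterating the filtration-compatibility established above, $\overline{\Delta}^{\,n}C$ lies in a sum of $(n{+}1)$-fold tensor products $\mathcal{P}_{p_1}\otimes\cdots\otimes\mathcal{P}_{p_{n+1}}$ with each $p_i\geq 1$ and $\sum_i p_i=|C|$; this is $0$ as soon as $n+1>|C|$. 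Hence $\overline{\Delta}^{\,n}C=0$ for $n\geq|C|$, so $C\in\mathcal{F}_{|C|}\mathcal{C}$ (while $1\in\mathcal{F}_0\mathcal{C}$), and therefore $\mathcal{C}=\bigcup_r\mathcal{F}_r\mathcal{C}$. This shows $(\K\mathcal{P},\Delta,\varepsilon)$ is a counital connected (conilpotent) coalgebra, as claimed.
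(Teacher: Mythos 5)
Your proof is correct and follows essentially the same route as the paper: the coassociativity computation via the re-indexing $H=B\backslash A$ together with transitivity, and the counit check using $C\backslash 1=C$ and $C\backslash C=1$, are exactly the paper's argument, and your reading of condition (a) as a bijection is precisely what the paper's re-indexing step implicitly uses (and what its examples state). You in fact go a bit further than the printed proof, which omits the verification of filtration compatibility and of conilpotency; your degree argument for $\overline{\Delta}^{\,n}C=0$ once $n\geq|C|$, and the observation that injectivity in (a) keeps $\overline{\Delta}$ inside $\overline{\mathcal{C}}\otimes\overline{\mathcal{C}}$, supply those missing details correctly.
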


\begin{proof}
We first check the coassociativity.

Let $C\in \mathcal{P}$. On the one hand we have

$$
(I\otimes \Delta)\Delta(C) =  \sum_{B\leq C} (I \otimes \Delta) (C\setminus B \otimes B )
= \sum_{D\leq B\leq C}  C\setminus B \otimes  B\setminus D \otimes D.
$$

On the other hand,
$$\begin{array}{rcl}
(\Delta \otimes I)\Delta(C)&=&\sum_{D\leq C} ( \Delta \otimes I) (C\setminus D \otimes D)\\
& &\\
&=& \sum_{D\leq C}\sum_{H\leq C\setminus D}\left ((C\setminus D)\setminus H \right )\otimes H \otimes D\\
& &\\
&=&\sum_{D\leq B\leq C} (C\setminus D)\setminus ( B \setminus D)\otimes  B\setminus D \otimes D \quad {\text{(compatibility with the partial order)}}\\
& &\\
&=& \sum_{D\leq B\leq C}  C\setminus B \otimes  B\setminus D \otimes D.\quad {\text{(transitivity)}}
\end{array}$$

Let us check the counitarity.
For any $C\in {\mathcal P}$, setting ${\mathcal P}_0=\{1\}$ and using the fact that $C\backslash 1=C$ (see item (\ref{it:ccom4}) in Definition \ref{defn:complementposets}), for any $C\in {\mathcal P}$  we have
$$\sum_{B\leq C} \e(B) \, \left(C\backslash B\right)=\sum_{\vert B\vert=0}\e(B)\, \left(C\backslash B\right)=\e(1)\, \left(C\backslash 1\right)= 1_\K\cdot C=C.$$
Furthermore, since $\vert C\backslash B\vert=0\Longrightarrow  C\backslash B =1\Longrightarrow B=C$, using the fact that  $C\backslash C=1$ (see Remark  \ref{rk:complementposets}) we have
$$\sum_{B\leq C} B\,\e\left( C\backslash B\right) =\sum_{\left\vert   C\backslash B\right\vert=0}  B \, \e\left( C\backslash B\right)=C\,\e(C\backslash C)   =C\cdot  1_\K =C.$$
\end{proof}

\begin{ex} The  vector space $ \R{\mathcal P}_f(E) $ spanned by finite subsets of a  finite set $E$  defines a conilpotent   coalgebra.
\end{ex}

\begin{ex}\label{ex:coproductoncones} The free algebra  $\R {\mathcal C}(\R^\infty)$ spanned by closed convex cones  pointed at zero  in  $ \R^\infty$ defines a conilpotent   coalgebra.
\end{ex}

\begin{ex} The free algebra $\R\mathcal{T}  $ generated by planar rooted trees  defines a conilpotent   coalgebra.
\end{ex}

\section{Algebraic Birkhoff factorization on a conilpotent coalgebra}
\mlabel{sec:abd}
We give a generalization (\cite {GPZ3}) of  the algebraic Birkhoff factorization used for renormalization purposes in quantum field theory (see \cite{CK,M}) in so far as we weaken the assumptions on the source space which is not anymore assumed to be a Hopf algebra  but only a  coalgebra,  as well as   on the target algebra which is not anymore required to decompose into two subalgebras. We first define the convolution product and give its main properties.

\subsection{The convolution product}
Let $({\mathcal A},m_{\mathcal A},1_{\mathcal A})$ be an (unital) commutative algebra over  $\R$.
\begin{prop}\mlabel{prop:phiinv} $($see e.g. \cite[Proposition~II.3.1]{M}$)$ Let $({\mathcal C},\Delta_{\mathcal C},\e_{\mathcal C})$ be a  (counital) coalgebra over   $\R$.
\begin{enumerate}
\item The {\bf convolution} product on ${\mathcal L}({\mathcal C},{\mathcal A})$ defined as              $$\phi\ast \psi= m_{\mathcal A}\circ (\phi\otimes \psi)\circ \Delta_{\mathcal C}$$ is associative. In Sweedler's notation it reads:                             $$\phi\ast\psi(x)= \sum_{(x)} \phi(x_{(1)})\psi(x_{(2)}).$$
\mlabel{it:conv1}
\item  $e:=u_{\mathcal A}\circ\e_{\mathcal C}$  is a unit for the convolution product on                         ${\mathcal L}({\mathcal C},{\mathcal A})$.
\mlabel{it:conv2}
\end{enumerate}
\end{prop}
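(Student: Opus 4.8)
The plan is to verify the two assertions directly, deducing associativity of the convolution product $\ast$ from coassociativity of $\Delta_{\mathcal C}$ together with associativity of $m_{\mathcal A}$, and the unit property from counitarity of $\e_{\mathcal C}$; both are routine diagram chases, most transparently carried out in Sweedler's notation.

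For part~(\ref{it:conv1}), I would fix $\phi,\psi,\chi\in{\mathcal L}({\mathcal C},{\mathcal A})$ and $x\in{\mathcal C}$ and expand both sides. On one hand,
$$((\phi\ast\psi)\ast\chi)(x)=\sum_{(x)}(\phi\ast\psi)(x_{(1)})\,\chi(x_{(2)})=\sum_{(x)}\phi(x_{(11)})\,\psi(x_{(12)})\,\chi(x_{(2)}),$$
the iterated coproduct on the right being $(\Delta_{\mathcal C}\otimes I)\circ\Delta_{\mathcal C}$ applied to $x$, and associativity of $m_{\mathcal A}$ being what renders the triple product in ${\mathcal A}$ unambiguous. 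On the other hand,
$$(\phi\ast(\psi\ast\chi))(x)=\sum_{(x)}\phi(x_{(1)})\,\psi(x_{(21)})\,\chi(x_{(22)}),$$
with iterated coproduct $(I\otimes\Delta_{\mathcal C})\circ\Delta_{\mathcal C}$. Coassociativity $(I\otimes\Delta_{\mathcal C})\circ\Delta_{\mathcal C}=(\Delta_{\mathcal C}\otimes I)\circ\Delta_{\mathcal C}$ identifies these two iterated coproducts with the common element $\sum_{(x)}x_{(1)}\otimes x_{(2)}\otimes x_{(3)}$; applying $\phi\otimes\psi\otimes\chi$ followed by the iterated product map ${\mathcal A}^{\otimes 3}\to{\mathcal A}$ (well-defined by associativity of $m_{\mathcal A}$) then yields the same element of ${\mathcal A}$ in both cases. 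Equivalently, one pastes the coassociativity square for ${\mathcal C}$ and the associativity square for ${\mathcal A}$ and checks commutativity of the resulting rectangle.

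For part~(\ref{it:conv2}), since $e=u_{\mathcal A}\circ\e_{\mathcal C}$ we have $e(c)=\e_{\mathcal C}(c)\,1_{\mathcal A}$ for every $c\in{\mathcal C}$. Then for $\phi\in{\mathcal L}({\mathcal C},{\mathcal A})$ and $x\in{\mathcal C}$,
$$(e\ast\phi)(x)=\sum_{(x)}e(x_{(1)})\,\phi(x_{(2)})=\sum_{(x)}\e_{\mathcal C}(x_{(1)})\,\phi(x_{(2)})=\phi\!\left(\sum_{(x)}\e_{\mathcal C}(x_{(1)})\,x_{(2)}\right)=\phi(x),$$
where I use $m_{\mathcal A}(1_{\mathcal A}\otimes a)=a$, the linearity of $\phi$, and the left counit axiom $(\e_{\mathcal C}\otimes I_{\mathcal C})\circ\Delta_{\mathcal C}=I_{\mathcal C}$ from Eq.~(\ref{eq:conunitarity}). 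The identity $(\phi\ast e)(x)=\phi(x)$ follows symmetrically from $m_{\mathcal A}(a\otimes 1_{\mathcal A})=a$ and the right counit axiom $(I_{\mathcal C}\otimes\e_{\mathcal C})\circ\Delta_{\mathcal C}=I_{\mathcal C}$.

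I do not expect any real obstacle: this is a standard fact about convolution algebras. The only points demanding a little care are the implicit canonical identifications ${\mathcal C}\otimes\R\simeq{\mathcal C}\simeq\R\otimes{\mathcal C}$ and ${\mathcal A}\otimes\R\simeq{\mathcal A}\simeq\R\otimes{\mathcal A}$ hidden in the (co)unit axioms, and the observation that associativity of $\ast$ uses associativity of $m_{\mathcal A}$ but not its commutativity — commutativity of ${\mathcal A}$, though assumed in the statement, is irrelevant here and will matter only in later constructions.
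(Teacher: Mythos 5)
Your proposal is correct and follows essentially the same route as the paper's proof: associativity of $\ast$ via coassociativity of $\Delta_{\mathcal C}$ and associativity of $m_{\mathcal A}$ in Sweedler's notation, and the unit property of $e=u_{\mathcal A}\circ\e_{\mathcal C}$ via the counit axiom together with linearity of $\phi$. Your closing remark that commutativity of ${\mathcal A}$ is not needed here is a correct and worthwhile observation, but it does not change the substance of the argument.
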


\begin{proof}
(\mref{it:conv1})
Using the coassociativity of $\Delta_{\mathcal C}$ and the associativity of $m_{\mathcal A}$ (we omit the explicit mention of the product in the computation below), for three  $\phi, \psi, \chi \in {\mathcal L}({\mathcal C},{\mathcal A})$ and using Sweedler's notations $\Delta x=\sum_{(x)}x_{(1)}\otimes x_{(2)}; \quad \Delta x_{(i)}=\sum_{(x_{(i)})}x_{(i1)}\otimes x_{(i2)}$ for each $i\in \{1,2\}$ we have
\begin{eqnarray*}\left(\left(\phi\ast\psi\right)\ast \chi\right)(x) &=&  \sum_{(x)}  \left( \phi(x_{(11)}) \psi  (x_{(12)})\right)\chi(x_{(2)})\\
&=&  \sum_{(x)}   \phi(x_{(11)}) \, \left(\psi  (x_{(12)}) \chi(x_{(2)})\right) \quad (\text{since}\quad m_{\mathcal A}\quad \text{is associative})\\
&=&  \sum_{(x)}  \phi(x_{(1) })\, \left(\psi  (x_{(21)})\chi(x_{(22)})\right)\quad (\text{since}\quad \Delta_{\mathcal C}\quad \text{is coassociative})\\
&=&  \sum_{(x)}  \phi(x_{(1)})\, \psi  (x_{(21)})\chi(x_{(22)}) \quad (\text{since}\quad m_{\mathcal A}\quad \text{is   associative})\\
&=&\left(\phi\ast \left(\psi\ast \chi\right)\right)(x).
\end{eqnarray*}
\smallskip

\noindent
(\mref{it:conv2})
Let $\phi\in {\mathcal L}\left({\mathcal C},{\mathcal A}\right)$.  Since $(\e_C\otimes I)\circ \Delta_C=I=(I\otimes \e_C)\circ \Delta_C $ we have    $$e\ast \phi(x)= \sum_{(x)}u_{\mathcal A}\circ\e_C( x_{(1)})\,  \phi( x_{(2)})= u_{\mathcal A} (1_C)\, \phi(x)=   \phi(x),$$
and similarly, we show that $\phi\ast e(x)= \phi(x)$ for any $x\in {\mathcal C}$.
\end{proof}

\begin{ex}\label{ex:convprodcones}
The convolution product of two maps $\phi$ and $\psi$  in  ${\mathcal L}\left({\mathcal P} (\R^\infty), {\mathcal A}\right)$  on a product cone $C=\langle e_I\rangle $ derived from the complement map described in Example~\ref{ex:complcone} reads
$$\phi\ast \psi \left(\langle e_I\rangle\right)= \sum_{J\subset I} \phi \left(\langle e_{I\setminus  J} \rangle\right) \psi \left(\langle e_J \rangle\right)=\sum_{F\in {\mathcal F}(C)}\phi(\overline F^C)\, \psi(F) $$ with the notation of  Eq.~({eq:perpcomplement}).

Setting ${\mathcal A}=\text{Mer}_{\rm sep} (\C^\infty)$, then  Eqs.~(\ref{eq:prodBHF}) and (\ref{eq:prodEML}) seen as   identities of    maps on product cones   read
\begin{equation}
\label{eq:convS} {\mathcal  S}=  {\mathcal  S}_+\ast  {\mathcal  S}_-= \mu \ast  {\mathcal  I}.\end{equation}
 \end{ex}

\begin{prop} \label{prop:connectedcoalgebra}$($\cite[Proposition II.3.1.]{M}$)$
Let ${\mathcal C}$ be a  connected augmented 
coalgebra and ${\mathcal A}$ an algebra. The set
$${\mathcal G}({\mathcal C},{\mathcal A}):=\{\phi\in {\mathcal L}({\mathcal C},{\mathcal A}),\quad\phi(1_{\mathcal C})=1_{\mathcal A}\}$$ endowed with the convolution product is a group with unit $e:=\e_{\mathcal C}\circ u_{\mathcal A}$ and   inverse
\begin{equation}
\label {eq:phistarinverse}\phi^{\ast (-1)}(x):=  \sum_{k=0}^\infty (e-\phi)^{\ast k}(x)
\end{equation} is well defined as a finite sum.
\end{prop}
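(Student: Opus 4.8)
The plan is to check the group axioms one at a time, treating the associativity of $\ast$ and the fact that $e=u_{\mathcal A}\circ\e_{\mathcal C}$ is a two-sided unit as already established (Proposition~\ref{prop:phiinv}), so that the only substantive point is that the proposed inverse $\phi^{\ast(-1)}:=\sum_{k\geq0}(e-\phi)^{\ast k}$ makes sense as a pointwise finite sum; this is where the connectedness (conilpotency) of ${\mathcal C}$ enters. First I would record the formal preliminaries. Since ${\mathcal C}$ is coaugmented, $\Delta_{\mathcal C}(1_{\mathcal C})=1_{\mathcal C}\otimes 1_{\mathcal C}$, so for $\phi,\psi\in{\mathcal G}({\mathcal C},{\mathcal A})$ we get $(\phi\ast\psi)(1_{\mathcal C})=\phi(1_{\mathcal C})\psi(1_{\mathcal C})=1_{\mathcal A}$, and $e(1_{\mathcal C})=u_{\mathcal A}(\e_{\mathcal C}(1_{\mathcal C}))=1_{\mathcal A}$; hence ${\mathcal G}({\mathcal C},{\mathcal A})$ is closed under $\ast$ and contains $e$.

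Next comes the key step, the finiteness of the inverse. Fix $\phi\in{\mathcal G}({\mathcal C},{\mathcal A})$ and put $\psi:=e-\phi\in{\mathcal L}({\mathcal C},{\mathcal A})$; then $\psi(1_{\mathcal C})=0$, i.e.\ $\psi$ vanishes on ${\mathcal F}_0{\mathcal C}=\R\,1_{\mathcal C}$. Writing the iterated convolution as $\psi^{\ast k}=m_{\mathcal A}^{(k-1)}\circ\psi^{\otimes k}\circ\Delta_{\mathcal C}^{(k-1)}$ with $\Delta_{\mathcal C}^{(k-1)}\colon{\mathcal C}\to{\mathcal C}^{\otimes k}$ the $(k-1)$-fold iterated coproduct, and expanding each coproduct as $\Delta x=1_{\mathcal C}\otimes x+x\otimes 1_{\mathcal C}+\overline{\Delta}x$, every term in which some tensor slot fed to $\psi$ equals $1_{\mathcal C}$ is killed, so only the fully reduced part survives: for $x\in\overline{\mathcal C}$,
\[
\psi^{\ast k}(x)=m_{\mathcal A}^{(k-1)}\circ\psi^{\otimes k}\circ\overline{\Delta}^{\,k-1}(x).
\]
Because ${\mathcal C}$ is connected the coradical filtration is exhaustive, so any $x$ lies in some ${\mathcal F}_r{\mathcal C}$, and then $\overline{\Delta}^{\,n}x=0$ for all $n>r$; hence $\psi^{\ast k}(x)=0$ once $k-1>r$, while $\psi^{\ast k}(1_{\mathcal C})=0$ for $k\geq1$. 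Therefore $\phi^{\ast(-1)}:=\sum_{k\geq0}\psi^{\ast k}$ is, evaluated on any $x$, a finite sum, so it defines an element of ${\mathcal L}({\mathcal C},{\mathcal A})$ with $\phi^{\ast(-1)}(1_{\mathcal C})=\psi^{\ast0}(1_{\mathcal C})=1_{\mathcal A}$, and thus $\phi^{\ast(-1)}\in{\mathcal G}({\mathcal C},{\mathcal A})$.

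Finally I would verify that $\phi^{\ast(-1)}$ is a two-sided $\ast$-inverse of $\phi$ by a telescoping computation, legitimate because all the sums below are finite when evaluated at a fixed $x$:
\[
\phi\ast\phi^{\ast(-1)}=(e-\psi)\ast\sum_{k\geq0}\psi^{\ast k}=\sum_{k\geq0}\psi^{\ast k}-\sum_{k\geq0}\psi^{\ast(k+1)}=\psi^{\ast0}=e,
\]
and symmetrically $\phi^{\ast(-1)}\ast\phi=e$. Together with the preliminaries this shows that $({\mathcal G}({\mathcal C},{\mathcal A}),\ast)$ is a group with unit $e$ and inverse given by~(\ref{eq:phistarinverse}). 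I expect the main obstacle to be the middle step: making precise, via the splitting $\Delta=1_{\mathcal C}\otimes(\,\cdot\,)+(\,\cdot\,)\otimes 1_{\mathcal C}+\overline{\Delta}$, the claim that iterated convolution by $\psi=e-\phi$ only sees the iterated reduced coproduct, so that conilpotency forces the terms $\psi^{\ast k}$ to vanish on ${\mathcal F}_r{\mathcal C}$ for $k$ large; everything else is formal manipulation of the convolution product.
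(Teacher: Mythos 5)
Your proposal is correct and follows essentially the same route as the paper's proof: define the inverse as the geometric series $\sum_{k\geq 0}(e-\phi)^{\ast k}$, observe that on $\ker\e$ each term reduces to the iterated reduced coproduct $\overline\Delta^{\,k-1}$ so that conilpotency makes the sum pointwise finite, and verify the inverse property by a telescoping computation. Your extra checks (closure of ${\mathcal G}$ under $\ast$ and the two-sided nature of the inverse) are minor elaborations of what the paper leaves implicit.
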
                                                                                                                     \begin{proof} We saw that    $e $ is a unit for the convolution product. Let us now build an inverse $\phi^{\ast (-1)}$ for any $\phi$ in $ \mathcal {G}\left({\mathcal C}, {\mathcal A}\right)$:                                                                                                            $$\phi^{\ast (-1)}(x)= \left(e-(e-\phi)\right)^{\ast (-1)}(x)=\sum_{k=0}^\infty (e-\phi)^{\ast k}(x).$$ Note that we indeed have                           $$\phi\ast\sum_{k=0}^\infty (e-\phi)^{\ast k}= (\phi-e)\ast\sum_{k=0}^\infty (e-\phi)^{\ast k}+ \sum_{k=0}^\infty (e-\phi)^{\ast k}=-\sum_{k=1}^\infty (e-\phi)^{\ast k}+ \sum_{k=0}^\infty (e-\phi)^{\ast k}= 1.$$                                                    On the one hand, since $\e(1_{\mathcal C})= u_{\mathcal A}\circ \e(1_{\mathcal C})= 1u_{\mathcal A}(1)= 1_{\mathcal A}$ we have $(e-\phi)(1_{\mathcal C})=0$ by assumption on $\phi$.  Hence  $(e-\phi)^{\ast k}(1_{\mathcal C})=0$ for $k>1$ so that the above formula yields
 $\phi^{\ast (-1)}(1_{\mathcal C})=1_{\mathcal A}$ and $\phi^{\ast (-1)}\in {\mathcal G}\left({\mathcal C}, {\mathcal A}\right)$.                                                                           On the other hand,  for any $x$ in ${\rm Ker}(\e)$ we have                                                                                                 $$ (e-\phi)^{\ast k}(x)= m_{{\mathcal A},k-1}(\phi\otimes\cdots\otimes\phi) \circ \overline \Delta^{k-1}(x).$$                                      Since element $x$ lies in some $\overline {\mathcal C}_n$,    this expression vanishes for $k\geq n+1$. Hence the above power series is finite and therefore defines an inverse of $\phi$ which lies in ${\mathcal G}$ since  $\phi^{\ast (-1)}(1_{\mathcal C})= (e-\phi)^{\ast 0}(x)=1_{\mathcal A}$
\end{proof}

\begin{ex} Back to Example \ref{ex:convprodcones}, we can rewrite the renormalized holomorphic part (\ref{eq:convS}) of $S$  as$$  {\mathcal  S}_+= {\mathcal S} \ast {\mathcal  S}_-^{\ast (-1)}= {\mathcal S} \ast  {\mathcal  I}^{\ast (-1)}.$$
\end{ex}

\subsection{Algebraic Birkhoff factorization}
We quote the following result from Theorem 3.2 in {GPZ3}.
\begin{thm}
Let ${\mathcal C} =\bigoplus_{n\geq 0} {\mathcal C}^{(n)}$ be a  connected coalgebra  and  let ${\mathcal A}$ be a unitary  algebra.  Let ${\mathcal A}={\mathcal A}_1\oplus {\mathcal A}_2$ be a linear decomposition  such that $1_{\mathcal A}\in {\mathcal A}_1$ and let $P$ be the induced projection onto ${\mathcal A}_1$ parallel to ${\mathcal A}_2$.\\
Given  $\phi\in {\mathcal G}({\mathcal C},{\mathcal A})$,
we define two maps $\varphi_i\in {\mathcal G}({\mathcal C},{\mathcal A}), i=1,2$ defined by the following recursive formulae on $\ker \e$:
\begin{eqnarray}
\varphi_1(x)&=&-P\left(\varphi(x)+\sum_{(x)} \varphi_1(x')\varphi(x'')\right),  \mlabel{eq:phi-}\\
\varphi_2(x)&=&(\id_A-P)\left(\varphi(x)+\sum_{(x)} \varphi_1(x')\varphi(x'')\right),
\mlabel{eq:phi+}
\end{eqnarray}
where, following Sweedler's notation,  we have set $\overline \Delta x=\sum x'\otimes x''$.
\begin{enumerate}
\item
We have  $\varphi_i(\ker \e)\subseteq {\mathcal A}_i$   and hence  $\varphi_i:{\mathcal C} \to \K 1_{\mathcal A} + {\mathcal A}_i$. Moreover, the following factorization holds
\begin{equation}
\varphi=\varphi_1^{\ast (-1)} \ast \varphi_2.
\mlabel{eq:abf}
\end{equation}
\mlabel{it:abf}
\item
$\varphi_i, i=1,2,$ are the unique maps in $ {\mathcal G}({\mathcal C},{\mathcal A})$  such that   $ \varphi_i(\ker \e)\subseteq {\mathcal A}_i$ for $i=1, 2,$  and satisfying Eqn. (\mref{eq:abf}).
\mlabel{it:uniq}
\item If moreover ${\mathcal A}_1$ is a subalgebra of ${\mathcal A}$ then $\phi_1 ^{\ast (-1)}$ lies in $ {\mathcal G}({\mathcal C},{\mathcal A}_1)$. \mlabel{it:phiinv}
\end{enumerate}
\mlabel{thm:abf}
\mlabel{thm:Birkhoff}
\end{thm}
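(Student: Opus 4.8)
The plan is to verify the three claims in the order stated, exploiting the connectedness of ${\mathcal C}$ and the conilpotency of the coradical filtration so that every recursion terminates. First I would observe that the recursive formulas \eqref{eq:phi-} and \eqref{eq:phi+} make sense: for $x\in\ker\e$ we have $\overline\Delta x=\sum x'\otimes x''$ with $x',x''$ in strictly lower coradical degree (more precisely, if $x\in{\mathcal F}_r{\mathcal C}$ then $x',x''\in{\mathcal F}_{r-1}{\mathcal C}$), so $\varphi_1(x')$ is already defined when we compute $\varphi_1(x)$, and induction on the coradical degree closes the definition. One extends $\varphi_1,\varphi_2$ to all of ${\mathcal C}$ by setting $\varphi_i(1_{\mathcal C})=1_{\mathcal A}$, so both lie in ${\mathcal G}({\mathcal C},{\mathcal A})$.

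For part \eqref{it:abf}, the containment $\varphi_i(\ker\e)\subseteq{\mathcal A}_i$ is immediate from the definitions, since $\varphi_1(x)$ is in the image of $-P$ (hence in ${\mathcal A}_1$) and $\varphi_2(x)$ is in the image of $\id_A-P$ (hence in ${\mathcal A}_2$). The factorization \eqref{eq:abf} is equivalent to $\varphi_1\ast\varphi=\varphi_2$; I would prove this by evaluating both sides on $x\in\ker\e$. Using $\Delta x=1_{\mathcal C}\otimes x+x\otimes 1_{\mathcal C}+\overline\Delta x$, the convolution $\varphi_1\ast\varphi(x)$ unfolds to $\varphi_1(x)+\varphi(x)+\sum_{(x)}\varphi_1(x')\varphi(x'')$, and adding \eqref{eq:phi-} and \eqref{eq:phi+} gives exactly $\varphi_1(x)+\varphi(x)+\sum_{(x)}\varphi_1(x')\varphi(x'')=\varphi_2(x)$, since $P+(\id_A-P)=\id_A$. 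On $1_{\mathcal C}$ both sides equal $1_{\mathcal A}$, so $\varphi_1\ast\varphi=\varphi_2$, i.e. $\varphi=\varphi_1^{\ast(-1)}\ast\varphi_2$ using Proposition \ref{prop:connectedcoalgebra} to invert $\varphi_1$.

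For uniqueness, part \eqref{it:uniq}, suppose $\psi_1,\psi_2\in{\mathcal G}({\mathcal C},{\mathcal A})$ with $\psi_i(\ker\e)\subseteq{\mathcal A}_i$ and $\varphi=\psi_1^{\ast(-1)}\ast\psi_2$, equivalently $\psi_1\ast\varphi=\psi_2$. Evaluating on $x\in\ker\e$ gives $\psi_1(x)+\varphi(x)+\sum_{(x)}\psi_1(x')\varphi(x'')=\psi_2(x)$. Arguing by induction on the coradical degree of $x$ (so that $\psi_1(x')=\varphi_1(x')$ is known), applying $P$ to this identity forces $\psi_1(x)=\varphi_1(x)$ because $P$ kills ${\mathcal A}_2\ni\psi_2(x)$ and fixes ${\mathcal A}_1\ni\psi_1(x)$; then applying $\id_A-P$ forces $\psi_2(x)=\varphi_2(x)$. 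Hence $\psi_i=\varphi_i$. Finally, part \eqref{it:phiinv}: if ${\mathcal A}_1$ is a subalgebra, then $\varphi_1\in{\mathcal G}({\mathcal C},{\mathcal A})$ takes values in ${\mathcal A}_1$ (it does on $\ker\e$ by \eqref{it:abf}, and $\varphi_1(1_{\mathcal C})=1_{\mathcal A}\in{\mathcal A}_1$), and from $\varphi_1^{\ast(-1)}=\sum_{k\ge0}(e-\varphi_1)^{\ast k}$ one sees each $(e-\varphi_1)^{\ast k}$ takes values in ${\mathcal A}_1$ since $e$ and $\varphi_1$ do and ${\mathcal A}_1$ is closed under multiplication; the sum being finite on each element, $\varphi_1^{\ast(-1)}\in{\mathcal G}({\mathcal C},{\mathcal A}_1)$.

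The only genuinely delicate point is the bookkeeping of the coradical filtration in the inductive step — making sure that in $\overline\Delta x=\sum x'\otimes x''$ both tensor factors lie in ${\mathcal F}_{r-1}{\mathcal C}$ when $x\in{\mathcal F}_r{\mathcal C}$, so that all recursions and inductions are genuinely well-founded; everything else is a formal manipulation with Sweedler notation and the splitting $\id_A=P+(\id_A-P)$. Since ${\mathcal C}$ is assumed connected (conilpotent) this filtration argument goes through without difficulty, and the termination of $\varphi_1^{\ast(-1)}$ as a finite sum is exactly Proposition \ref{prop:connectedcoalgebra}.
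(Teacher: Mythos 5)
Your proposal is correct and follows essentially the same route as the paper's proof: showing $\varphi_1\ast\varphi=\varphi_2$ directly from the recursions, inverting $\varphi_1$ via Proposition \ref{prop:connectedcoalgebra}, proving uniqueness by induction (your use of $P$ and $\id_A-P$ is equivalent to the paper's observation that the difference lies in ${\mathcal A}_1\cap{\mathcal A}_2=\{0\}$), and handling part (c) by applying the inverse formula within the subalgebra ${\mathcal A}_1$. The only cosmetic difference is that you organize the induction by the coradical filtration while the paper uses the grading ${\mathcal C}^{(k)}$; both make the recursion well-founded.
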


\begin{proof}
(\mref{it:abf}) The proof is the same as  in \cite[Theorem 3.2]{GPZ3} ignoring the differential structure discussed there. We reproduce the proof for the sake of completeness.

The inclusion $\varphi_i(\ker \e)\subseteq {\mathcal A}_i, i=1,2,$  follows from the definitions. Further
\begin{eqnarray*}
\varphi_2(x)&=& (\id_{\mathcal A} -P)\left(\varphi(x)+\sum_{(x)}\varphi_1(x')\varphi(x'')\right)\\
&=& \varphi(x) +\varphi_1(x) +\sum_{(x)}\varphi_1(x')\varphi(x'')\\
&=& (\varphi_1\ast \varphi)(x).
\end{eqnarray*}
Since $\varphi_1(\I)= 1_{\mathcal A}$, $\varphi_1$ is invertible for the  convolution product in ${\mathcal A}$ as a result of Proposition  \ref{prop:connectedcoalgebra}  applied to   $\varphi_1$, from which  Eq.~(\mref{eq:abf}) then follows.

\noindent
(\mref{it:uniq}) Suppose there are $\psi_i\in {\mathcal G}\left({\mathcal C},{\mathcal A}\right), i=1,2,$ with $  \psi_i(\ker \e)\subseteq {\mathcal A}_i$ such that $\varphi=\psi_1^{\ast (-1)} \ast \psi_2 $ and $\psi_i (J)=1_{\mathcal A}$.
We prove $\varphi_i(x)=\psi_i(x)$ for $i=1,2, x\in {\mathcal C}^{(k)}$ by induction on $k\geq 0$. These equations hold for $k=0$. Assume that the equations hold for ${\mathcal C}^{(k)}$. For $x\in {\mathcal C}^{(k+1)}\subseteq \ker(\e)$, by $  \varphi_2=\varphi_1\ast \varphi$ and  $  \psi_2= \psi_1 \ast \varphi,$  we have
$$\varphi_{2}(x)= \varphi_1(x)+ \varphi(x)+ \sum_{(x)}\varphi_1(x^\prime)\varphi(x^{\prime \prime}) \text{ and
}
\psi_{2}(x)= \psi_1(x)+ \varphi(x)+ \sum_{(x)}\psi_1(x^\prime)\varphi(x^{\prime \prime}),$$ where we have used
$\varphi_1(\I)=\psi_1(\I)=\varphi(\I)=1_A$ . Hence by the induction hypothesis, we have
$$
\varphi_{2}(x)-\psi_{2}(x)=\varphi_{1}(x)-\psi_{1}(x)+\sum_{(x)}(\varphi_{1}(x^{\prime  })-\psi_{1} (x^{\prime  })) \varphi(x^{\prime \prime})=\varphi_{1}(x)-\psi_{1} (x)\in A_{1}\cap A_2=\{0\}.$$
Thus
$$\varphi_i(x)=\psi_i (x) \text{ for all } x\in \ker(\e), i=1,2.$$

\noindent (\mref{it:phiinv}) If   ${\mathcal A}_1$ is a subalgebra, then it follows from Proposition \mref{prop:phiinv} applied to ${\mathcal A}_1$ instead of ${\mathcal A}$, that
$\varphi_1$ is invertible in ${\mathcal A}_1$.
\end{proof}

\section{Application to renormalized conical zeta values}

\subsection{Algebraic Birkhoff factorization on cones}

The algebraic Birkhoff factorization   can be carried out from exponential sums on product cones to exponential sums on  general convex polyhedral cones using the complement map described in Lemma \ref{lem:transversecone}   built from the transverse cone to a  face \cite{GPZ3}, which generalizes the orthogonal complement used in the case of product cones.
Here we consider both   closed convex (polyhedral) cones in $\R^k$
$$
 \cone{v_1,\cdots,v_n}:=\RR\{v_1,\cdots,v_n\}=\RR _{\geq 0}v_1+\cdots+\RR_{\geq 0}v_n,
$$
 where $v_i\in \R^k$, $i=1,\cdots, n$ defined previously and
  open cones defined in a similar manner replacing $\R_{\geq 0}$ by $\R_+$.
Product cones $\langle e_i, i\in I\rangle$ with   $I\subset \{1,\cdots, k\}$  and $\{e_i\,|\,i\in \{1,\cdots,k\}\}$ the canonical basis of $\R^k$ are convex cones. We shall focus here on  Chen cones $\langle e_{i_1}, e_{i_1}+e_{i_2},\cdots, e_{i_1}+\cdots+ e_{i_n}  \rangle$ with $\{i_1,\cdots, i_n\} \subset \{1,.\dots, k\}$, which  are   closed convex cones as well as their open counterparts.

Both the complement map defined by means of the transverse map in Lemma~\ref{lem:transversecone}  and  the corresponding coproduct  defined in Proposition \ref{prop:complcoproduct} (see also Example \ref{ex:coproductoncones}) are  compatible with subdivisions in a suitable sense. Recall that
 a {\bf subdivision} of a cone $C$ is a set $\{C_1,\cdots,C_r\}$ of cones such that
 \begin{enumerate}
 \item[(i)] $C=\cup_{i=1}^r C_i$,
 \item[(ii)] $C_1,\cdots,C_r$ have the same dimension as $C$ and
 \item[(iii)] $C_1,\cdots,C_r$ intersect along their faces, i.e.,  $C_i\cap C_j$ is a face of both $C_i$ and $C_j$.
 \end{enumerate}
 \begin{ex} The product cone $\langle e_1,e_2\rangle$ can be subdivided into two Chen cones $\langle e_1,e_1+e_2\rangle$ and $\langle e_1+e_2,e_2\rangle$.
 \end{ex}
  To a simplicial convex (closed) cone $C\subset \Z^k$, namely one whose generators are linearly independent, one can assign an exponential sum and an exponential integral  which can informally be described as follows
 $$
 S^c (C)(\vec \e ): =
 \sum_{\vec{n}\in C\cap \\Z^k} e^{\langle \vec n, \vec \e \rangle}
 ;\quad S^o(C)(\vec \e ): =
 \sum_{\vec{n}\in C^o\cap \Z^k} e^{\langle \vec n, \vec \e \rangle};
\quad I(C)(\vec \e)=\int_{  C } e^{\langle \vec x, \vec \e\rangle} \, d\vec x.$$
Here  $C^o$ is  the open cone given by  the interior of $C$ and  $\e$ is taken in
$$\check C^-_k:=\left\{\left.\vec \e:=\sum_{i=1}^k \e_ie_i^\ast \,\right|\, \langle \vec x, \vec \e \rangle <0 \text{ for all } \vec x\in C\right\},$$
where $\{e_i^\ast\,|\, i\in \{1,\dots, k\}\}$ is the dual canonical basis and $\langle \vec x, \vec \e\rangle$ the natural pairing $R^k\otimes  \left(\R^k\right)^\ast \to \RR$.
Keep in mind that a precise formulation requires introducing a lattice attached to the cone, so considering lattice cones instead of mere cones  (see \cite{GPZ3}).  This then extends to any convex cones by additivity on subdivisions.

Whereas exponential sums on product cones take their values on products of meromorphic functions in one variable, exponential sums on general convex cones take  their values in the larger space of meromorphic maps with simple linear poles supported by the faces of the cone.

It turns out that  a meromorphic map function with linear poles also decomposes as a sum of a holomorphic  part and a polar part. A decomposition of the algebra of   germs of meromorphic functions with linear poles  into the holomorphic part and a linear complement   was shown in \cite{GPZ4}    by means of  an inner product using our results on cones and pure fractions in an essential way. We shall denote by $\pi_+$ the corresponding projection onto the holomorphic part.

Consequently, one can implement an algebraic Birkhoff factorization    \cite{GPZ3} on the coalgebra of convex polyhedral cones. \footnote{We actually carry out the algebraic Birkhoff factorization   on lattice cones.} Just as the algebraic Birkhoff factorization   gave rise to an Euler-Maclaurin formula on product cones, when the inner product used to defined the coproduct on cones coincides with the inner product used to decompose the space of meromorphic germs,  the  algebraic Birkhoff factorization   of the exponential sum on a convex (lattice)  cone yields back Berline and Vergne's local Euler-Maclaurin formula   \cite{BV}.
To prove this identification which is easy to see on smooth cones, we subdivide  a general convex cone   into simplicial ones and  use  the compatibility of $S_-$ in the factorization procedure with subdivisions. This compatibility is shown by means of a rather involved combinatorial proof.

Recall  from  Eq.~(\ref{eq:productmutltizeta}) that the ``holomorphic part" of the exponential discrete sum on product cones generates products of renormalized zeta values at non-positive integers  as coefficients of its Taylor expansion at zero. Similarly \cite{GPZ3}, the "holomorphic part" of the exponential discrete sums on general convex polyhedral cones obtained from an algebraic Birkhoff factorization, generates what we call renormalized {\em conical} zeta values at non-positive integers which arise as coefficients of its Taylor expansion at zero. It turns out that the "holomorphic part"  of the exponential sums  $S^c(C)$ and $S^o(C)$  on a cone $C$ derived from the algebraic Birkhoff factorization   actually coincides with  the projection $\pi_+(S^c(C))$ and $\pi_+(S^o(C))$, when the inner product used to defined the coproduct on cones coincides with the inner product used to decompose the space of meromorphic germs respectively, onto their holomorphic part when seen as   meromorphic functions with linear poles.

\subsection{Meromorphic functions with linear poles}
For later use, we define  the projection $\pi_+$ somewhat informally; a precise definition can be found in \cite{GPZ4}. One shows  that  a meromorphic function $f=\frac{h}{L_1\cdots L_n}$ on $\C^k$ with linear poles $L_i, i=1,\cdots, n$ given by  linear forms   and $h$ a holomorphic function at zero, uniquely decomposes as
\begin{equation} f=\sum_{i=1}^n\left(\frac{h_i(\vec{\ell}_i)}{\vec L_i^{\vec s_i}}+\phi_i(\vec \ell_i, \vec L_i)\right),
\mlabel{eq:sumi}
\end{equation}
with $|\vec s_i|> 0$ and where $\vec L_i=(L_{i1}, \cdots , L_{im_i})$, $\{L_{i1}, \cdots , L_{im_i}\}$ is a linear independent subset of $\{L_1,\cdots,L_n\}$, extended to a basis $\{ \vec L_i,\vec \ell_i\}$ of $\C^k$, with $\vec \ell_i=(\ell _{i(m_i+1)}, \cdots \ell _{ik})$, $L_{ij}$, $\ell _{im}$ orthogonal for the canonical inner product  on $\C^k$ and $h_i(\vec{\ell}_i)$ holomorphic (reduced to a constant when $k=1$). Then  we call $f_+:=\pi_+(f)=\sum\limits_{i=1}^n\phi_i$, which is a germ of holomorphic function in the independent variables $\vec \ell_i$ and $\vec L_i$,  the {\bf holomorphic part} of $f$ and $f_-:=(1-\pi_+)(f)=\sum\limits_{i=1}^n \frac{h_i(\vec{\ell}_i)}{\vec L_i^{\vec s_i}}$  the {\bf polar part} of $f$.

In order to discuss examples, it is convenient to set the following
notation. Given $k$ linear forms $L_1, \cdots, L_k$, we set

\begin{equation}
 \label{eq:notationL}[L_1, \cdots , L_k]:=\frac {e^{L_1}}{1-e^{L_1}}\frac {e^{L_1+L_2}}{1-e^{L_1+L_2}}\cdots \frac {e^{L_1+L_2+\cdots +L_k}}{1-e^{L_1+L_2+\cdots +L_k}}.
\end{equation}
So, for any (closed) Chen cone $C_k=\langle e_1,e_1+e_2,\cdots,e_1+\cdots+e_k\rangle$ (here $e_1,\dots, e_k$ is the canonical basis of $\R^k$), we have $$S^o(C_k )(\e _1, \e _2,\cdots, \e_k) =[\e _1,\e _2,\cdots,\e _k].$$

\begin{ex}
\begin{enumerate}
\item Take $k=1$.  Let $f(\e)=\frac {e^{\e}}{1-e^{\e}}=  \frac {1}{ e^{-\e}-1}=-\frac{\rm Td(-\e)}{-\e}$ on $\C$. Then  by Eq.~(\ref{eq:SBernoulli}) we have
\begin{equation} \label{eq:dim1meroexp} f(\e)= -  \frac{1}{\e}- \frac{1}{2}-\sum_{k=1}^K \frac{B_{2k }}{(2k )!}\e^{2k-1} + o(\e^{2K}) =-\frac{1}{\e}+\phi(\e),
\end{equation}
with
\begin{equation}\label{eq:phi}\phi(\e):= - \frac{1}{2}-\sum_{k=1}^K \frac{B_{2k }}{(2k )!}\e^{2k-1} + o(\e^{2K})=-\frac 12-\frac 1{12}\e+\frac 1{720}\e^3+\cdots
\end{equation}
holomorphic at zero so $  \pi_+(f)=  \phi(\e)$.

\item Let $k=2$ and let $f(\e)= [\e_1, \e_1+\e_2]$. Applying Eq.~(\ref{eq:dim1meroexp}) we write
\begin{eqnarray}\label{eq:k2}\pi_+\left([\e_1, \e_2] \right)    &=&\pi_+\left(\Big(-\frac 1{\e_1 }+\phi(\e _1 )\Big)\Big(-\frac 1{\e_1+\e_2}+\phi(\e _1+\e _2)\Big)\right)\nonumber\\
&=& \pi_+\left(-\frac {\phi(\e _1+\e_2 )}{\e_1 }-\frac {\phi(\e _1 )}{\e_1+\e _2 }+\phi(\e _1 )\phi(\e _1+\e_2)\right)\nonumber\\
&=& -\frac {\phi(\e _1+\e_2 )-\phi(\e _2)}{\e_1}-\frac {\phi(\e _1 )-\phi\left(\frac {\e_1-\e_2}2\right)}{\e_1+\e _2 }+\phi(\e _1 )\phi(\e _1+\e_2 ),\notag
\end{eqnarray}
\item Let $k=3$ and let $f(\e)= [\e_1+\e_3, \e_2]$. Using Eq.~(\ref{eq:dim1meroexp}) we write
\begin{eqnarray*}\label{eq:k3ex1}
\pi_+\left([ \e_1+\e_3, \e_2 ]    \right) &=&\pi_+\left(\Big(-\frac 1{\e_1+\e_3}+\phi(\e _1+\e _3)\Big)\Big(-\frac 1{\e_1+\e_2+\e_3}+\phi(\e _1+\e_2+\e _3)\Big)\right)\nonumber\\
&=& \pi_+\left(-\frac {\phi(\e _1+\e_2+\e _3)}{\e_1+\e_3}-\frac {\phi(\e _1+\e _3)}{\e_1+\e _2+\e_3}+h(\e _1+\e _3) \,\phi(\e _1+\e_2+\e _3)\right)\nonumber\\                                        &=& - \frac {\phi(\e _1+\e_2+\e _3)-\phi(\e _2)}{\e_1+\e_3}-\frac {\phi(\e _1+\e _3)-\phi\left(\frac {\e_1+\e_3-2\e_2}3\right)}{\e_1+\e _2+\e_3}\\
&&+\phi(\e _1+\e _3)\phi(\e _1+\e_2+\e _3),
\end{eqnarray*}
since $\e_1+\e_3-2\e_2\perp \e_1+\e_2+ \e_3$. Similarly, for $f(\e)= [\e_1, \e_2+\e _3]$, we have
\begin{eqnarray}\label{eq:k3ex2} \pi_+\left([\e_1, \e_2+\e _3]\right)&=&\pi_+\left(\Big(-\frac 1{\e_1}+\phi(\e _1)\Big)\Big(-\frac 1{\e_1+\e_2+\e_3}+\phi(\e _1+\e_2+\e _3)\Big)\right)\nonumber\\
&=& \pi_+\left(-\frac {\phi(\e _1+\e_2+\e _3)}{\e_1}-\frac {\phi(\e _1)}{\e_1+\e _2+\e_3}+\phi(\e _1)\phi(\e _1+\e_2+\e _3)\right)\nonumber\\
&=&  -\frac {\phi(\e _1+\e_2+\e _3)-\phi(\e _2+\e_3)}{\e_1}-\frac {\phi(\e _1)-\phi\left(\frac {2\e_1-\e_2-\e_3}3\right)}{\e_1+\e _2+\e_3}+\phi(\e _1)\phi(\e _1+\e_2+\e _3).
\end{eqnarray}
\end{enumerate}
\end{ex}
\subsection{Renormalized conical zeta values: the case of Chen cones}
To   a (closed) convex polyhedral cone $C$,  one can assign   closed (resp. open)  renormalized conical zeta values $\zeta^c(C; -a_1,\cdots, -a_k)$ (resp. $\zeta^o(C;-a_1,\cdots, -a_k)$) with  $a_i\in \Z_{\geq 0}$, $i\in \{1,\cdots, k\}$  corresponding to     the   coefficient in $\e^{a_1}\cdots \e^{a_k}$ of the Taylor expansion at zero of $\pi_+\left(S^c(C)\right)$ (resp. $\pi_+\left(S^o(C)\right)$).

This applied to a (closed) Chen cone $C_k=\langle e_1,e_1+e_2,\cdots,e_1+\cdots+e_k\rangle$   gives rise to multiple  zeta values $$\zeta(-a_1,\cdots, -a_k):=\zeta^o(C_k;-a_1,\cdots, -a_k)$$    given by the Taylor coefficient in $\e^{a_1}\cdots \e^{a_k}$ of $\pi_+\left(S^o(C_k)\right)$  and multiple  zeta-star values
$$\zeta^\star(-a_1,\cdots, -a_k):=\zeta^c(C_k;-a_1,\cdots, -a_k)$$   given by the Taylor coefficient in $\e^{a_1}\cdots \e^{a_k}$ of $\pi_+\left(S^c(C_k)\right)$. The latter are algebraic expressions in the former.

With the notations of Eq.~(\ref{eq:notationL}) we have
\begin{equation}\label{eq:zetapi} \zeta  (-a_1, -a_2,\dots, -a_k)=\pi_+\left(\partial^{a_1}_1\partial^{a_2}_2\cdots \partial^{a_k}_k[\e _1,\e_2,\cdots, \e_k]\right)_{\vert_{\vec\e_=\vec 0}}.
 \end{equation}

\begin{ex}
 \begin{enumerate}
 \item \begin{equation}\label{eq:zeta1}
 \zeta\left(-a  \right) =\phi^{(a)}(0  )=-\frac{B_{a+1}}{a+1}.\end{equation}
 \item \begin{eqnarray}\label{eq:zeta12}
&&\zeta\left(-a_1,-a_2 \right)  \nonumber\\               &=& \left(\partial^{a_1}_1\partial^{a_2}_2\Big(-\frac {\phi(\e _1+\e_2 )-\phi(\e _2)}{\e_1}-\frac {\phi(\e _1 )-\phi\left(\frac {\e_1-\e_2}2\right)}{\e_1+\e _2 }+\phi(\e _1 )\phi(\e _1+\e_2 )\Big)\right)_{\vert_{\e_1=\e_2=0}}.                                                        \end{eqnarray}
 \item
\begin{eqnarray}\label{eq:zeta1plus23}
\zeta\left(-a_1-a_2, -a_3\right)&=& \left(\partial^{a_1}_1\partial^{a_2}_2\partial^{a_3}_3 \Big(\pi_+[\e_1+\e_3, \e_2]\Big)\right)_{\vert_{\vec \e=\vec 0}}\nonumber \\
 &=& \left(\partial^{a_1+a_3}_1\partial^{a_2}_2\Big(-\frac {\phi(\e _1+\e_2+\e _3)-\phi(\e _2)}{\e_1+\e_3}+\phi(\e _1+\e _3)\phi(\e _1+\e_2+\e _3)\Big)\right. \nonumber\\
  &+&\left.\partial^{a_1+a_3}_1\p ^{a_2}_2\Big(-\frac {\phi(\e _1+\e _3)-\phi\Big(\frac {\e_1+\e_3-2\e_2}3\Big)}{\e_1+\e _2+\e_3}\Big)\right)_{\vert_{\vec \e=\vec 0}}\nonumber\\
&=& \left( \partial^{a_1+a_3}_1\partial^{a_2}_2\Big(-\frac {\phi(\e _1+\e_2)-\phi(\e _2)}{\e_1}+\phi(\e _1)\phi(\e _1+\e_2)\Big)\right.\nonumber \\
 &+&\left.\partial^{a_1+a_3}_1\p ^{a_2}_2\Big(-\frac {\phi(\e _1 )-\phi\Big(\frac {\e_1 -2\e_2}3\Big)}{\e_1+\e _2 }\Big) \right)_{\vert_{\e_1=\e_2=0}}.
\end{eqnarray}
Similarly,
\begin{eqnarray}\label{eq:zeta12plus3} \zeta\Big(-a_1,-a_2  -a_3\Big)&=&\left(\partial^{a_1}_1\partial^{a_2}_2\partial^{a_3}_3[\e_1,\e_2+\e_3]\right)_{\vert_{\vec \e=\vec 0}}\nonumber \\
&=& \left( \partial^{a_1}_1\p ^{a_2+a_3}_2 \Big(-\frac {\phi(\e _1+\e_2+\e _3)-\phi(\e _2+\e_3)}{\e_1}+\phi(\e _1)\phi(\e _1+\e_2+\e _3)\Big)\right.\nonumber \\
&+&\left.\partial^{a_1}_1\partial^{a_2+a_3}_2\Big(-\frac {\phi(\e _1)-\phi\Big(\frac {2\e_1-\e_2-\e_3}3 \Big)}{\e_1+\e _2+\e_3}\Big)\right)_{\vert_{\vec \e=\vec 0}}\nonumber\\
&=& \left(\partial^{-s_1}_1\partial^{-s_2-s_3}_2\Big(-\frac {\phi(\e _1+\e_2)-\phi(\e _2)}{\e_1}+\phi(\e _1)\phi(\e _1+\e_2)\Big) \right.\nonumber\\
&+&\left. \partial^{a_1}_1\partial^{a_2+a_3}_2\Big(-\frac {\phi(\e _1)-\phi\Big(\frac {2\e_1-\e_2}3\Big)}{\e_1+\e _2}\Big)\right)_{\vert_{ \e_1=\e_2=\vec 0}}.
\end{eqnarray}
\end{enumerate}
\end{ex}

As we pointed out in the introduction, our geometric approach  contrasts with other approaches such as  \cite{GZ,MP} to the renormalization of multiple zeta values at non-integer arguments, where the algebraic Birkhoff factorization   is carried out on the summands (functions $(x_1,\cdots, x_k)\mapsto x_1^{-s_1}\cdots x_1^{-s_k}$) rather than on the domain (the cones) of summation as in our present construction or \cite{Sa} where a purely analytic renormalization method is implemented, which does not use algebraic Birkhoff factorization. The generality and the geometric nature of our approach nevertheless have a cost; whereas in \cite{GZ,MP} the  renormalized multiple zeta values obey the stuffle relations, these are not preserved  in our approach.
They  nevertheless do hold for two arguments; as one could expect from the relation
\begin{equation}\label{eq:Chen2}[\e _1][\e _2] =[\e_1,\e_2]+[\e _2, \e_1]+[\e _1+ \e_2 ],
 \end{equation}   for any non-positive integers $s_1,s_2$ we have
 $$\zeta(s_1)\,\zeta(s_2)= \zeta(s_1,s_2)+\zeta(s_2,s_1)+ \zeta(s_1+s_2).$$ Similarly,$$\zeta^*(s_1)\,\zeta^*(s_2)= \zeta^*(s_1,s_2)+\zeta^*(s_2,s_1)- \zeta^*(s_1+s_2).$$
However, as we shall see below, they do not necessarily hold for three arguments.

Table~\ref{tb:gpz1} displays values for the   double zeta values $\zeta(-a, -b)$ with $a,b\in \Z_{\neq 0}$ derived using our conical approach from the {\it open Chen cone}  $\langle e_1,e_1+e_2\rangle$ in $\R^2$, where $\{e_1,e_2\}$ is    the canonical basis of $\R^2$. It is then followed by Table~\mref{tb:gpz2} of values for the conical double zeta $\star$-values $\zeta^*(-a, -b)$ with $a,b\in \Z_{\neq 0}$ corresponding to the {\it closed Chen cone}  $\langle e_1,e_1+e_2\rangle$ in $\R^2$. The values in boldface of the first table coincide with the ones obtained in \cite{GZ,MP}.

 \begin{table}
 \centering
 \caption{Values of renormalized conical double zeta values}
 \vspace{-.7cm}
 {\small
  $$
  \begin {array} {|c|c|c|c|c|c|c|}
  \hline \zeta (-a_1,-a_2)&a_1 = 1& a_1 = 2 &a_1 = 3& a_1 = 4& a_1 = 5&a_1 = 6\\
  \hline a_2 = 1&  {\frac 1{288}}&   {-\frac 1{240}}& \frac {101}{80640}&   {\frac 1{504}}& -\frac
  {169}{96768}&   {-\frac 1{480}}
  \\
  \hline a_2 = 2&  {-\frac 1{240}}& 0&  {\frac 1{504}}&-\frac {7127}{9676800}&
   {-\frac 1{480}}&\frac {7097}{3870720}\\
  \hline a_2 = 3&-\frac {157}{80640}&   {\frac 1{504}}&
   {\frac 1{28800}}&   {-\frac 1{480}}& \frac {1543}{1892352}&
   {\frac 1{264}}
  \\
  \hline a_2 = 4&  {\frac 1{504}}& \frac {7127}{9676800}&   {-\frac 1{480}}&   {0}&
    {\frac 1{264}}& -\frac {9280679}{5960908800}
  \\
  \hline a_2 = 5& \frac {67}{32256}&
    {-\frac 1{480}}& -\frac {72251}{85155840}&
  {\frac 1{264}}&   {\frac 1{127008}}&  {-\frac {691}{65520}}\\
  \hline a_2 = 6&  {-\frac 1{480}}& -\frac {7097}{3870720}&  {\frac 1{264}}& \frac
  {9280679}{5960908800}&
    {-\frac {691}{65520}}& {0}\\
  \hline
  \end {array}
  $$
  }
  \mlabel{tb:gpz1}
  \vspace{.5cm}
  \centering
  \caption{Values of conical double zeta star values}
  \vspace{-.7cm}
  {\small
  $$
  \begin {array} {|c|c|c|c|c|c|c|}
  \hline \zeta ^\star(-a_1,-a_2)&a_1 = 1& a_1 = 2 &a_1 = 3& a_1 = 4& a_1 = 5&a_1 = 6\\
  \hline a_2 = 1&\frac 1{288}& \frac 1{240}& \frac {101}{80640}& -\frac 1{504}& -\frac {169}{96768}& \frac 1{480}
  \\
  \hline a_2 = 2&\frac 1{240}& 0& -\frac 1{504}& -\frac {7127}{9676800}& \frac 1{480}& \frac {7097}{3870720}
  \\
  \hline a_2 = 3&-\frac {157}{80640}& -\frac 1{504}& \frac 1{28800}& \frac 1{480}& \frac {1543}{1892352}& -\frac 1{264}
  \\
  \hline a_2 = 4&-\frac 1{504}& \frac {7127}{9676800}& \frac 1{480}& 0& -\frac 1{264}& -\frac {9280679}{5960908800}
  \\
  \hline a_2 = 5&\frac {67}{32256}& \frac 1{480}& -\frac {72251}{85155840}& -\frac 1{264}& \frac 1{127008}& \frac {691}{65520}
  \\
  \hline a_2 = 6&\frac 1{480}& -\frac {7097}{3870720}& -\frac 1{264}& \frac {9280679}{5960908800}& \frac {691}{65520}& 0
  \\
  \hline
  \end {array}
  $$
  }
  \mlabel{tb:gpz2}
  \end{table}

 Stuffle relations hold for two arguments but fail to hold for three arguments.
 For example, from  the following relation
 \begin{equation}\label{eq:Chen3}[\e _1,\e _2][\e _3]=[\e _3,\e_1,\e_2]+[\e _1, \e _3, \e_2]+[\e _1, \e_2, \e_3]+[\e_1+\e_3, \e_2]+[\e_1,\e_2+\e_3],
 \end{equation}
  one might expect the following stuffle relation
\begin{eqnarray}
 \label{eq:stuffle3}
 \zeta (-a_1, -a_2)\zeta (-a_3)&=& \zeta (-a_3, -a_1, -a_2)+\zeta (-a_1, -a_3, -a_2)+\zeta (-a_1, -a_2, -a_3)\\
 &&+\zeta (-a_1-a_3, -a_2)+\zeta (-a_1, -a_2-a_3) \notag
 \end{eqnarray}
 to hold.
 \begin{prop}
The stuffle relation $($\ref{eq:stuffle3}$)$   is  violated  for some values  $(a_1,a_2,0)\in \Z_{\geq 0}^3$ with $a_1+a_2>2$.
 \end{prop}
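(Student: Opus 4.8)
The statement asks for a counterexample, so the plan is to reduce the conjectured relation to a single scalar identity and then exhibit one triple $(a_1,a_2,0)$ for which it fails. The conceptual point is the following. Applying the linear map $\pi_+$ to the combinatorial identity (\ref{eq:Chen3}), then the operator $\partial_1^{a_1}\partial_2^{a_2}\partial_3^{a_3}$ evaluated at $\vec\e=\vec 0$, and recognizing each resulting term via (\ref{eq:zetapi}), one sees that the right-hand side of (\ref{eq:stuffle3}) is exactly $\partial_1^{a_1}\partial_2^{a_2}\partial_3^{a_3}\pi_+\big([\e_1,\e_2]\,[\e_3]\big)\big|_{\vec 0}$, whereas the left-hand side $\zeta(-a_1,-a_2)\zeta(-a_3)$ equals $\partial_1^{a_1}\partial_2^{a_2}\pi_+\big([\e_1,\e_2]\big)\big|_{\vec 0}\cdot\partial_3^{a_3}\pi_+\big([\e_3]\big)\big|_{\vec 0}$. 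Thus (\ref{eq:stuffle3}) would hold for all arguments if and only if $\pi_+$ were multiplicative on the product $[\e_1,\e_2]\,[\e_3]$ up to the relevant order, and the content of the proposition is that it is not. Specializing to $a_3=0$ and using $\pi_+[\e_3]=\phi$ with $\phi(0)=-\tfrac12=\zeta(0)$ (see (\ref{eq:dim1meroexp}) and (\ref{eq:zeta1})) together with $\zeta(-a_1-0,-a_2)=\zeta(-a_1,-a_2-0)=\zeta(-a_1,-a_2)$, the relation (\ref{eq:stuffle3}) reduces to
\[
\zeta(0,-a_1,-a_2)+\zeta(-a_1,0,-a_2)+\zeta(-a_1,-a_2,0)=-\tfrac52\,\zeta(-a_1,-a_2),
\]
and it suffices to find one pair $(a_1,a_2)$ with $a_1+a_2>2$ for which this fails.

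The double zeta values on the right are given by (\ref{eq:zeta12}) (equivalently by Table~\ref{tb:gpz1}). For the three triple zeta values with a zero entry I would compute the germ $\pi_+\big([\e_1,\e_2,\e_3]\big)$ --- equivalently, using (\ref{eq:Chen3}), the germ $\pi_+\big([\e_1,\e_2]\,[\e_3]\big)$ --- whose Taylor coefficients yield $\zeta(0,-a_1,-a_2)$, $\zeta(-a_1,0,-a_2)$, $\zeta(-a_1,-a_2,0)$ by (\ref{eq:zetapi}). Concretely I would expand $[\e_1,\e_2,\e_3]=\prod_{j=1}^{3}\big(-\tfrac1{L_j}+\phi(L_j)\big)$ with $L_1=\e_1$, $L_2=\e_1+\e_2$, $L_3=\e_1+\e_2+\e_3$ into its eight terms and apply the decomposition of meromorphic germs with linear poles (Section 5.2, following \cite{GPZ4}) to each term, in the same spirit as the worked case (\ref{eq:k2}): a term carrying a single pole $L$ is split into its pure-fraction part, obtained by restricting the numerator to the orthogonal complement of $L$, plus a remainder divisible by $L$ and hence holomorphic, while a term carrying several poles must first be partial-fractioned into such pieces. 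The main obstacle is exactly this last step: the pole families $\{\e_1,\e_1+\e_2\}$, $\{\e_1,\e_1+\e_2+\e_3\}$, $\{\e_1+\e_2,\e_1+\e_2+\e_3\}$ and the full nested triple $\{\e_1,\e_1+\e_2,\e_1+\e_2+\e_3\}$ of the triple Chen cone are not in orthogonal position, so the orthogonalization/partial-fraction bookkeeping --- best organized by the combinatorial type of the pole set --- is heavier than in the $k\le 2$ examples, and it is precisely here that the failure of naive multiplicativity of $\pi_+$ enters. It is entirely mechanical, however, and can be carried out by computer algebra (this is what produces the tables of \cite{GPZ3}).

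Finally I would evaluate both sides of the displayed scalar identity at a few small tuples, for instance $(a_1,a_2)=(1,2)$, $(2,1)$ and $(3,0)$, and exhibit one for which the two rational numbers disagree; a single such instance proves the proposition. Alternatively one may simply quote the triple-cone values of \cite{GPZ3} together with Table~\ref{tb:gpz1} and read off the mismatch directly.
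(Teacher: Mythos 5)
Your reduction of the $a_3=0$ case to the scalar identity $\zeta(0,-a_1,-a_2)+\zeta(-a_1,0,-a_2)+\zeta(-a_1,-a_2,0)=-\tfrac52\,\zeta(-a_1,-a_2)$ is correct, but the proof is never completed: the decisive step — exhibiting one pair $(a_1,a_2)$ with $a_1+a_2>2$ for which this fails — is deferred to an unperformed computer-algebra computation of $\pi_+\big([\e_1,\e_2,\e_3]\big)$, or to tables from \cite{GPZ3} that are neither reproduced nor checked here (Table~\ref{tb:gpz1} only contains double zeta values). Testing ``a few small tuples'' is not an argument that one of them fails; a priori they could all agree. The paper avoids computing any triple zeta value altogether: it observes that if (\ref{eq:stuffle3}) held for all the relevant $(a_1,a_2,0)$, then after cancelling the terms common to the two-variable reductions (\ref{eq:zeta12}), (\ref{eq:zeta1plus23}) and (\ref{eq:zeta12plus3}), one would be forced into the identity of holomorphic germs $\phi\big(\tfrac{\e_1-2\e_2}{3}\big)+\phi\big(\tfrac{2\e_1-\e_2}{3}\big)=2\,\phi\big(\tfrac{\e_1-\e_2}{2}\big)$, which is false because $\phi$ in (\ref{eq:phi}) has the nonzero cubic coefficient $\tfrac1{720}$ (the two sides agree only up to order $1$, which is exactly why the constraint $a_1+a_2>2$ appears). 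Some structural argument of this kind, or an actually carried-out numerical counterexample, is what is missing from your proposal.

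Moreover, the conceptual mechanism you propose is not the right one. The projection $\pi_+$ of \cite{GPZ4} \emph{is} multiplicative on $[\e_1,\e_2]\,[\e_3]$, because the two factors depend on orthogonal sets of variables (a polar germ $h(\vec\ell)/\vec L^{\vec s}$ times a germ in variables orthogonal to the $L$'s stays polar); this multiplicativity is precisely why the depth-two stuffle relations do hold. If, as you assert, the right-hand side of (\ref{eq:stuffle3}) were exactly $\partial_1^{a_1}\partial_2^{a_2}\partial_3^{a_3}\pi_+\big([\e_1,\e_2][\e_3]\big)\big|_{\vec 0}$, your argument would therefore \emph{prove} the stuffle relation rather than refute it. The true obstruction, visible in (\ref{eq:zeta1plus23}) and (\ref{eq:zeta12plus3}) versus (\ref{eq:zeta12}), is that the conical evaluation of the diagonal cones $[\e_1+\e_3,\e_2]$ and $[\e_1,\e_2+\e_3]$ does not reproduce the genuine depth-two values entering (\ref{eq:stuffle3}): $\pi_+$ does not commute with the non-orthogonal substitution merging $\e_1$ and $\e_3$, so the correction $\phi\big(\tfrac{\e_1+\e_3-2\e_2}{3}\big)$ appears where the two-dimensional Chen cone produces $\phi\big(\tfrac{\e_1-\e_2}{2}\big)$. (Relatedly, your parenthetical ``equivalently \dots the germ $\pi_+\big([\e_1,\e_2][\e_3]\big)$'' is incorrect: by (\ref{eq:Chen3}) that germ is the sum of five cone contributions, not $\pi_+\big([\e_1,\e_2,\e_3]\big)$.)
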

 \begin{proof}
 Were the stuffle relation (\ref{eq:stuffle3}) to hold for any $a_1, a_2, a_3\in \ZZ _{\ge 0}$,  combining Eqs.~(\ref{eq:zeta1}),  (\ref{eq:zeta12}), (\ref{eq:zeta1plus23}) and (\ref{eq:zeta12plus3}), we would have
 \begin{eqnarray*}&& \left(\pi_+\Big(\partial^{a_1+a_3}_1\partial^{a_2}_2\Big(\frac {\phi(\e _1)-\phi\Big(\frac {\e_1-2\e_2}3\Big)}{\e_1+\e _2}\Big)+\partial^{a_1}_1\partial^{a_2+a_3}_2\Big(\frac {\phi(\e _1)-\phi\Big(\frac {2\e_1-\e_2}3\Big)}{\e_1+\e _2}\Big)\Big)\right)_{\vert_{\vec \e=\vec 0}}\\
&=& \left(\pi_+\Big(\partial^{a_1+a_3}_1\partial^{a_2}_2\Big(\frac {\phi(\e _1)-\phi\Big(\frac {\e_1-\e_2}2\Big)}{\e_1+\e _2}\Big)+\partial^{-s_1}_1\partial ^{-s_2-s_3}_2\Big(\frac {\phi(\e _1)-\phi\Big(\frac {\e_1-\e_2}2\Big)}{\e_1+\e _2}\Big)\Big)\right)_{\vert_{\vec \e=\vec 0}}
\end{eqnarray*}
for any $a_1, a_2, a_3\in \ZZ _{\ge 0}$, in particular, for $a_3=0$, in which  case
\begin{eqnarray*}&& \left(\pi_+\Big(\partial^{a_1}_1\partial^{a_2}_2\Big(\frac {\phi(\e _1)-\phi\Big(\frac {\e_1-2\e_2}3\Big)}{\e_1+\e _2}+\frac {\phi(\e _1)-\phi\Big(\frac {2\e_1-\e_2}3\Big)}{\e_1+\e _2}\Big)\Big)\right)_{\vert_{\vec \e=\vec 0}}\\&=&  \left(\pi_+\Big(\partial^{a_1}_1\partial^{a_2}_2\Big(\frac {\phi(\e _1)-\phi\Big(\frac {\e_1-\e_2}2\Big)}{\e_1+\e _2}+\frac {\phi(\e _1)-\phi\Big(\frac {\e_1-\e_2}2\Big)}{\e_1+\e _2}\Big)\Big)\right)_{\vert_{\vec \e=\vec 0}}
\end{eqnarray*}
for any $a_1, a_2\in \ZZ _{\ge 0}$. Hence the equality of the following holomorphic functions holds
$$\frac {\phi(\e _1)-\phi\Big(\frac {\e_1-2\e_2}3\Big)}{\e_1+\e _2}+\frac {\phi(\e _1)- \phi\Big(\frac {2\e_1-\e_2}3\Big)}{\e_1+\e _2}=\frac {\phi(\e _1)-\phi\Big(\frac {\e_1-\e_2}2\Big)}{\e_1+\e _2}+\frac {\phi(\e _1)-\phi\Big(\frac {\e_1-\e_2}2\Big)}{\e_1+\e _2}
$$
which would imply that
$$\phi\Big(\frac {\e_1-2\e_2}3\Big)+\phi\Big(\frac {2\e_1-\e_2}3\Big)=2\phi\Big(\frac {\e_1-\e_2}2\Big).$$  But this  does not hold for the function $\phi$ as in
Eq.~(\ref{eq:phi}). Note however, that the Taylor expansions at zero agree on either side up to order $1$.
\end{proof}
\subsection{ Discussions and outlook }
 There are by now various  renormalization methods   to evaluate multiple zeta values at non-positive integers, which all use an algebraic Birkhoff Hopf factorization, namely
\begin{enumerate}
\item the present geometric approach which uses a heat-kernel type regularization on the summands  (here polynomials) and a coalgebra on the domains (here cones),
\item the  analytic approach adopted in  \cite{MP} which uses a zeta type regularization as well as a coalgebra on the summands given by tensor products of pseudodifferential symbols,
\item the  number theoretic approach adopted in   \cite{GZ}  to the renormalization of multiple zeta values at non-integer arguments, which uses a zeta type regularization as well as a coalgebra on the summands given by functions $(x_1,\cdots, x_k)\mapsto x_1^{-s_1}\cdots x_1^{-s_k}$.
\end{enumerate}
 There are also other methods which do not use algebraic Birkhoff Hopf factorization such as the approach adopted in \cite{Sa}
 based on  a   formula   expressing the polynomial integrand $Q$ in terms of the integrated polynomial $P(a) =
\int_{
[0,1]^n} Q(a + t)dt$  or   yet a different approach in \mcite{FKMT}   based on Mellin-Barnes integrals to desingularize the  multiple zeta functions.
 The diversity of the existing approaches calls for the need  to relate them conceptually, yet a holy grail for us at this stage. Understanding the relation among these approaches would be a step towards a better understanding of the renormalization group in this context. Whether, from one of those methods, one can by means of a mere change of regularization obtain the renormalized values derived by another method is a first question we hope to address in some future work.

\smallskip

\noindent
{\bf Acknowledgements:} This work was supported by the National Natural Science Foundation of
China (Grant No. 11071176, 11221101 and 11371178) and the National Science Foundation of US (Grant No. DMS 1001855). The authors thank the hospitalities of the Kavli Institute for Theoretical Physics China and Morningside Center of Mathematics. The second author thanks Sichuan University, the Lanzhou University and Capital Normal University for their kind hospitality. She is especially grateful to the group of students at the Lanzhou University to whom she delivered lectures on parts of this survey, for their questions and comments that helped her improve this presentation.


\begin{thebibliography} {abcdsfgh}
\bibitem[A]{A} T. Apostol,  {\bf Introduction to analytic number theory}, {\em Undergraduate Texts in Mathematics},  Springer-Verlag (1976)

\bibitem[B]{B} A. Barvinok, {\bf  Integer Points in Polyhedra} {\em Zurich Lectures in Advanced Mathematics},  EMS Lecture notes, 2008.

\bibitem[BV]{BV} N. Berline and M. Vergne, Euler-Maclaurin formula for polytopes, {\em Mosc. Math. J.} {\bf 7} (2007) 355-386.

\bibitem[Ca]{Ca} P. Cartier, A primer on Hopf algebras, {\em IHES } /M/06/40 (2006)

\bibitem[CK]{CK} A. Connes and D. Kreimer, Hopf algebras,                                                                                                                                                                                                                                                                                           renormalisation and Noncommutative Geometry,  {\em  Comm. Math. Phys.}  {\bf  199} (1988) 203-242.

\bibitem[F]{F} L. Foissy,  Alg\`ebres de Hopf combinatoires http://loic.foissy.free.fr/pageperso/cours3.pdf

\bibitem[FKMT]{FKMT} H. Furusho, Y. Komori, K. Matsumoto, H. Tsumura,
    Desingularization of complex multiple zeta-functions, fundamentals of p-adic multiple L-functions, and evaluation of their special values, arXiv:1309.3982.

\bibitem[G]{G} L. Guo, {\bf An Introduction to Rota-Baxter Algebra}, International Press, 2012.

\bibitem[GPZ1] {GPZ1} L. Guo, S. Paycha and B. Zhang, Renormalisation by Birkhoff factorisation and by generalized evaluators; a study case, in {\bf  Noncommutative Geometry, Arithmetic and Related Topics}, (Ed.  A. Connes, K. Consani) John Hopkins Univ. Press  p. 183-211 (2011).

\bibitem[GPZ2] {GPZ2} L. Guo, S. Paycha and B. Zhang, Conical zeta values and their double subdivision relations,   {\em Adv. in Math.}  {\bf 252} (2014) 343-381.

\bibitem[GPZ3] {GPZ3} L. Guo, S. Paycha and B. Zhang,  Renormalisation and the Euler-Maclaurin formula on cones, arXiv:1306.3420.

\bibitem[GPZ4] {GPZ4} L. Guo, S. Paycha and B. Zhang,  Decompositions and residues for meromorphic functions with linear poles in the light of Euler-Maclaurin formulae on cones, preprint.

\bibitem[GZ]{GZ} L. Guo and B. Zhang, Renormalisation of multiple zeta values {\em J. Algebra} {\bf 319} (2008) 3770-3809.

\bibitem[Ha]{Ha}  G. Hardy, {\bf Divergent series},  Oxford University Press, 1967.

\bibitem[Ho]{Ho} M. E. Hoffman, Multiple harmonic series, {\em Pacific J. Math.} {\bf 152} (1992), 275-290.

\bibitem[Je]{Je}   P. Jeanquartier, Transformation de Mellin et d\'eveloppements asymptotiques,  {\em L'Enseignement Math\'ematique} {\bf 25} (1979), 285-308.

\bibitem[M1]{M} D. Manchon, Hopf algebras, from basics to applications to renormalisation, {\em Comptes-rendus des Rencontres math\'ematiques de Glanon} 2001 (2003) and {\it Hopf algebras in renormalisation}, Handbook of algebra, Vol. 5 (M. Hazewinkel ed.) (2008).

\bibitem[MP]{MP} D. Manchon and S. Paycha, Nested sums of symbols and renormalised multiple zeta values, {\em Int. Math. Res. Not.} {\bf 2010} issue 24, 4628-4697 (2010).

\bibitem[P]{P} S. Paycha, {\bf Regularised integrals, sums and traces}, {\em Amer. Math. Soc.  Lecture Notes} {\bf 59} 2012.

\bibitem[Sa]{Sa} B. Sadaoui, Multiple zeta values at the non-positive integers, to appear in {\em Comptes Rendus de l'Acad\'emie des Sciences}.

\bibitem[Sw]{Sw} M. E. Sweedler, {\bf Hopf algebras}, Benjamin, New York (1969).

\bibitem[Za]{Za}  D. Zagier, Values of zeta functions and their applications, First European Congress of Mathematics, Vol.
II (Paris, 1992), {\em Progr. Math.} {\bf 120}, 497-512, Birkhuser, Basel, 1994.

\end{thebibliography}
\end{document}